\newcommand{\refcheckize}[1]{%
  \expandafter\let\csname @@\string#1\endcsname#1%
  \expandafter\DeclareRobustCommand\csname relax\string#1\endcsname[1]{%
    \csname @@\string#1\endcsname{##1}\@for\@temp:=##1\do{\wrtusdrf{\@temp}\wrtusdrf{{\@temp}}}}%
  \expandafter\let\expandafter#1\csname relax\string#1\endcsname
}
\newcommand{\refcheckizetwo}[1]{%
  \expandafter\let\csname @@\string#1\endcsname#1%
  \expandafter\DeclareRobustCommand\csname relax\string#1\endcsname[2]{%
    \csname @@\string#1\endcsname{##1}{##2}\wrtusdrf{##1}\wrtusdrf{{##1}}\wrtusdrf{##2}\wrtusdrf{{##2}}}%
  \expandafter\let\expandafter#1\csname relax\string#1\endcsname
}
\newcommand{\E}{\mathbb{E}}
\newcommand{\R}{\mathbb{R}}
\newcommand{\N}{\mathbb{N}}
\renewcommand{\P}{\mathbb{P}}
\newcommand{\T}{\mathcal{T}}
\newcommand{\V}{\text{Var}}
\newcommand{\dml}{\text{DML}}
\newcommand{\indep}{\perp \!\!\! \perp}
\renewcommand\AB@affilsepx{, \protect\Affilfont}
\title{Anytime-Valid Inference for Double/Debiased Machine Learning of Causal Parameters}
\author[1]{Abhinandan Dalal}
\author[2]{Patrick Blöbaum}
\author[2]{Shiva Kasiviswanathan}
\author[2,3]{Aaditya Ramdas}
\affil[1]{University of Pennsylvania}
\affil[2]{Amazon Web Services}
\affil[3]{Carnegie Mellon University}
\date{}
\newtheorem{corollary}{Corollary}[section]
\newtheorem{assumption}{Assumption}[section]
\newtheorem{lemma}{Lemma}
\newtheorem{definition}{Definition}[section]
\newtheorem{example}{Example}
\begin{document}
\maketitle
\begin{abstract}
Double (debiased) machine learning (DML) has seen widespread use in recent years for learning causal/structural parameters, in part due to its flexibility and adaptability to high-dimensional nuisance functions as well as its ability to avoid bias from regularization or overfitting. However, the classic double-debiased framework is only valid asymptotically for a predetermined sample size, thus lacking the flexibility of collecting more data if sharper inference is needed, or of stopping data collection early if useful inferences can be made earlier than expected. This can be of particular concern in large experimental studies with huge financial costs or human lives at stake, as well as in observational studies where the length of confidence intervals do not shrink to zero even with increasing sample size due to partial identifiability of a structural parameter. In this paper, we present time-uniform counterparts to the asymptotic DML results, enabling valid inference and confidence intervals for structural parameters to be constructed at any arbitrary (possibly data-dependent) stopping time. We provide conditions which are only slightly stronger than the standard DML conditions, but offer the stronger guarantee of anytime-valid inference. This facilitates the transformation of any existing DML method to provide anytime-valid guarantees with minimal modifications, making it extremely adaptable and easy to use. We illustrate our procedure using two instances: a) local average treatment effect in online experiments with non-compliance, and b) partial identification of average treatment effect in observational studies with potential unmeasured confounding.
\end{abstract}

\section{Introduction} \label{introduction}

Machine learning (ML) has had a dramatic impact in learning and inferring causal effects in a relatively short period of time. Although machine learning methods are primarily focused on accuracy in prediction, which is a fundamentally different goal than learning of causal effects (\cite{athey2018impact}), a modification in the perspective of applying these ``off-the shelf'' methods (\cite{varian2014big}) allows a smooth marriage of causal inference and ML (\cite{fuhr2024estimating}). Developed by \cite{chernozhukov2018double}, a very important branch of machine learning methods in causal inference is double (debiased) machine learning (DML). Having gained huge popularity in research due to its easy applicability and widespread scope, DML broadly deals itself with the learning of low-dimensional structural parameters in presence of potentially high-dimensional nuisance functions. The flexible framework of DML allows it to leverage a wide array of advanced machine learning methods (\cite{mullainathan2017machine}) to learn complex relationships based on covariates, while still allowing for $\sqrt{n}$ rate of convergence.  

The classical approach on DML, that borrows ideas from semi-parametric inference to create a two step procedure for inferring causal effects, has been traditionally applied in a context of fixed sample size. Although inference is assumed conditional on a large sample size to allow for asymptotics, such a sample size is required to be fixed a priori and cannot be decided after looking at (some amount of) the data. This is an important drawback, especially when conducting experiments or even collection of observational data is considered expensive. Consider the following motivating examples:

\begin{example}\label{coupon}
    A supermarket wishes to roll out a bundle discount on items and is interested in determining the causal effect of the promotion on sales revenue through coupons. Although it assigns treatment and control randomly, not everyone in the treatment group ends up using the coupon. The experiment is expensive, and the supermarket wishes to make a swift decision on whether to terminate or upscale the policy based on the produced effects. 
\end{example}

\begin{example}
    A drug trial assigns a new treatment and placebo randomly to patients, but not all patients end up taking what they are offered due to personal safety concerns or health-neglecting attitude. The drug trial wishes to infer the effect of the drug but wishes to stop as soon as possible because of ethical concerns of human experimentation. \label{drug}
\end{example}

\begin{example}
    A policy is in place for anyone to opt in, but it is voluntary to participate. The policymaker does not have control on who opts for the policy, but is interested to know whether making such a policy compulsory for everyone would be beneficial to the population. The data is not available on the public domain, and is expensive to access (financial costs, privacy concerns, etc.).\label{compulsory}
\end{example}

Example~\ref{coupon} and~\ref{drug} relate to non-compliance (\cite{sagarin2014treatment,rossi2014even}) in experiments. Such non-compliance is fairly common in experiments and leads its way to estimation of local average treatment effects (\cite{cheng2009efficient, mogstad2018identification}) using the treatment assignment itself as an instrumental variable (discussed further in Section~\ref{iv-late}). Informally, an instrumental variable (IV) is a variable that is associated with treatment but affects the outcome only through treatment. Since in Examples~\ref{coupon} and~\ref{drug}, the treatment assignment is predictive of the treatment status, but is unlikely to have an effect on the outcome conditioned on the treatment status, the treatment assignment acts as a valid IV (\cite{hernan2006instruments}).

Example~\ref{compulsory} on the other hand, relates to inference about the average treatment effect (ATE) (\cite{lipkovich2020causal}) in presence of potential unmeasured confounding, and thus is not identified without further assumptions. In such a situation, the ATE cannot be consistently estimated, and what one can realistically hope for is to create bounds on the ATE, a strategy known as partial identification (\cite{tamer2010partial}). Under such a scenario (discussed more in Section~\ref{partial}), the confidence interval length for the parameter of interest would not shrink to zero even with increasing sample size, and it would be all the more imperative to stop collecting data when the researcher deems the inference to be ``good enough''.

All these situations have a commonality that  data collection is expensive, both in experimental and observational situations, and hence a data-adaptive way of deciding to stop data collection would be useful. The traditional central limit theorem based confidence intervals fail under this situation, as it can provide (approximate) validity only at a prefixed (large) sample size $n$. This invalidates any inference made by ``peeking into the data", potentially by a data-dependent stopping time (\cite{johari2017peeking,howard2020time}). Note that, a traditional non-asymptotic $(1-\alpha)$ confidence interval (CI) $C_n$ for a parameter $\theta_0$ at sample size $n$ provides the guarantee that
\begin{equation} \forall n\ge 1, \P(\theta_0\in C_n) \ge 1-\alpha. \label{trad-CI}
\end{equation}
In contrast, if we cannot anticipate in advance when or why we would stop collecting data, we would want to create an interval $\tilde C_\tau$ such that $\P(\theta_0 \in \tilde C_\tau) \ge 1-\alpha$ even when $\tau$ is a stopping time. While $\tilde C_\tau$ can be created to have such coverage for a particular choice of $\tau$, it requires $\tau$ to be pre-regimented, and hence does not allow the flexibility of incorporating data-specific or explorative insights in the stopping decision process. To remedy this, we can bolster our requirement for $\tilde C_\tau$ to be such that 
\begin{equation} \text{for any stopping time }\tau, \ \P(\theta_0\in \tilde C_\tau)\ge 1-\alpha. \label{stop-CS}\end{equation}
It turns out that \eqref{stop-CS} has an equivalent formulation~\citep{howard2021time}:
\begin{equation}
    \P(\forall n\ge 1, \; \theta_0\in \tilde C_n)\ge 1-\alpha, \label{equiv}
\end{equation}
which highlights the distinction with \eqref{trad-CI} in a different manner. An anytime-valid CI, as in \eqref{stop-CS} or \eqref{equiv}, is henceforth referred to as a confidence sequence, in line with the literature (\cite{darling1967confidence, lai1984incorporating, howard2021time}),

Of course, most real-world applications, including almost all applications of DML, involve creating a CI based on an asymptotic version of \eqref{trad-CI}, but an asymptotic anytime-valid version of \eqref{stop-CS} is subtle to define. Luckily, confidence sequences  have recently been appropriately extended to an asymptotic notion recently by \cite{waudby2021time}. We will further review this in Section~\ref{anyvalid-review}. 

\subsection{Related Literature} \label{lit-review}

Sequential experiments and analysis have been a classical problem in statistical literature, to facilitate early stopping. Early examples include the celebrated Wald's sequential probability ratio test (\cite{wald2004sequential}, Chapter~3) and the generalized likelihood ratio approach of \cite{robbins1974expected}. Robbins in particular shifted the focus from Wald's designs of a single stopping rule to instead allowing for continuous monitoring and stopping at any time, possibly not specified in advance. This is not merely of theoretical interest --- if not accounted for correctly, continuous monitoring can lead to substantial inflation of Type-I error (\cite{armitage1969repeated, berman2018p}). Such methods have been seriously appreciated not merely due to their mathematical elegance, but their practical utility. Recent research progress has often been made in industry, examples including risk mitigation in online experimentation (\cite{ham2022design}), rapid software deployment (\cite{lindon2022rapid}) strengthening \textit{canary testing} (\cite{schermann2018we}), multinomial count data modeling for increasing conversion rate (\cite{lindon2022anytime}) from Netflix, flexible monitoring and sample size calculations at Adobe (\cite{maharaj2023anytime}), or usage of sequential testing by Evidently at Amazon Web Services (\cite{evidently}). \cite{turner2023exact} consider the problem of anytime-valid inference in contingency tables, with potential application in labor induction studies (\cite{turner2024generic}). One of the key domains to apply anytime-valid inference is in continuous monitoring of A/B tests \citep{johari2022always}.

A lot of modern advances on this topic have been based on non-asymptotic confidence sequences, such as the constructions of \cite{howard2021time, howard2020time}, with other closely related works on testing (always-valid $p$-values) being \cite{johari2015always} and \cite{balsubramani2015sequential}. The methods of \cite{howard2021time} work under a wide variety of tail conditions on the underlying random variables, and in the sub-Gaussian case improve multi-arm bandit constructions of \cite{garivier2013informational, jamieson2014lil, kaufmann2016complexity}. 

The concept of \emph{asymptotic} confidence sequences (AsympCS), on the other hand, is relatively new, defined and developed by \cite{waudby2021time} as a time-uniform counterpart to the Central Limit Theorem; but it has already seen uptake in several applications, like multi-arm bandits \citep{liang2023experimental} and A/B testing platforms \citep{maharaj2023anytime}. While application of AsympCS on estimating the ATE is well-understood, further efforts are required to extend its use to other causal estimands that benefit from double machine learning estimators. Our paper addresses this issue by introducing a framework similar to the DML assumptions, ensuring an AsympCS construction. This enables asymptotically anytime-valid inference to be applied to more complex and challenging problems.


Double machine learning has received widespread attention in recent years in a wide variety of domains --- economics (\cite{athey2017state}), healthcare (\cite{farbmacher2022causal}), earth sciences (\cite{cohrs2024double}), etc. Developed formally by \cite{chernozhukov2018double}, it outlays the principle of orthogonalization for mitigating regularization bias introduced by modern machine learning methods, a detailed illustration of which is in Section~\ref{dmlreview}. They combine the insights of creating a Neyman-orthogonal function (\cite{neyman1959optimal}) with the use of sample splitting  to relax entropy conditions required for removing plug-in biases.

DML builds on, and is benefited by, classical work on semi-parametric estimation, by channeling ways of removing plug-in bias of complicated nonparametric functionals. The classical literature focuses on extracting $\sqrt{n}$-consistent estimators admitting a central limit theorem, when nuisance parameters are estimated by conventional methods (e.g.,~\cite{levit1976efficiency, bickel1993efficient, robinson1988root, neweysemi1990, newey1994asymptotic, andrews1994asymptotics, newey1998undersmoothing, newey2004twicing, chernozhukov2020locally}, etc.). Neyman's orthogonality condition (\cite{neyman1959optimal}), defined in \eqref{orthogonal-pure}, is a pivotal condition beyond semiparametric learning theory, for instance in optimal testing and adaptive estimation and targeted learning. In conjunction with Neyman orthogonality, \cite{andrews1994asymptotics}  uses Donsker conditions (\cite{vandervaart}) to establish an equicontinuity condition 
for an estimating function $\psi$, with $\theta_0$ and $\eta_0$ being the true values of the parameter of interest and nuisance parameter(s) respectively. Such Donsker conditions are however a restrictive tool unsuitable for high-dimensional nuisance parameters. The strong entropy conditions required for controlling biases introduced by estimating and evaluating on the same data can be bypassed by cross-fitting via sample-splitting, which not only make proofs simpler, but also acts as a pragmatic approach to counter the overfitting/high-complexity challenges of highly adaptive ML methods (\cite{belloni2010lasso, belloni2012sparse, robins2008higher, robins2013new, robins2017minimax}). 

The construction of Neyman-orthogonal functions is  related to the derivation of influence functions (\cite{tsiatis2006semiparametric}), which ensure second order bias on the estimation of nuisance parameters (\cite{semenova2018orthogonal}), and semi-parametric efficiency under correctly specified models (\cite{newey1994asymptotic}). While influence functions may not exist when pathwise differentiability of the functional of interest is violated (usually via lack of finite variance of the Riesz representor (\cite{hines2022demystifying}), insights derived from them may still be used to ensure second-order bias via pseudo-outcome construction (\cite{kennedy2017non, yang2023forster}), and hence constitute a crucial piece of DML literature. Often of particular interest are influence functions of parameters that admit a second-order bias of the form of a mixed bias, that allow double-robust estimation under possible misspecification (\cite{robins2008higher}). \cite{ghassami2022minimax} also studied a general class of functions in this regard that ensure mixed bias in the form of a product. 

Such a rich literature on influence functions or its DML counterpart lay a fertile ground for extending these results to it's strongly consistent counterparts. Building on the groundwork of anytime-valid inference, this paper positions itself at the intersection of these two avenues, and hopes to develop the framework for strongly consistent asymptotic inference for the DML literature.


\subsection{Our Contributions}

We delineate our contributions in this paper as the following:
\begin{list}{{\bf (\roman{enumi})}}{\usecounter{enumi}
\setlength{\leftmargin}{4ex}
\setlength{\listparindent}{4ex}
\setlength{\parsep}{0pt}}
 \item \textit{Motivational}: While the usual sequential A/B testing framework only considers ATE estimation, the generalized DML framework, which has received widespread uptake in recent years in academic and industrial research alike, has been relatively untouched by anytime-valid inference. While DML can be used in trickier identification situations (like LATE, ATT, partial identification etc.), we motivate its usage in continuously monitored experiments as well as observational studies in Section~\ref{introduction} and in the empirical analysis of Section~\ref{empirical}, and elucidate a formal framework of achieving DML guarantees along with anytime-validity. 
 
 \item \textit{Technical}: While the general recipe for confidence sequence construction under a strongly valid linear asymptotic representation has been provided in \cite{waudby2021time}, it is unclear how such a representation would hold in complicated situations, particularly in the DML context. We address this gap by pin-pointing the \cite{chernozhukov2018double}-style assumptions on the underlying problem and estimation problem, thus allowing users a formal framework to verify and deploy suitable techniques for anytime-valid inference under more sophisticated DML algorithms.
 
 \item \textit{Simplifying}: Last but not the least, we illustrate the simplicity of constructing confidence sequences in complicated causal settings. In fact, while obtaining orthogonal functions and applying the DML framework can sometimes be daunting, the incremental effort to obtain anytime-valid inference is very minimal. Thus, our inferential methods are hardly any more restrictive than the original DML techniques; but providing an additional confidence guarantee and the flexibility that comes with it. We also formalize a framework to obtain Neyman-orthogonal functions for estimation of parameters in Theorem~\ref{pseudooutcome} which, to our knowledge, has not been formalized before in the literature in such generality; thus allowing second order bias in estimating parameters even if they don't admit a valid influence function. 
 \end{list}

The paper is organized as follows --- in Section~\ref{review} we review relevant concepts of anytime-valid inference (Section~\ref{anyvalid-review}) and double machine learning (Section~\ref{dmlreview}), with old and new construction techniques for Neyman-orthogonal functions in Section~\ref{construction}. Section~\ref{main} outlines the assumptions on the estimators and the underlying data regularity to facilitate construction of an AsympCS, and shows how one can construct such sequences. Section~\ref{partial} and Section~\ref{iv-late} considers two non-trivial applications beyond vanilla A/B testing, viz. partial identification in observational studies, and non-compliance in online experiments respectively, and illustrates how our framework is applicable in such situations. 
We evaluate the performance of our method through simulation experiments in Section \ref{sims}, and real data applications in (i) program evaluation via the STAR program in Section~\ref{star} and (ii) differentially expressed genes in microarray analysis under potential unmeasured confounding in Section \ref{gene}, concluding with a discussion in Section~\ref{discuss}. 

All $O(\cdot)$ and $o(\cdot)$ notation in this paper shall be interpreted in the almost sure sense unless otherwise specified.

\section{Review of Relevant Concepts} \label{review}
\subsection{Asymptotic Anytime-valid Inference} \label{anyvalid-review}

We continue this discussion from Section~\ref{introduction}, and adapt most of the concepts from the foundations laid by \cite{waudby2021time}. While anytime-valid inference, by virtue of it being valid \textit{anytime}, is naturally non-asymptotic, it is limited in it's applicability. Most classical procedures for anytime-valid inference require knowledge of the parametric distribution family of the data generating procedure (eg: \cite{darling1968some, kaufmann2021mixture}), or create wider intervals to ensure nonparametric non-asymptotic coverage (\cite{howard2021time}). To contrast this with the fixed sample regime, consider creating confidence intervals for the mean for a well-behaved distribution family --- it either requires knowledge of the distribution family to compute quantiles, or creates too wide bounds based on nonparametric Chebyshev or Chernoff bounds. On the other hand, an asymptotic analysis like the Central Limit Theorem allows one to have best of both worlds --- it creates reasonably narrow intervals with asymptotically valid coverage, which is often good enough for real applications. \cite{waudby2021time} extended this notion to the concept of anytime-valid inference, as summarized below. 

A $(1-\alpha)$-confidence sequence for a real-valued parameter of interest $\theta_0$ is given by $\{\tilde C_n^\star\}_{n\ge 1}$, where $\tilde C_n^\star = [L_n^\star, U_n^\star]$ such that \eqref{equiv} holds for $\tilde C_n^\star$. The definition of an asymptotic confidence sequence $\tilde C_n$, or AsympCS for short is predicated on the existence of a such a nonasymptotic confidence sequence $\tilde C_n^\star$:

\begin{definition}[Asymptotic Confidence Sequences] \label{asymp-cs}
     $\{\tilde C_n\}_{n\ge 1} = \{[\tilde L_n, \tilde U_n]\}_{n\ge 1}$ is a $(1-\alpha)$ AsympCS for a parameter of interest $\theta_0$ if there exists a (maybe unknown) non-asymptotic $(1-\alpha)$ confidence sequence $[L_n^\star, U_n^\star]_{n\ge 1}$ for $\theta_0$ such that $\P(\forall n\ge 1, \theta_0\in [L_n^\star, U_n^\star])\ge 1-\alpha$ and ${L_n^\star}/{\tilde L_n}\overset{a.s.}{\to}1 \text{ and }{U_n^\star}/{\tilde U_n}\overset{a.s.}{\to}1,$ ie, the end points of the nonasymptotic confidence sequence can be almost surely precisely approximated by $\tilde L_n$ and $\tilde U_n$. 
    Furthermore we say that $\tilde C_n$ has an approximation rate $a_n$ if $\max\{L_n^\star - \tilde L_n, U_n^\star -\tilde U_n\} = O(a_n)$.
\end{definition}

Note that Definition~\ref{asymp-cs} has intuitive motivation from the fixed sample case. To see this, consider $\{\hat\mu \pm z_\alpha\hat\sigma/\sqrt n\}$, which is an asymptotic confidence interval for a $\mathcal{N}(\mu,\sigma^2)$ family with unknown parameters $\mu$ and $\sigma^2$. This approximates a valid interval $\{\hat\mu\pm z_\alpha \sigma/\sqrt n\}$, albeit with an unknown parameter $\sigma$ which exists unbeknownst to the researcher, and in that lies the root of its asymptotic validity. Similarly the celebrated Gaussian mixture boundary results for partial sums of iid Gaussian random variables $Z_1,Z_2,\cdots \sim \mathcal{N}(\mu,\sigma^2)$ (\cite{robbins1970statistical}) allows one to create an exact $(1-\alpha)$-CS of the form $$\hat\mu \pm \sigma\sqrt{\frac{2(n\rho^2 + 1)}{n^2\rho^2}\log\left(\frac{\sqrt{n\rho^2+1}}{\alpha}\right)}$$ for any $\rho>0$, with knowledge of $\sigma^2$. Definition~\ref{asymp-cs} extends this property to confidence sequences by allowing replacement of $\sigma^2$ with $\hat\sigma^2$, the sample variance which is known to be an almost sure estimator of $\sigma^2$. Although extensions of this nature departing from the Gaussian behavior have been known, it requires explicit knowledge of the tail behavior of estimators being used (like sub-Gaussian, sub-exponential etc.), which is complicated to pinpoint while using sophisticated machine learning methods. In this regard Definition~\ref{asymp-cs} truly addresses the gap that existed between the classic central limit theorem (CLT) and its time uniform counterpart, and hence we admit it as our go-to definition. For multivariate parameters, we consider Asymptotic Confidence Regions (\cite{waudby2021time}):

\begin{definition} [Multivariate Asymptotic Confidence Sequence]
An $\R^d$ valued random set $\{\tilde C_n\}_{n\ge 1}$ is a $(1-\alpha)$ asymptotic confidence sequence (AsympCS) for a parameter $\theta_0\in \R^d$ if there exists a non-asymptotic $(1-\alpha)$ confidence sequence $\{C_n^\star\}_{n\ge 1}$ for $\theta_0$ such that $\P(\forall n\ge 1, \theta_0\in C_n^\star)\ge 1-\alpha$ and $\Lambda(\tilde C_n\Delta C_n^\star)/\Lambda(C_n^\star) \to 0$ almost surely, where $\Delta$ denotes set-difference and $\Lambda$ the Lebesgue measure on $\R^d$.
\end{definition}

An important property an AsympCS is expected to have is uniform Type-I error control:
\begin{definition}[Asymptotic time-uniform coverage] \label{coverage}
    A sequence of AsympCSs $\{\tilde C_n(m)\}_{n\ge 1}$, $m\in \N$ is said to have an asymptotic time uniform $(1-\alpha)$-coverage for $\theta_0$ if 
    \begin{equation}
        \lim_{m\to\infty} \P(\forall n\ge m, \theta_0\in \tilde C_n(m))\ge 1-\alpha. \label{type1}
    \end{equation}
\end{definition}
Several of the constructions in~\cite{waudby2021time} satisfy~\eqref{type1} with equality. Condition \eqref{type1} essentially says that if one starts peeking at the data late enough, i.e.\ after a large sample size $m$, one expects to cover the true parameter throughout with high probability. While \eqref{type1} is an important property to have for asymptotic confidence sequences, it doesn't usually suffice- see \cite{waudby2021time} for a detailed discussion. However, the confidence sequences we shall consider in this paper satisfies both Definition~\ref{asymp-cs} and Definition~\ref{coverage} and hence avoid the conundrum of choosing one versus the other. 


\textbf{Remark} (\textit{Running Intersections}): It might be worthwhile to mention that a feature of non-asymptotic confidence sequences that are not enjoyed by it's batch counterpart is that, if $\{C_n^\star\}_{n\ge 1}$ is a confidence sequence, then so is $\{C_n^\cap\}_{n\ge 1}$, where $C_n^\cap = \cap_{k\le n} C_k^\star$. A similar property is enjoyed by AsympCS's that satisfy \eqref{type1}: If $\{\tilde C_n(m)\}$ satisfy $\eqref{type1}$ for $m=1,2,\cdots$, then so does $\tilde C_n^\cap(m)= \cap_{m\le k\le n} \tilde C_k(m)$. This is because $\lim_{m\to\infty}\P(\exists n\ge m: \theta_0\notin \tilde C_n^\cap(m)) = \lim_{m\to\infty}\P(\exists n\ge m, m\le k\le n: \theta_0\notin \tilde C_k(m)) = \lim_{m\to\infty}\P(\exists k\ge m: \theta_0\notin \tilde C_k(m)) \ge (1-\alpha)$. This is particularly advantageous in case of ML based estimators which are known for its instability (\cite{molybog2023theory}), due to various reasons including numerical precision, sensitivity to initialization and early layers, and size of hyperparameter space (\cite{hammoudeh2024training, lones2021avoid, smith2017cyclical}). Thus, here one can use such running intersections to create confidence sequences monotonic in sample size (of course, only after a sufficiently large initial peeking time $m$) and still maintain asymptotic time uniform coverage as in \eqref{type1}.

\subsection{Double/Debiased Machine Learning} \label{dmlreview}

Having reviewed the concepts of anytime-valid inference, we turn our attention to double-debiased machine learning. The primary success of DML techniques is to overcome the regularization bias that is obtained from machine learning estimators (\cite{cucker2002best}). For instance, even in a simple partially linear model
\begin{align}
    Y &= A\theta_0 + m_0(X) + \varepsilon_1, \ \ &\E[\varepsilon_1|X,A] &= 0, \label{plr1}\\
    A &= e_0(X) + \varepsilon_2, \ \ \ &\E[\varepsilon_2|X]&= 0 \label{plr2},
\end{align}
a typical estimator for~\eqref{plr1} will be of the form $A\hat\theta + \hat m$.\footnote{Unless otherwise specified, we use $\theta,\eta,m$ etc. to represent a general parameter, $\hat\theta,\hat\eta, \hat m$ etc. for their estimates, and $\theta_0,\eta_0, m_0$ etc. for the true parameters.} For the ease of exposition, consider a sample-splitting framework where $\hat m$ is calculated independently from one half of the sample $I^c$, and $\theta_0$ is learnt using a different part $I$ using the learnt $\hat m$ and the OLS estimator. Even in such a situation, \cite{chernozhukov2018double} illustrates that 
$$\sqrt{n}(\hat\theta - \theta_0) = \left(\dfrac{2}{n}\sum_{i\in I} A_i^2\right)^{-1}\left[\dfrac{2}{\sqrt n}\sum_{i\in I}A_i\varepsilon_{1i} + \dfrac 2{\sqrt{n}}\sum_{i\in I}e_0(X_i)(m_0(X_i) - \hat m(X_i)) + o_p(1)\right].$$
The first (inverse) factor converges to a constant, and the first term in the second (square bracket) factor admits a central limit under mild conditions, but the second term in the second factor is not bounded in probability and generally diverges. This is because machine learning estimators, in an attempt to control the variance of $\hat m$, force it away from the true $m_0$ in expectation.
This problem can however be mitigated by extracting out the effect of $X$ from $A$ by \eqref{plr2}. Using the same framework as before, learning $\hat m$ and $\hat e$ from $I^c$ and using the DML estimator $\hat\theta_\dml = \sum_{i\in I} \hat \varepsilon_{2i}(Y_i - \hat m(X_i)) / \sum_{i\in I} \hat \varepsilon_{2i} A_i$ with $\hat\varepsilon_{2i} := A_i - \hat e(X_i)$, one may obtain that 
\begin{align} \sqrt{n}(\hat\theta_\dml - \theta_0) &= \E[\varepsilon_{2}^2]^{-1}\left(\dfrac{2}{\sqrt n}\sum_{i\in I} \varepsilon_{1i}\varepsilon_{2i} + \dfrac{2}{\sqrt n}\sum_{i\in I} (\hat m(X_i) - m_0(X_i))(\hat e(X_i) - e_0(X_i)) + r^\star\right). \label{plr-rep}
\end{align}
Now the first term in parentheses admits a central limit. The second term is less problematic as it is now the product of two estimation errors, and hence can vanish at rate $\sqrt{n} n^{-(\psi_m +\psi_e)}$, even when the individual rates $\sqrt{n}n^{-\psi_m}$ and $\sqrt{n}n^{-\psi_e}$ diverge to $\infty$. Thus, if it can be verified by $r^\star$ is sufficiently well-behaved, ensured by either equicontinuity conditions as discussed in Section~\ref{lit-review}, or by sample-splitting methods as illustrated by \cite{chernozhukov2018double}, one can expect to obtain an asymptotic linear representation --- facilitating construction of confidence intervals via CLT if the approximation holds only weakly, or anytime-valid intervals in case of strong approximation.

The key idea in this regard was the partialling out of $X$ from $A$, an idea formally known as orthogonalization. Introduced by \cite{neyman1959optimal, neyman1979c} to the semiparametric literature, it is a crucial idea for not only DML but a wide array of statistical learning, as alluded to in Section~\ref{lit-review}. To define Neyman orthogonality, suppose that we have access to an iid stream of data $\{W_i\}_{i\ge 1}$ with some law $W$ on the measurable space $\mathcal{W}$. Let $\theta\in \Theta$ be our parameter of interest, and $\eta:\mathcal{W}\to \R^{d_0}$ be nuisance functions, with their collection being represented by a convex set $\mathcal{T}$. The true parameter values are denoted by $\theta_0\in\Theta$ and $\eta_0\in \mathcal{T}$ respectively, and let $\psi: \mathcal{W}\times\Theta\times \mathcal{T}\to \R^d$ be a function such that 
\begin{equation} 
    \E[\psi(W;\theta_0,\eta_0)] = 0. \label{moment}
\end{equation}
Condition~\eqref{moment} ensures that $\theta_0$ is identified from the data distribution, and can generally be obtained from a modification of the identifying equation. 

\begin{definition}[Neyman Orthogonality]
    The function $\psi:\mathcal{W}\times\Theta\times \mathcal{T}\to\R^d$ satisfies the Neyman orthogonality condition at $(\theta_0,\eta_0)$ on the nuisance realization set $\mathcal{T}_n\subseteq \mathcal{T}$ if it satisfies \eqref{moment}, admits a pathwise (Gateaux) derivative  $\frac{\partial}{\partial r}\psi(W;\theta_0,\eta_0 + r(\eta -\eta_0))$ for $r\in [0,1)$ and $\eta\in \mathcal{T}$, and satisfies
    \begin{equation}
        \dfrac{\partial }{\partial r}\E[\psi(W;\theta_0, \eta_0 + r(\eta-\eta_0))] = 0 \label{orthogonal-pure}
    \end{equation}
for all $r\in [0,1)$ and $\eta\in \mathcal{T}_n \subseteq \mathcal{T}$.
\end{definition}
If $\eta$ is a finite dimensional vector then $\mathcal{T}_n = \mathcal{T}$ suffices.
While \cite{chernozhukov2018double} lays out conditions for a linear representation like \eqref{plr-rep} to hold in probability for a Neyman orthogonal function $\psi$ as in \eqref{orthogonal-pure}, it is non-trivial to extend these conditions for an almost sure representation --- which is necessary to apply the recipe of \cite{waudby2021time} to obtain AsympCS for the structural parameters. In this paper, we address the gap by formulating the exact Chernozhukov-style DML conditions for such an almost sure representation to hold, and hence constructing asymptotic confidence sequences. We must also mention that the conditions for a strong approximation are remarkably similar to \cite{chernozhukov2018double}'s weaker representation, and hence with only a slightly incremental cost can one extend the guarantee of DML confidence intervals to a much stronger anytime-valid guarantee, with all of its benefits and flexibility.

\subsection{Construction of Neyman-orthogonal Functions} \label{construction}

Before moving on to the time-uniform counterpart DML algorithms, we should address the major bottleneck of obtaining Neyman-orthogonal functions. Indeed the construction of a Neyman-orhogonal function $\psi$ satisfying \eqref{orthogonal-pure} is hard, and lies at the heart of DML methods. For instance, one often obtains $\theta\in\Theta$ and $\xi\in \Xi$ (the parameter of interest and a nuisance parameter respectively) by an optimization problem $$\max_{\theta\in\Theta,\xi\in\Xi}\E[l(W;\theta,\xi)],$$ which under usual regularity conditions satisfy the first order conditions $$\E[\partial_\theta l(W;\theta,\xi)] = 0\text{ and }\E[\partial_\xi l(W;\theta,\xi)] =0.$$ Then, taking $\eta = (\xi,\mu)$, one may define $\psi(W;\theta,\eta) = \partial_\theta l(W;\theta_0,\xi) - \mu \partial_\xi l(W;\theta,\xi)$, with $J_{\theta\xi} - \mu J_{\xi\xi} = 0$ where $J_{ab} = \partial_{a,b} \E[\partial_{ab} l(W;\theta,\xi)]|_{\theta_0,\xi_0}$ for $a,b\in \{\theta,\xi\}$. It is easy to verify that under suitable regularity and invertibility conditions, $\psi$ constitutes a Neyman-orthogonal function. Of course, this construction is not new, as illustrated in the context of constructing an efficient score when $l$ is the true log-likelihood (\cite{neyman1959optimal}), or in more general cases by \cite{chernozhukov2018double}. The latter also considers a wide array of case-specific instances for construction of a Neyman-orthogonal $\psi$, which we do not reiterate for considerations of space.

While it may be clear from~\eqref{moment} and~\eqref{orthogonal-pure} that Neyman-orthogonality is a property of the function $\psi$ and the true value of the nuisance parameter $\eta_0$ and not determined by a modeling choice for the parameter $\theta$, it might still be a prudent choice to pursue the semiparametric efficiency approach to construct such an orthogonal function $\psi$. Such a calculation leads to an influence function adjustment to the original function, and is often related to achievement of semi-parametric efficiency for estimating $\theta_0$ under the specific semiparametric model (\cite{chernozhukov2020locally, newey1994asymptotic}).

The route to obtaining such a functional usually involves that  the parameter (or functional) of interest $\theta_0$ is pathwise-differentiable with respect to the underlying distribution, so that it may admit an influence function. To illustrate, one may consider a one dimensional mixture model of the underlying distribution $P$ and a perturbing distribution $\tilde P$, then one may use the parametric submodel given by $P_r = P + r(\tilde P - P)$, $r\in [0,1)$, often referred to as the regular parametric submodel (\cite{hines2022demystifying}). To derive an influence function for $\theta$, one requires that $\left.\frac{d\theta(P_r)}{dr}\right|_{r=0}=\lim_{r\downarrow 0} \frac{\theta(P_r) - \theta(P)}{r}$, that is the pathwise derivative of $\theta(P_r)$, to exist and have finite variance. Such a pathwise differentiability criteria is very crucial for semiparametric efficiency, and allows one to write 
$\left.\dfrac{d\theta(P_r)}{dr}\right|_{r=0} = \E_{\tilde P}[\zeta_R(W,P)]$, where $\zeta_R\in L^2$ is a unique `representator' of the pathwise derivative, allowed for by the Riesz representation theorem (\cite{bickel1993efficient, laan2003unified, tsiatis2006semiparametric}). In such a situation, one considers the estimator $\theta(\hat P_n) +\frac 1n\sum_{i=1}^n \zeta_R(W_i,\hat P_n)$, where $\hat P_n$ denotes the empirical distribution, as the estimating equation of the form \eqref{moment}.

However, the parameter of interest $\theta_0$, although identified by a function $\phi$, may not be pathwise differentiable. Examples include the conditional average treatment effect (CATE) under continuous covariates (\cite{rubin2007doubly}), or the potential outcome mean under a continuous treatment model (\cite{rubin2005general, kennedy2017non}). However, orthogonal functions satisfying \eqref{orthogonal-pure} can be obtained by instead considering the expectation of the functional $\theta_0$ with respect to a marginal density on the continuous variable. Generally, these expectations are pathwise differentiable and hence admit an influence function, and thus may be utilized to extract a pseudo-outcome to satisfy Neyman-orthogonality (\cite{kennedy2017non, yang2023forster}). The general result for formalizing the approach for deriving a pseudo-outcome is given as follows, the proof of which can be found in Appendix~\ref{pseudoproof}. 

\begin{restatable}{theorem}{first}
\label{pseudooutcome}
    Consider a parameter of interest $\theta\in\Theta$ that is identified by a function $\phi\in L^2$ satisfying $\E[\phi(W;\theta_0,\beta)|X] = 0$, where $\beta\in\mathcal{B}$ is some nuisance parameter (possibly infinite dimensional), $W$ is the observed data, and $X$ is some observed random variable. Suppose that there exists a function $R(W;\theta,\eta)\in L^2$, with $\eta = (\beta, \gamma(\beta))$, such that for any regular parametric model $P_r$ with parameter $r$ such that $P_0 = P$ and likelihood score $S(\cdot)$, we have
    \begin{equation}
        \left.\dfrac{\partial\E[\phi(W;\theta_0,\beta_0 + r(\beta - \beta_0))]|X]}{\partial r}\right|_{r=0} = \E[R(W;\theta_0,\underbrace{\beta_0, \gamma(\beta_0)}_{\eta_0})S(W)|X], \label{pseudocorrection}
    \end{equation}
    with $\E[R(W;\theta_0,\eta)|X] = 0$. Then, $\psi(W;\theta_0, \eta) = \phi(W;\theta_0,\beta) + R(W;\theta_0,\eta)$ satisfies  Neyman orthogonality~\eqref{orthogonal-pure}.
\end{restatable}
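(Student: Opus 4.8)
The plan is to check directly the two requirements in the definition of Neyman orthogonality for $\psi = \phi + R$: the moment condition \eqref{moment} at $(\theta_0,\eta_0)$, and the vanishing Gateaux derivative \eqref{orthogonal-pure} (pathwise differentiability of $\psi$ being inherited from that of $\phi$ and $R$, which is part of the hypotheses). The moment condition is the easy half: since $\E[\phi(W;\theta_0,\beta_0)|X]=0$ by identification and $\E[R(W;\theta_0,\eta_0)|X]=0$ by hypothesis, the law of iterated expectations gives $\E[\psi(W;\theta_0,\eta_0)]=\E[\phi(W;\theta_0,\beta_0)]+\E[R(W;\theta_0,\eta_0)]=0$. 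So the substance is \eqref{orthogonal-pure}. By linearity of $r\mapsto\psi(W;\theta_0,\eta_0+r(\eta-\eta_0))$ in the two nuisance coordinates, I would split
$$\frac{\partial}{\partial r}\E[\psi(W;\theta_0,\eta_0 + r(\eta-\eta_0))]\Big|_{r=0} = \underbrace{\frac{\partial}{\partial r}\E[\phi(W;\theta_0,\beta_0 + r(\beta-\beta_0))]\Big|_{r=0}}_{(\mathrm{I})} \;+\; \underbrace{\frac{\partial}{\partial r}\E[R(W;\theta_0,\eta_0 + r(\eta-\eta_0))]\Big|_{r=0}}_{(\mathrm{II})},$$
and show $(\mathrm{I})=-(\mathrm{II})$, the common value being $\pm\E[R(W;\theta_0,\eta_0)S(W)]$ for the score $S$ of the parametric submodel realizing the perturbation.

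For $(\mathrm{I})$: condition on $X$ and interchange the outer expectation with $\partial/\partial r$ (a dominated-convergence / uniform-integrability argument on the pathwise derivative, relegated to the regularity hypotheses), writing $(\mathrm{I}) = \E\big[\,\partial_r\E[\phi(W;\theta_0,\beta_0+r(\beta-\beta_0))|X]\big|_{r=0}\,\big]$. The inner quantity is exactly the left-hand side of hypothesis \eqref{pseudocorrection}; substituting its right-hand side and applying iterated expectations once more yields $(\mathrm{I}) = \E[\,\E[R(W;\theta_0,\eta_0)S(W)|X]\,] = \E[R(W;\theta_0,\eta_0)S(W)]$, where $S$ is the score of the regular parametric submodel $P_r$ whose induced nuisance path $r\mapsto(\beta(P_r),\gamma(\beta(P_r)))$ has derivative $(\beta-\beta_0,\gamma(\beta)-\gamma(\beta_0))$ at $r=0$.

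For $(\mathrm{II})$: the linear path $r\mapsto\eta_0+r(\eta-\eta_0)$ agrees to first order with the nuisance path $r\mapsto\eta(P_r)$ induced by that same submodel, so the hypothesis $\E[R(W;\theta_0,\eta)|X]=0$ — read as the statement that the correction $R$ has conditional mean zero given $X$ when evaluated at the prevailing nuisance, i.e.\ $\E_{P_r}[R(W;\theta_0,\eta(P_r))|X]=0$ for all small $r$ — may be differentiated in $r$ at $0$. The product rule (distribution-change plus argument-change) gives $0 = \E[R(W;\theta_0,\eta_0)S(W)|X] + \E[\,\partial_\eta R(W;\theta_0,\eta_0)[\eta-\eta_0]\,|X]$, whence $\E[\,\partial_\eta R(W;\theta_0,\eta_0)[\eta-\eta_0]\,|X] = -\E[R(W;\theta_0,\eta_0)S(W)|X]$. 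Taking the outer expectation and noting $(\mathrm{II}) = \E[\,\partial_\eta R(W;\theta_0,\eta_0)[\eta-\eta_0]\,]$ gives $(\mathrm{II}) = -\E[R(W;\theta_0,\eta_0)S(W)] = -(\mathrm{I})$, so the Gateaux derivative at $r=0$ vanishes. Re-running the computation with the base point shifted to $\eta_0+r_0(\eta-\eta_0)$ (a submodel centered at that law) extends \eqref{orthogonal-pure} to all $r_0\in[0,1)$; and when $\eta$ (equivalently $\beta$) is finite-dimensional the restriction to $\mathcal{T}_n$ is vacuous, so $\mathcal{T}_n=\mathcal{T}$, matching the remark after the definition.

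The main obstacle — the place where the argument must be made carefully rather than symbolically — is the translation between the abstract linear perturbation of the nuisance and an honest regular parametric submodel with a well-defined score $S$: one must (a) exhibit, for the chosen direction $(\beta-\beta_0,\gamma(\beta)-\gamma(\beta_0))$, a submodel whose induced nuisance derivative equals that direction, which requires absorbing the fact that $\gamma$ is a nonlinear functional of $\beta$ (so $\gamma(\beta_0)+r(\gamma(\beta)-\gamma(\beta_0))$ and $\gamma(\beta_0+r(\beta-\beta_0))$ coincide only to first order in $r$ — enough, since only the $r=0$ derivative is used); (b) justify the interchanges of $\partial/\partial r$ with the (conditional and unconditional) expectation operators via integrability of the pathwise derivatives; and (c) pin down the precise reading of $\E[R(W;\theta_0,\eta)|X]=0$ — conditional-mean-zero given $X$ of $R$ evaluated at the nuisance in force, holding along submodels and not merely at $(P,\eta_0)$ — since the entire cancellation rests on differentiating that identity. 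Everything else is bookkeeping with iterated expectations.
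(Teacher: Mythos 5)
Your proof is correct, and while the underlying mechanics are the same as the paper's --- product-rule differentiation of moment identities along regular parametric submodels, splitting each derivative into a score-covariance (distribution-change) term and an argument-change term --- the bookkeeping is organized differently in a way worth noting. The paper differentiates $\E_r[\phi(W;\theta_0,\beta_r)\mid X]=0$ and substitutes \eqref{pseudocorrection} into the argument-change piece, concluding the intermediate fact that $\E[\psi(W;\theta_0,\eta_0)S(W)\mid X]=0$ (the influence-function-style characterization: $\psi=\phi+R$ is score-orthogonal given $X$); it then differentiates $\E_r[\psi(W;\theta_0,\eta_r)\mid X]=0$, whose score term vanishes by that intermediate fact, so the argument-change term --- exactly the Gateaux derivative in \eqref{orthogonal-pure} --- must be zero. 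You skip the intermediate fact entirely: you decompose the Gateaux derivative of $\E[\psi]$ into the $\phi$ and $R$ contributions, read the $\phi$ contribution $(\mathrm I)=\E[RS]$ straight off \eqref{pseudocorrection} (iterated expectations), and differentiate only the $R$-identity $\E_r[R(W;\theta_0,\eta_r)\mid X]=0$ to obtain $(\mathrm{II})=-\E[RS]$, giving cancellation directly. Both routes rest on the same implicit conventions --- that the $\phi$- and $R$-moment identities hold along submodels so they can be differentiated, that the induced nuisance path agrees to first order with the linear path $\eta_0+r(\eta-\eta_0)$, and that derivative/expectation interchanges are licit --- and you are right to single these out as the places where rigor would need to be added; the paper glosses over them as well. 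The paper's organization yields the reusable byproduct $\E[\psi S\mid X]=0$, while yours is one step shorter and makes the symmetric roles of \eqref{pseudocorrection} and the $R$-mean-zero condition more transparent.
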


Note that the function $\E[\phi(W;\theta_0, \beta)|X=x]$ may not be pathwise differentiable as it may not admit a second moment (and hence not an influence function), but Theorem~\ref{pseudooutcome} bypasses such a requirement for the construction of Neyman-orthogonal functions. \cite{yang2023forster} considers the special case of this Theorem~\ref{pseudooutcome} in the case of counterfactual regressions, and provide a wide variety of instances of the construction of such a $R$ function and hence a corresponding pseudo-outcome. 


\section{Construction of Asymptotic Confidence Sequences in the DML Framework} \label{main}

Suppose that we get to observe a sequence $\{W_i\}_{i\ge 1}$ iid with the same law as $W$ on a measurable space $\mathcal{W}$. Let nuisance functions be denoted by $\eta: \mathcal{W}\to \R^{d_0}$, and $\mathcal{T}$ be the collection of these functions. Denote $\theta$ as the parameter we actually care to estimate, taking value in $\Theta$. 

The true parameter and true nuisance are denoted by $\theta_0\in \Theta$ and $\eta_0\in \mathcal{T}$ respectively, and with $\hat\theta$ and $\hat\eta$ denoting their estimates, which shall be discussed in detail next. Let $\psi:\mathcal{W}\times\Theta\times\mathcal{T}\to\R^d$ be a function satisfying \eqref{orthogonal-pure}. Also, let $\delta_n$ be a sequence of positive constants converging to 0. We characterize the uncertainty due to the estimation of nuisance parameters by $\delta_n$, and generally, under high-dimensional parameters, we expect $\delta_n$ to be slower than the parametric rate, i.e., $\delta_n\ge n^{-\frac 12}$, and therefore the bias in estimation of $\theta_0$ or asymptotic approximation by a suitable normal distribution is dominated by the uncertainty in estimation of nuisance parameters (\cite{yang2023forster}). Let $T$ be the number of samples used to evaluate, and $T' = n-T$ is the number of samples required to learn the nuisance parameters till time $n$.

To estimate $\hat\eta$, we take a $K$-fold random partition of $[n] = \{1,\dots, n\}$, denoted by $\{I_k\}_{1\le k\le K}$, with the size of each $I_k$ roughly equalling $n/K$. Such a partition may be obtained by sequentially assigning each data point to one of $I_k$ at random, or by looping through each $I_k$ over $k$ one at a time till all the data points upto time $n$ are exhausted, etc. Nevertheless, we expect the size of each $I_k$ to be good enough for a fair estimation of $\hat\eta$, in a manner made precise by Assumption~\ref{regularity}. 

Also, we use the notation $\P_tf(Z) = \frac 1t\sum_{i=1}^t f(Z_i)$ for the empirical average operator, while the conditional average operator is denoted by  $\P \hat f(Z) = \E[\hat f(Z)|\{Z_i'\}_{i\in I_1^c}]$, where $\hat f$ is learnt from $\{Z_i'\}_{i\in I_1^c}$. 

Define the sample splitting estimator $\hat\theta$ as the solution to 
$$\P_T[\psi(W,\theta,\hat\eta)] = 0,$$ where $T = |I_1|$, and $\hat\eta$ is estimated from $I_{-1} := \{W_i\}_{i\in I_1^c}$. For convenience, we shall denote $I_1$ as simply $I$. Proof of all results in this section are collected in Appendix~\ref{mainproofs}.

\begin{assumption}
\label{neyman-orthogonal} (\textit{Linear Scores with Approximate Neyman Orthogonality}) For all $n\ge 3$, assume that the following conditions hold: 
\begin{itemize}
    \item[(a)] The true parameter value $\theta_0$ satisfies moment conditions in \eqref{moment}. 
    \item[(b)]The score function is linear in $\theta$, that is, there exist functions $\psi^a$ and $\psi^b$ such that 
    \begin{equation}
        \psi(W;\theta,\eta) = \psi^a(W;\eta)\theta + \psi^b(W;\eta). \label{linear_split} 
    \end{equation}
    \item[(c)] The map $\eta\mapsto \E[\psi(W;\theta,\eta)]$ is twice continuously Gateuax differentiable on $\mathcal{T}$.
    \item[(d)] The score function $\psi$ satisfies the Neyman near-orthogonality condition at $(\theta_0,\eta_0)$ with respect to the nuisance realization set $\T_n\subset \T$, ie, 
    \begin{equation} \lambda_n :=\sup_{\eta\in \T_n}\left\|\left.\dfrac{d}{dr}\E[\psi(W;\theta_0,\eta_0 + r(\eta-\eta_0))]\right|_{r=0}\right\|\leq 
    \delta_n \sqrt{\dfrac{ \log\log n}{n}}.
    \label{orthogonal}
    \end{equation}
    \item[(e)] The identification holds; ie, the singular values of the matrix $J_0 := \E[\psi^a(W;\eta_0)]$ lie between $c_0$ and $c_1$, where $c_0\le c_1$ are finite positive constants.
\end{itemize}
\end{assumption}

Of course, Assumption~\ref{neyman-orthogonal}(a) is the one ensuring that $\theta_0$ can be identified and estimated from the population moment equation given by $\psi$. Condition~\eqref{linear_split} in Assumption~\ref{neyman-orthogonal}(b) essentially assumes that $\theta$ can be inferred back from $\psi$ as a solution to a set of linear equations. While this is a strong assumption, most causal estimands, like average treatment effect (ATE), local average treatment effect (LATE), conditional average treatment effect (CATE), average treatment effect on treated (ATT) etc.\ can be identified and estimated from a $\psi$ function that admits a linear representation of the form in \eqref{linear_split}. The Assumptions (c) and (d), generalize the Neyman orthogonality condition in \eqref{orthogonal-pure} to a near-orthogonality condition in cases where exact orthogonality is too restrictive. Such a relaxation may be useful for quasi-likelihood or GMM settings, as illustrated in \cite{chernozhukov2018double}. Note that the near-orthogonality condition in \eqref{orthogonal} allows for an additional factor of $\sqrt{\log\log n}$ not allowed for in \cite{chernozhukov2018double}, but is not really a relaxation as the nuisance realization set $\mathcal{T}_n$ needs to contain $\hat\eta$ almost always, as shall be clarified in Assumption~\ref{regularity}. Assumption (e) is a simple assumption that in conjunction with \eqref{linear_split} allows $\theta_0$ to be indeed identified from $\psi$. 

In addition, to Assumption~\ref{neyman-orthogonal}, we need a second set of regularity assumptions on the behavior of $\psi$ in $\mathcal{T}_n$, to ensure that all plug-in biases are controlled and we may obtain an asymptotically linear representation to hold almost surely. 

\begin{assumption}
\label{regularity} (\textit{Score Regularity and Nature of Estimated Nuisance}) For all $n\ge 3$, let $\mathcal{T}_n$ be a set containing $\eta_0$ satisfying the following properties:
\begin{itemize}
    \item[(a)] Given a random subset $I$ (with $T := |I|$) of $[n]$ with $T/n \overset{a.s.}{\to}1/K$, the nuisance parameter $\hat\eta$ estimated from $\{W_i\}_{i\in I^c}$ is not contained in $\mathcal{T}_n$ only finitely often.
    \item[(b)] For $q>2$, the following moment conditions hold:  
    $$m_n:=\sup_{\eta\in \T_n}(\E\|\psi(W;\theta_0,\eta)^q\|)^{1/q}\le c_1; \ \ \ m_n':=\sup_{\eta\in \T_n}(\E\|\psi^a(W;\theta_0,\eta)^q\|)^{1/q}\le c_1.$$
    \item[(c)] The following rate conditions hold:
    \begin{align}
        r_n&= \sup_{\eta\in \T_n} \|\E[\psi^a(W;\eta)] - \E[\psi^a(W;\eta_0)]\|\le \delta_n. \label{rn}\\
        r_n'&= \sup_{\eta\in \T_n} (\E[\|\psi(W;\theta_0,\eta) - \psi(W;\theta_0,\eta_0)\|^2])^{\frac 12}\le \delta_n. \label{rn'}\\
        \lambda_n'&= \sup_{r\in (0,1);\eta\in \T_n} \left\|\dfrac{d^2}{dr^2}\E[\psi(W;\theta_0;\eta_0+r(\eta-\eta_0))]\right\|\le {\delta_n\sqrt{\dfrac{\log\log n}{n}}}. \label{lambdan'}
    \end{align}
    \item[(d)] The score is non-degenerate: all eigenvalues of the matrix $\E[\psi(W;\theta_0,\eta_0)\psi(W;\theta_0,\eta_0)^T]$ are at least $c_0$.
\end{itemize}
\end{assumption}

Assumption~\ref{regularity} (a) ensures that $\hat\eta$ is reasonably well-behaved, i.e., for a large enough sample size, $\hat\eta$ is always contained in $\mathcal{T}_n$. Since $\mathcal{T}_n\subseteq\mathcal{T}$ creates a shrinking neighborhood around $\eta_0$ at rates $r_n$, $r_n'$ and $\lambda_n'$ via \cref{rn,rn',lambdan'}, they ensure strong consistency properties of the estimator $\hat\eta$. The rate $\lambda_n'$ is of particularly interest, as it determines how fast the nuisance estimates need to converge to the true parameters with respect to the growing sample size. While in some problems this double derivative amounts to be 0, for instance in \eqref{plr1} when $m_0$ is known, or in an experimental study to obtain the ATE with known propensity scores $e_0$, in most cases that is not the case; and hence the double derivative condition imposes restrictions on the quality of the estimators used. However, it can be ensured that if $\psi$ happens to be an influence function or a pseudo-outcome emulating an influence function, one can ensure that $\lambda_n'$ is of the order $O(\|\hat\eta-\eta_0\|^2)$ (\cite{fisher2021visually}), and hence in the worst case $\eta_0$ needs to be strongly estimated at rate $(\log\log n/n)^\frac14$.

\textbf{Remark}: One can contrast the nature of Assumptions~\ref{neyman-orthogonal} and~\ref{regularity} to that of \cite[Assumptions 3.1 and 3.2]{chernozhukov2018double}, to see that they are remarkably similar, except that $\mathcal{T}_n$ is now a nuisance realization set that is allowed to not contain the nuisance estimate $\hat\eta$ only finitely often, and the rate conditions in \cref{orthogonal,rn,rn',lambdan'} allow an additional factor of $\sqrt{\log\log n}$ to compensate for the strong consistency expected of the nuisance realization set. This is unsurprising, as the law of iterated logarithm --- a key result underpinning several results in our paper --- states that a mean-centered random walk with finite variance is of $O(\sqrt{n\log\log n})$, whereas the central limit theorem can only ensure that the same is $O_p(\sqrt{n})$. 

\begin{restatable}{theorem}{samplesplitting}
\label{dml-linear} (\textit{Properties of the Sequential Sample Splitting Estimator}) Under Assumption~\ref{neyman-orthogonal} and~\ref{regularity},
$$\sigma^{-1}(\hat\theta-\theta_0) = -\dfrac 1T\sum_{i=1}^T \sigma^{-1}J_0^{-1}\psi(W_i;\theta_0,\eta_0) + O\left((r_n+r_n')\sqrt{\dfrac{\log\log n}{n}} + \lambda_n +\lambda_n'\right), $$
where $\sigma^2 = J_0^{-1}\P[\psi\psi^T]J_0^{-1}$. 
By Assumption~\ref{neyman-orthogonal}(d) and~\ref{regularity}(c), the error term is of the order $O\left(\delta_n \sqrt{\dfrac{\log\log n}{n}}\right)$, and thus $o\left( \sqrt{\dfrac{\log\log n}{n}}\right)$ if $\delta_n\to 0$.
\end{restatable}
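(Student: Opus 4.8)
The plan is to run the linear-score argument of \cite{chernozhukov2018double}, but with every $O_P$ step upgraded to an almost-sure law-of-iterated-logarithm (LIL) statement, which is exactly what the extra $\sqrt{\log\log n}$ budget in Assumptions~\ref{neyman-orthogonal} and~\ref{regularity} (over their \cite{chernozhukov2018double} analogues) buys.

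\textbf{Closed form and the Jacobian.} By linearity \eqref{linear_split}, $\hat\theta$ solves $\P_T[\psi^a(W;\hat\eta)]\,\theta + \P_T[\psi^b(W;\hat\eta)] = 0$; writing $\hat J := \P_T[\psi^a(W;\hat\eta)]$ and using $\psi^b(W;\eta) = \psi(W;\theta_0,\eta) - \psi^a(W;\eta)\theta_0$ yields the exact identity $\hat\theta - \theta_0 = -\hat J^{-1}\P_T[\psi(W;\theta_0,\hat\eta)]$. On the event $\{\hat\eta\in\mathcal{T}_n\}$, which holds for all large $n$ a.s.\ by Assumption~\ref{regularity}(a), the moment bounds of Assumption~\ref{regularity}(b),(c) together with the same blocking/LIL argument used below for $(\mathrm{II})$ give $\hat J - J_0 = O(r_n + \sqrt{\log\log n/n})$; combined with Assumption~\ref{neyman-orthogonal}(e) this makes $\hat J$ invertible eventually with $\hat J^{-1}=O(1)$ and $\hat J^{-1}-J_0^{-1} = -J_0^{-1}(\hat J - J_0)\hat J^{-1} = O(r_n + \sqrt{\log\log n/n})$.

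\textbf{Three-way decomposition.} Freeze the out-of-fold sub-stream $\{W_i\}_{i\in I^c}$, so that $\hat\eta$ is a fixed function, and write
\[
\P_T[\psi(W;\theta_0,\hat\eta)] = \underbrace{\P_T[\psi(W;\theta_0,\eta_0)]}_{(\mathrm I)} + \underbrace{(\P_T-\P)\big[\psi(W;\theta_0,\hat\eta)-\psi(W;\theta_0,\eta_0)\big]}_{(\mathrm{II})} + \underbrace{\P[\psi(W;\theta_0,\hat\eta)]}_{(\mathrm{III})} .
\]
Term $(\mathrm I)$ is a centred i.i.d.\ average by \eqref{moment}, so by the LIL it is $O(\sqrt{\log\log T/T}) = O(\sqrt{\log\log n/n})$ a.s.\ (using $T/n\to 1/K$), and $-\sigma^{-1}J_0^{-1}(\mathrm I)$ is the advertised leading term. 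Term $(\mathrm{III})$ is pure bias: since $r\mapsto\E[\psi(W;\theta_0,\eta_0+r(\hat\eta-\eta_0))]$ is twice continuously Gateaux differentiable (Assumption~\ref{neyman-orthogonal}(c)), a second-order Taylor expansion from $r=0$ to $r=1$ bounds $\|(\mathrm{III})\|$ by $\lambda_n + \tfrac12\lambda_n'$ on the event $\{\hat\eta\in\mathcal{T}_n\}$, invoking approximate Neyman orthogonality \eqref{orthogonal} for the first-order coefficient and \eqref{lambdan'} for the remainder; hence $(\mathrm{III}) = O(\lambda_n + \lambda_n')$ a.s. Term $(\mathrm{II})$ is the delicate one: conditionally on the frozen out-of-fold stream it is a centred average of $T$ i.i.d.\ summands with conditional variance at most $(r_n')^2$ by \eqref{rn'} and bounded conditional $q$-th moments by Assumption~\ref{regularity}(b), which should yield $(\mathrm{II}) = O(r_n'\sqrt{\log\log n/n})$ a.s. Plugging the decomposition into the identity above, $\hat\theta-\theta_0 = -J_0^{-1}(\mathrm I) - (\hat J^{-1}-J_0^{-1})(\mathrm I) - \hat J^{-1}\big((\mathrm{II})+(\mathrm{III})\big)$; the middle term is $O\big((r_n+\sqrt{\log\log n/n})\sqrt{\log\log n/n}\big)$, of strictly smaller order than the claimed error, and $\hat J^{-1}\big((\mathrm{II})+(\mathrm{III})\big) = O(r_n'\sqrt{\log\log n/n}+\lambda_n+\lambda_n')$. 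Left-multiplying by $\sigma^{-1}$ is harmless since the singular values of $J_0$ and the eigenvalues of $\sigma^2 = J_0^{-1}\P[\psi\psi^T]J_0^{-1}$ are bounded away from $0$ and $\infty$ by Assumptions~\ref{neyman-orthogonal}(e) and \ref{regularity}(b),(d), so $\sigma^{-1}J_0^{-1}=O(1)$; this gives the stated representation, and the final sentence is immediate from $r_n,r_n'\le\delta_n$, $\lambda_n,\lambda_n'\le\delta_n\sqrt{\log\log n/n}$ and $\delta_n\to 0$.

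\textbf{Main obstacle.} The crux is controlling $(\mathrm{II})$ (and, for the same reason, the stochastic part of $\hat J - J_0$): because sequential sample-splitting makes $\hat\eta=\hat\eta_n$ depend on $n$, this is not a single random walk but a triangular array, so a termwise Borel--Cantelli bound using only the per-$n$ conditional variance fails, as the relevant series diverges. The resolution is the standard LIL machinery: condition on the entire out-of-fold sub-stream so that all $\hat\eta_n$ are frozen, restrict to the a.s.\ event $\{\hat\eta_n\in\mathcal{T}_n \text{ eventually}\}$ so one may pass to worst-case elements of $\mathcal{T}_n$, move to a geometric subsequence $n_k=\lfloor\rho^k\rfloor$, apply Borel--Cantelli there using the conditional $q$-th moment bounds of Assumption~\ref{regularity}(b) and a maximal inequality, and finally bound the within-block oscillation; this is precisely where the extra $\sqrt{\log\log n}$ slack in \eqref{orthogonal}, \eqref{rn}, \eqref{rn'} and \eqref{lambdan'} is consumed.
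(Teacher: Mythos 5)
Your proposal is correct and follows essentially the same route as the paper. Your three-way decomposition of $\P_T[\psi(W;\theta_0,\hat\eta)]$ into $(\mathrm I)$, $(\mathrm{II})$, $(\mathrm{III})$ is exactly the paper's Lemma~\ref{decomposition} ($\Gamma_n^{SA}$, $\Gamma_n^{EP}$, $\Gamma_n^B$), your second-order Taylor expansion with Neyman near-orthogonality for the bias term is the paper's Lemma~\ref{bias-control}, and your treatment of $\hat J^{-1}-J_0^{-1}$ via the resolvent identity matches the paper's handling of $R_{n,1}$. The only place you add content is your "main obstacle" discussion of the empirical-process term $(\mathrm{II})$: you correctly flag that sequential sample-splitting makes this a triangular array (since $\hat\eta=\hat\eta_n$ changes with $n$) and sketch a conditioning-plus-blocking LIL argument, whereas the paper's Lemma~\ref{emp-process} simply cites Step~1 of Lemma~A.4 of \cite{waudby2021time} for this fact; your sketch is consistent with what that reference does, so this is a difference in level of detail rather than of method.
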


Theorem~\ref{dml-linear} ensures that the sample splitting estimator $\hat\theta$ admits an asymptotically linear representation, that holds almost surely. The almost sure holding of the representation is a primary distinction from previous results, and thus allows one to use representation of random sums to create AsympCS for $\theta_0$ from $\hat\theta$. Of course, the variance $\sigma$ needs to be estimated with a strong consistency guarantee to ensure that such confidence sequences can be computed from the data, which brings us to Theorem~\ref{var-est}.

\begin{restatable}{theorem}{variance}
\label{var-est}
(\textit{Estimation of Variance}) Suppose Assumption~\ref{neyman-orthogonal} and~\ref{regularity} hold, 
then estimating $\sigma^2$ by $\hat\sigma^2 = \hat J_{0T}^{-1}\P_T[\hat\psi\hat\psi'](\hat J_{0T}^{-1})'$ with $\hat J_{0T} = \frac 1T\sum_{i=1}^T \psi_a(W_i,\hat\eta)$, we have
$$\hat\sigma^2 = \sigma^2 +  O\left(n^{-(1-\frac 2q)}\bigvee\sqrt{\dfrac{\log\log n}{n}} + r_n + r_n'\right).\ 
$$ 
\end{restatable}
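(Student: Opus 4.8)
The plan is to control the Jacobian estimate $\hat J_{0T}$ and the ``meat'' matrix $\hat M_T := \P_T[\hat\psi\hat\psi']$ separately, each at the advertised rate, and then to combine them through a perturbation bound for the smooth map $(A,B)\mapsto A^{-1}B(A^{-1})'$. All arguments take place on the almost-sure event that $\hat\eta\in\T_n$ for all large $n$, furnished by Assumption~\ref{regularity}(a), on which the bounds $m_n,m_n'\le c_1$, $r_n,r_n'\le\delta_n$, and \eqref{rn}--\eqref{rn'} may be applied directly.

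First I would handle $\hat J_{0T}$. Observe that $\hat J_{0T}=\P_T[\psi^a(W;\hat\eta)]$ is precisely the matrix inverted in the proof of Theorem~\ref{dml-linear} (it arises from plugging \eqref{linear_split} into $\P_T\psi(W;\hat\theta,\hat\eta)=0$), so its control can largely be imported from there. I would decompose $\hat J_{0T}-J_0$ as $(\P_T-\P)\psi^a(W;\eta_0) + \bigl(\P[\psi^a(W;\hat\eta)]-\P[\psi^a(W;\eta_0)]\bigr) + (\P_T-\P)\bigl(\psi^a(W;\hat\eta)-\psi^a(W;\eta_0)\bigr)$: the first term is a centered i.i.d.\ average with finite $q$-th moment ($q>2$), hence $O(\sqrt{\log\log n/n})$ by the law of the iterated logarithm (LIL); the second is $O(r_n)$ by \eqref{rn}; and the third is a conditionally centered average of summands with conditionally bounded $q$-th moment, again $O(\sqrt{\log\log n/n})$ by a conditional LIL. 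This gives $\|\hat J_{0T}-J_0\| = O(r_n + \sqrt{\log\log n/n})$ a.s., and in particular $\hat J_{0T}$ is eventually invertible with singular values bounded away from $0$ and $\infty$ by Assumption~\ref{neyman-orthogonal}(e).

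Next I would handle $\hat M_T$. From Theorem~\ref{dml-linear} and the LIL applied to its leading i.i.d.\ term, $\|\hat\theta-\theta_0\|=O(\sqrt{\log\log n/n})$. Using \eqref{linear_split} to write $\hat\psi_i=\psi(W_i;\theta_0,\hat\eta)+\psi^a(W_i;\hat\eta)(\hat\theta-\theta_0)$, the difference between $\hat M_T$ and $\P_T[\psi(W;\theta_0,\hat\eta)\psi(W;\theta_0,\hat\eta)']$ is a sum of cross terms bounded by $\|\hat\theta-\theta_0\|$ times empirical second moments of $\psi^a(W;\hat\eta)$ and $\psi(W;\theta_0,\hat\eta)$, which are $O(1)$ a.s.\ by the SLLN and $m_n,m_n'\le c_1$; hence this gap is $O(\sqrt{\log\log n/n})$. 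With $\Delta:=\psi(W;\theta_0,\hat\eta)-\psi(W;\theta_0,\eta_0)$, I would then expand $\P_T[\psi(W;\theta_0,\hat\eta)\psi(W;\theta_0,\hat\eta)']-\P_T[\psi(W;\theta_0,\eta_0)\psi(W;\theta_0,\eta_0)']=\P_T[\Delta\,\psi(W;\theta_0,\hat\eta)']+\P_T[\psi(W;\theta_0,\eta_0)\,\Delta']$, and, conditioning on the nuisance fold, bound each term's conditional mean by $O(r_n')$ via Cauchy--Schwarz and \eqref{rn'} (with $m_n\le c_1$), and its fluctuation about the conditional mean by $O(n^{-(1-2/q)}\vee\sqrt{\log\log n/n})$ via a conditional Marcinkiewicz--Zygmund / LIL argument for entries with conditionally bounded $q/2$-th moment. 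Finally $\P_T[\psi(W;\theta_0,\eta_0)\psi(W;\theta_0,\eta_0)']-M_0$, with $M_0:=\E[\psi(W;\theta_0,\eta_0)\psi(W;\theta_0,\eta_0)^T]$, is an i.i.d.\ average of matrices with finite $q/2$-th moment, hence $O(n^{-(1-2/q)}\vee\sqrt{\log\log n/n})$ a.s.\ --- the Marcinkiewicz--Zygmund strong law at exponent $q/2\in(1,2]$ giving the $n^{-(1-2/q)}$ rate when $q<4$ and the LIL the $\sqrt{\log\log n/n}$ rate when $q\ge4$. Collecting terms, $\|\hat M_T-M_0\|=O(n^{-(1-2/q)}\vee\sqrt{\log\log n/n}+r_n')$ a.s.

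Finally I would combine: since $J_0$ has singular values in $[c_0,c_1]$ and $\|M_0\|\le c_1^2$, the map $(A,B)\mapsto A^{-1}B(A^{-1})'$ is Lipschitz near $(J_0,M_0)$, so a first-order expansion on the eventual event that $(\hat J_{0T},\hat M_T)$ is near $(J_0,M_0)$ yields $\hat\sigma^2-\sigma^2=O(\|\hat J_{0T}-J_0\|+\|\hat M_T-M_0\|)$, and the previous two steps give the claimed rate. I expect the main obstacle to be the second step: keeping the bookkeeping tight under only $q>2$ moments --- in particular, extracting the cross term $\P_T[\psi(W;\theta_0,\eta_0)\Delta']$ at rate $O(r_n')$ rather than the lossy $O(\sqrt{r_n'})$ a naive empirical Cauchy--Schwarz would produce (this is why one conditions on the nuisance fold before applying the moment bounds), and correctly recognizing that the empirical second-moment matrix converges only at the Marcinkiewicz--Zygmund rate $n^{-(1-2/q)}$, not the LIL rate, when $2<q<4$. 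As in Theorem~\ref{dml-linear}, one must also run all the LIL/MZ steps through a conditional lens because both $\hat\eta$ and the evaluation fold $I$ change with $n$.
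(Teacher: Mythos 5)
Your proof is correct and achieves the advertised rate, but it takes a noticeably different route from the paper's. The paper's proof uses the \emph{symmetric} expansion $\P_T\hat\psi\hat\psi^T = \P_T(\hat\psi-\psi)(\hat\psi-\psi)^T + \P_T\psi\psi^T + \text{(cross terms)}$, bounds the cross terms by Cauchy--Schwarz in terms of $\|\P_T(\hat\psi-\psi)(\hat\psi-\psi)^T\|^{1/2}$, and then controls the quadratic remainder $\P_T\|\hat\psi-\psi\|^2$ uniformly over $T$ by invoking the \emph{conditional reverse-submartingale maximal inequality} of Manole and Ramdas, which yields $\P_T\|\hat\psi-\psi\|^2 = O(\P\|\hat\psi-\psi\|^2) = O(r_n'^2)$ at one stroke; the Marcinkiewicz--Zygmund/LIL dichotomy is applied only once, to the pure i.i.d.\ term $(\P_T-\P)\psi\psi^T$. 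You instead use an \emph{asymmetric} expansion in $\Delta := \psi(\cdot;\theta_0,\hat\eta)-\psi(\cdot;\theta_0,\eta_0)$, split each cross term into a conditional mean (which you bound at $O(r_n')$ by Cauchy--Schwarz against \eqref{rn'} and $m_n\le c_1$) plus a conditional fluctuation, and then run the MZ/LIL argument conditionally on the nuisance fold for every fluctuation piece. Both routes are sound; the paper's use of the reverse-submartingale tool is slicker and avoids the extra conditional bookkeeping your fluctuation steps require, while your approach is more elementary and keeps all the randomness explicit. A genuine merit of your write-up is that you explicitly expand $\hat\psi_i=\psi(W_i;\theta_0,\hat\eta)+\psi^a(W_i;\hat\eta)(\hat\theta-\theta_0)$ and verify that passing from $\theta_0$ to the feasible $\hat\theta$ costs only $O(\sqrt{\log\log n/n})$; the paper sidesteps this by a notational convention that takes $\hat\psi=\psi(\cdot;\theta_0,\hat\eta)$, which strictly speaking is not the computable estimator. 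Your final combination step via a Lipschitz bound for $(A,B)\mapsto A^{-1}B(A^{-1})'$ is equivalent to the paper's direct substitution of $\hat J_0^{-1}-J_0^{-1}$.
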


Theorem~\ref{dml-linear} and Theorem~\ref{var-est} allows us to obtain a asymptotically linear representation of $\hat\sigma^{-1}(\theta - \theta_0)$ that holds almost surely, as illustrated in Corollary~\ref{combining_all}.

\begin{corollary}
\label{combining_all}
Under Assumption~\ref{neyman-orthogonal} and~\ref{regularity} 
$$\hat\sigma^{-1}(\hat\theta-\theta_0) = -\dfrac 1T\sum_{i=1}^T \sigma^{-1}J_0^{-1}\psi(W_i;\theta_0,\eta_0) + O\left(n^{-(1-\frac 2q)}\bigvee\sqrt{\dfrac{\log\log n}{n}} + (r_n+r_n')\sqrt{\dfrac{\log\log n}{n}} + \lambda_n +\lambda_n'\right).$$
In words, the $\sigma^{-1}$ on the left hand side of Theorem~\ref{dml-linear} can be replaced by $\hat\sigma^{-1}$ with an additional error of the order at most $O\left(n^{-(1-\frac 2q)}\bigvee\sqrt{\dfrac{\log\log n}{n}}\right)$.
\end{corollary}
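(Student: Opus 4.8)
The plan is to read off the result by combining Theorem~\ref{dml-linear} and Theorem~\ref{var-est} through a short matrix-perturbation argument that isolates the extra error incurred by studentizing with $\hat\sigma$ in place of the unknown $\sigma$. The starting point is the decomposition
$$\hat\sigma^{-1}(\hat\theta-\theta_0) = \sigma^{-1}(\hat\theta-\theta_0) + (\hat\sigma^{-1}-\sigma^{-1})(\hat\theta-\theta_0),$$
into which I substitute Theorem~\ref{dml-linear} for the first summand; what remains is to bound the second summand by $O\big(n^{-(1-2/q)}\vee\sqrt{\log\log n/n}\big)$ and to absorb everything into the error term already present in Theorem~\ref{dml-linear}.

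First I would record the almost-sure rate $\|\hat\theta-\theta_0\| = O(\sqrt{\log\log n/n})$. By Assumption~\ref{neyman-orthogonal}(a) the vectors $\psi(W_i;\theta_0,\eta_0)$ are i.i.d.\ and mean zero, they have $q>2$ finite moments by Assumption~\ref{regularity}(b), and nondegenerate covariance by Assumption~\ref{regularity}(d); the Hartman--Wintner law of the iterated logarithm then gives $\frac1T\sum_{i=1}^T\psi(W_i;\theta_0,\eta_0) = O(\sqrt{\log\log T/T})$ a.s., which is $O(\sqrt{\log\log n/n})$ since $T/n\to 1/K$. Multiplying by $\sigma^{-1}J_0^{-1}$, whose operator norm is bounded because the singular values of $J_0$ lie in $[c_0,c_1]$ (Assumption~\ref{neyman-orthogonal}(e)) and the spectrum of $\sigma^2 = J_0^{-1}\P[\psi\psi^T]J_0^{-1}$ is sandwiched between positive constants (Assumption~\ref{regularity}(d) from below, $m_n\le c_1$ and the $J_0$ bounds from above), and adding the $O(\delta_n\sqrt{\log\log n/n})$ remainder of Theorem~\ref{dml-linear}, gives the claimed rate for $\sigma^{-1}(\hat\theta-\theta_0)$, hence for $\hat\theta-\theta_0$ after one more multiplication by $\sigma=(\sigma^2)^{1/2}$, which has bounded operator norm.

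Next I would translate the variance rate of Theorem~\ref{var-est} into a rate for $\hat\sigma^{-1}-\sigma^{-1}$. Write $\rho_n := n^{-(1-2/q)}\vee\sqrt{\log\log n/n} + r_n + r_n'$, so that $\hat\sigma^2-\sigma^2 = O(\rho_n)$ a.s.\ by Theorem~\ref{var-est}. Since the spectrum of $\sigma^2$ lies in a fixed compact subinterval of $(0,\infty)$ and $\hat\sigma^2\to\sigma^2$ almost surely (so $\hat\sigma^2$ is eventually positive definite with spectrum in a slightly enlarged such interval), the maps $A\mapsto A^{-1}$ and $A\mapsto A^{1/2}$ are each Lipschitz on the relevant sets of matrices; writing $\hat\sigma^{-1}=(\hat\sigma^{-2})^{1/2}$ and composing these two Lipschitz maps yields $\hat\sigma^{-1}-\sigma^{-1} = O(\|\hat\sigma^2-\sigma^2\|) = O(\rho_n)$ a.s.

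Combining the two displays, the second summand is $(\hat\sigma^{-1}-\sigma^{-1})(\hat\theta-\theta_0) = O\big(\rho_n\sqrt{\log\log n/n}\big)$; since $\sqrt{\log\log n/n}\to 0$ the piece $(n^{-(1-2/q)}\vee\sqrt{\log\log n/n})\cdot\sqrt{\log\log n/n}$ is $O(n^{-(1-2/q)}\vee\sqrt{\log\log n/n})$, while the piece $(r_n+r_n')\sqrt{\log\log n/n}$ is already one of the error terms in Theorem~\ref{dml-linear}; collecting terms reproduces exactly the stated bound. The main obstacle is the matrix-perturbation step of the third paragraph: one has to be careful that $\sigma$ denotes the \emph{matrix} square root of the sandwich covariance, verify its spectrum is bounded away from $0$ and $\infty$ from the structural assumptions, and invoke operator-Lipschitz continuity of $A\mapsto A^{1/2}$ and $A\mapsto A^{-1}$ on that spectral range; the rest is bookkeeping with the law of the iterated logarithm and the triangle inequality. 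One could sidestep the square root entirely by working with the quadratic form $(\hat\theta-\theta_0)^T\hat\sigma^{-2}(\hat\theta-\theta_0)$ in the multivariate case, for which only Lipschitzness of matrix inversion is needed.
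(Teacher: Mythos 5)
Your proof is correct and takes the natural route — the paper leaves this corollary as a stated consequence of Theorems~\ref{dml-linear} and~\ref{var-est}, and your decomposition $\hat\sigma^{-1}(\hat\theta-\theta_0)=\sigma^{-1}(\hat\theta-\theta_0)+(\hat\sigma^{-1}-\sigma^{-1})(\hat\theta-\theta_0)$ together with the LIL rate $\|\hat\theta-\theta_0\|=O(\sqrt{\log\log n/n})$ and the Lipschitz continuity of $A\mapsto A^{-1/2}$ on matrices with spectrum bounded away from $0$ and $\infty$ is exactly what is needed. Your bookkeeping is also tight: you correctly observe that the product $\|\hat\sigma^{-1}-\sigma^{-1}\|\cdot\|\hat\theta-\theta_0\|$ is actually $o$ of the stated $n^{-(1-2/q)}\vee\sqrt{\log\log n/n}$ plus a term already present in Theorem~\ref{dml-linear}, so the additional error is no worse than claimed. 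The only caveat worth flagging is the one you already raise yourself — that $\sigma$ must be interpreted as the symmetric matrix square root of the sandwich covariance, and that the spectrum bounds (eigenvalues of $\sigma^2$ in $[c_0/c_1^2,\,c_1^2/c_0]$ from Assumptions~\ref{neyman-orthogonal}(e), \ref{regularity}(b),(d)) are needed to make the operator-Lipschitz step legitimate; you handle this correctly, and your suggested alternative of working with the quadratic form $(\hat\theta-\theta_0)^T\hat\sigma^{-2}(\hat\theta-\theta_0)$ would indeed avoid the square root altogether, which is essentially the form in which the bound is later used in Theorem~\ref{csdml}(a).
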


The estimator in Theorem~\ref{dml-linear} have all been obtained by sample-splitting, which needless to say, may lose out on efficiency. An easy remedy to this is cross-fitting, which can even be adapted to the sequential setting by randomly assigning a newly arrived data point to one of the folds of the $K$-fold partition, or by looping through the folds sequentially. One can define two DML estimators based on such cross-fitting technique.

\begin{definition}
    \label{cross-fit}
    Consider a $K$ fold random partition $\{I_k\}_{k=1}^K$ of $[n]$ such that $|I_k|/n\overset{a.s.}{\to} 1/K$, and for each $k$, estimate $\hat\eta_k = \hat\eta(\{W_i\}_{i\in I_k^c}$. 
    \begin{itemize}
        \item ($\text{DML}_1$ estimator) Construct $\hat\theta_k$ as the solution to \begin{equation}\dfrac{1}{|I_k|}\sum_{i\in I_k}\psi(W_i,\theta,\hat\eta_k) = 0,\label{dml1}\end{equation} and the corresponding variance estimate as $$\hat\sigma^2_k = \hat J_{0k}^{-1}\left(\dfrac{1}{|I_k|}\sum_{i\in I_k}\psi(W,\theta,\hat\eta_k)\psi(W,\theta,\hat\eta_k)\right)(\hat J_{0k}^{-1})^T,$$ where $\hat J_{0k} = \frac{1}{|I_k|}\sum_{i\in I_k}\psi^a(W_i,\hat\eta_k)$. Then the $\dml_1$ estimator for $\theta$ is defined as $ \hat\theta_{\dml_1} = \dfrac{1}{K}\sum_{k=1}^K \hat\theta_k,$ with estimated variance $\hat\sigma^2_{\dml_1} = \frac{1}{K}\sum_{k=1}^K \hat\sigma^2_k$.

        \item ($\dml_2$ estimator) $\hat\theta_{\dml_2}$ is defined as the solution to 
        \begin{equation}\dfrac 1{K}\sum_{k=1}^K\dfrac{1}{|I_k|}\sum_{i\in I_k} \psi(W_i,\theta,\hat\eta_k) = 0,\label{dml2}\end{equation} and denoting $\hat J_0 = \frac 1K\sum_{k=1}^K \hat J_{0k}$, its variance estimate is given by $$\hat\sigma^2_{\dml_2} = \hat J_0^{-1}\left(\dfrac 1K\sum_{k=1}^K\dfrac{1}{|I_k|}\sum_{i\in I_k}\psi(W,\theta,\hat\eta_k)\psi(W,\theta,\hat\eta_k)\right)(\hat J_0^{-1})^T.$$
    \end{itemize}    
\end{definition}

The choice of $K$ has no asymptotic impact on the confidence sequences produced, but may produce better performance on moderate sample size. \cite{chernozhukov2018double}  recommends using $K=4$ or $5$ for better performance than $K=2$. Also $\dml_2$ is generally preferred than $\dml_1$ due to better stability in the pooled empirical Jacobian in \eqref{dml2} than its individual counterpart in \eqref{dml1}. Analogous results to Theorem~\ref{dml-linear} and Theorem~\ref{var-est} hold for the cross-fitting estimators in Definition~\ref{cross-fit}, the result of which we shall state without proof in Corollary~\ref{seqdml-all}. 

\begin{corollary}
\label{seqdml-all}
Under Assumptions~\ref{neyman-orthogonal} and~\ref{regularity}, both the $\dml_1$ and $\dml_2$ estimators satisfy 
\begin{align*} \hat\sigma_{\dml_j}(\hat\theta_{\dml_j} - \theta_0) &= -\dfrac 1n\sum_{i=1}^n\sigma^{-1}J_0^{-1}\psi(W_i;\theta_0,\eta_0) \\ & \hspace{4em}+ O\left(n^{-(1-\frac 2q)}\bigvee\sqrt{\dfrac{\log\log n}{n}} + (r_n+r_n')\sqrt{\dfrac{\log\log n}{n}} + \lambda_n +\lambda_n'\right);
\end{align*}
for $j=1,2$.
\end{corollary}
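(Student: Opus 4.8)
The plan is to reduce the cross-fitting statement to the sample-splitting results of Theorem~\ref{dml-linear}, Theorem~\ref{var-est} and Corollary~\ref{combining_all}, applied separately on each of the $K$ folds, and then to control the error incurred when the $K$ fold-wise expansions are recombined into a single empirical average over all $n$ observations. Throughout, $K$ is a fixed constant, so any bound that holds per fold with $|I_k|\asymp n$ samples is of the same order as the corresponding $n$-sample bound. For each $k$, the pair $(\hat\theta_k,\hat\sigma^2_k)$ from Definition~\ref{cross-fit} is exactly the sequential sample-splitting estimator of Section~\ref{main} with evaluation fold $I=I_k$ and nuisance $\hat\eta_k$ learned from $I_k^c$; since $|I_k|/n\overset{a.s.}{\to}1/K$, Assumptions~\ref{neyman-orthogonal} and~\ref{regularity} apply verbatim with $T=|I_k|$, so Corollary~\ref{combining_all} yields
$$\hat\sigma_k^{-1}(\hat\theta_k-\theta_0) = -\frac1{|I_k|}\sum_{i\in I_k}\sigma^{-1}J_0^{-1}\psi(W_i;\theta_0,\eta_0) + O\!\left(n^{-(1-\frac2q)}\bigvee\sqrt{\tfrac{\log\log n}{n}} + (r_n+r_n')\sqrt{\tfrac{\log\log n}{n}} + \lambda_n + \lambda_n'\right),$$
while Theorem~\ref{var-est} gives $\hat\sigma_k^2=\sigma^2+O(\cdots)$; multiplying the first display through by $\hat\sigma_k=\sigma+O(\cdots)$ and using that the singular values of $\sigma$ are bounded (Assumptions~\ref{neyman-orthogonal}(e), \ref{regularity}(d)) also yields $\hat\theta_k-\theta_0=-\frac1{|I_k|}\sum_{i\in I_k}J_0^{-1}\psi(W_i;\theta_0,\eta_0)+O(\cdots)$, with the same order error (since $K$ is fixed).

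For the $\dml_1$ estimator I would average these expansions: $\hat\theta_{\dml_1}-\theta_0=\frac1K\sum_{k=1}^K(\hat\theta_k-\theta_0)=-\frac1K\sum_k\frac1{|I_k|}\sum_{i\in I_k}J_0^{-1}\psi(W_i;\theta_0,\eta_0)+O(\cdots)$, and then replace the weighted average $\frac1K\sum_k\frac1{|I_k|}\sum_{i\in I_k}(\cdot)$ by the uniform average $\frac1n\sum_{i=1}^n(\cdot)$. The difference equals $\sum_k\big(\tfrac1{K|I_k|}-\tfrac1n\big)\sum_{i\in I_k}J_0^{-1}\psi(W_i;\theta_0,\eta_0)$; because the partition is chosen independently of the data and $\psi(\cdot;\theta_0,\eta_0)$ is mean-zero with finite variance (\eqref{moment} and Assumption~\ref{regularity}(d)), each inner sum is $O(\sqrt{n\log\log n})$ by the law of the iterated logarithm, while $\big||I_k|-n/K\big|=O(\sqrt{n\log\log n})$ for random assignment (and $O(1)$ for deterministic cycling through folds), so each weight gap is $O(\sqrt{n\log\log n}/n^2)$ and the whole difference is $O(\log\log n/n)=o(\sqrt{\log\log n/n})$, hence absorbed. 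Next, $\hat\sigma^2_{\dml_1}=\frac1K\sum_k\hat\sigma^2_k=\sigma^2+O(\cdots)$, and since the eigenvalues of $\sigma^2$ are bounded away from $0$ and $\infty$, the map $M\mapsto M^{-1/2}$ is Lipschitz near $\sigma^2$, giving $\hat\sigma_{\dml_1}^{-1}=\sigma^{-1}+O(\cdots)$. Multiplying the expansion of $\hat\theta_{\dml_1}-\theta_0$ by $\hat\sigma_{\dml_1}^{-1}$, the leading term is $-\frac1n\sum_{i=1}^n\sigma^{-1}J_0^{-1}\psi(W_i;\theta_0,\eta_0)$ and the cross terms are controlled using boundedness of $\sigma^{-1}$ together with $\frac1n\sum_{i=1}^n J_0^{-1}\psi(W_i;\theta_0,\eta_0)=O(\sqrt{\log\log n/n})$ (law of the iterated logarithm again); this yields the claim for $j=1$.

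For the $\dml_2$ estimator I would instead re-run the argument of Theorem~\ref{dml-linear} directly on the pooled moment equation \eqref{dml2}: linearity (Assumption~\ref{neyman-orthogonal}(b)) lets one solve it explicitly as $\hat\theta_{\dml_2}=-\hat J_0^{-1}\big(\frac1K\sum_k\frac1{|I_k|}\sum_{i\in I_k}\psi^b(W_i,\hat\eta_k)\big)$ with $\hat J_0=\frac1K\sum_k\hat J_{0k}$, and cross-fitting makes each summand an evaluation of $\psi$ against a nuisance estimate independent of the corresponding $W_i$. The plug-in bias decomposition underlying Theorem~\ref{dml-linear} --- Taylor expansion in $r$, then Neyman near-orthogonality \eqref{orthogonal}, the second-derivative bound \eqref{lambdan'}, and the rate bounds \eqref{rn}, \eqref{rn'} --- therefore goes through fold by fold, each per-fold empirical process $\frac1{|I_k|}\sum_{i\in I_k}\big(\psi(W_i;\theta_0,\hat\eta_k)-\psi(W_i;\theta_0,\eta_0)\big)$ being controlled at rate $r_n'\sqrt{\log\log n/n}$ by the same time-uniform / LIL-type inequalities used there (conditional on $I_k^c$); strong consistency $\hat J_0\to J_0$ (from \eqref{rn} and the LIL) converts $\hat J_0^{-1}$ into $J_0^{-1}$ up to $O(\sqrt{\log\log n/n}+r_n)$, and the same bookkeeping as for $\dml_1$ replaces $\frac1K\sum_k\frac1{|I_k|}\sum_{i\in I_k}$ by $\frac1n\sum_{i=1}^n$; the variance claim follows as before with $\hat J_0$ in place of the per-fold Jacobians. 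I expect the main obstacle to be exactly this recombination step --- verifying that the fluctuation of $|I_k|$ about $n/K$ (which depends on the partitioning scheme) together with the law-of-iterated-logarithm size of the partial sums over a random subset $I_k$ is genuinely of smaller order than $\sqrt{\log\log n/n}$, so that passing from $K$ fold-wise averages to one uniform average over $[n]$ costs nothing in the stated rate.
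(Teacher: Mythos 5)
The paper states Corollary~\ref{seqdml-all} explicitly \emph{without proof}, with the remark that ``analogous results to Theorem~\ref{dml-linear} and Theorem~\ref{var-est} hold for the cross-fitting estimators.'' Your argument supplies precisely the missing details and is, in my view, correct: you apply Theorem~\ref{dml-linear}, Theorem~\ref{var-est}, and Corollary~\ref{combining_all} per fold (using $|I_k|/n\overset{a.s.}{\to}1/K$ so Assumptions~\ref{neyman-orthogonal}--\ref{regularity} apply with $T=|I_k|$), then pay attention to the one non-trivial bookkeeping step --- replacing the fold-weighted average $\frac1K\sum_k\frac1{|I_k|}\sum_{i\in I_k}$ by the uniform average $\frac1n\sum_{i=1}^n$ --- and bound the discrepancy by $O(\log\log n/n)$ via the LIL for both the partial sums $\sum_{i\in I_k}\psi(W_i;\theta_0,\eta_0)$ (a martingale when fold assignment is oblivious to the data) and the fold-size deviation $|K|I_k|-n|$. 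This is $o(\sqrt{\log\log n/n})$ and hence absorbed in the error, as you say, and the deterministic-cycling case is even easier. Your treatment of $\dml_2$ by re-running the proof of Theorem~\ref{dml-linear} on the pooled moment equation, using $\hat J_0=\frac1K\sum_k\hat J_{0k}$, is likewise the expected route.

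Two small remarks. First, the left-hand side of the corollary should read $\hat\sigma_{\dml_j}^{-1}(\hat\theta_{\dml_j}-\theta_0)$ (as in Corollary~\ref{combining_all}); the exponent is missing in the paper's display, and you silently and correctly interpret it as $\hat\sigma_{\dml_j}^{-1}$ throughout. Second, when you multiply the expansion of $\hat\theta_{\dml_1}-\theta_0$ by $\hat\sigma_{\dml_1}^{-1}=\sigma^{-1}+O(\cdot)$, the cross term is (variance-error)$\,\times\,O(\sqrt{\log\log n/n})$; since the variance error in Theorem~\ref{var-est} is itself $o(1)$, this product is $o(\sqrt{\log\log n/n})$ and thus dominated by the LIL term already in the stated error bound --- it would strengthen the write-up to make that dominance explicit rather than folding it into a generic $O(\cdots)\cdot O(\cdots)$.
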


\textbf{Remark}: Corollary~\ref{combining_all} and~\ref{seqdml-all}, while resembling their counterpart results in \cite{chernozhukov2018double}, are distinct in two key aspects. Firstly, the equality in Corollary~\ref{seqdml-all} holds almost surely, with the error terms boasting almost sure validity in rates, while in contrast the usual errors in approximation as in \cite{chernozhukov2018double} are in probability. While it may seem like a technical difference, it is crucial to construct confidence sequences with asymptotic anytime-validity, as can be seen in Theorem~\ref{csdml}. Also, the rate of approximations has an additional $\sqrt{\log\log n}$ in the numerator on the right hand side of Corollary~\ref{seqdml-all}. This is an artefact of the orthogonality and regularity assumptions in Assumption~\ref{neyman-orthogonal} and~\ref{regularity}, as well as the law of iterated logarithm (LIL) as discussed before. Since the LIL is sharp, it is unlikely that the $\sqrt{\log\log n}$ term can be eradicated.

Based on the asymptotically linear expansion in Corollary~\ref{seqdml-all}, we can use the recipe of \cite{waudby2021time} to obtain confidence sequence and confidence regions for $\theta_0$. 

\begin{restatable}{theorem}{confidence}
\label{csdml}
Suppose Assumptions~\ref{neyman-orthogonal} and~\ref{regularity} hold. Then, for any constant $\rho>0$,
\begin{itemize}
\item[(a)] $ $ \vspace{-1.5em}
$$\forall n\ge 1, \left\{\theta\in\R^d: \|\hat\sigma^{-1}_{\dml_j}(\hat\theta_{\dml_j}-\theta)\|^2<\dfrac{2(n\rho^2+1)}{n^2\rho^2}\log\dfrac{(n\rho^2+1)^\frac d2}{\alpha}\right\} $$ constitutes a $(1-\alpha)$-Asymp CS region for $\theta_0$, for $j=1,2$, with approximation rate $$O\left(n^{-(1-\frac 2q)}\bigvee\sqrt{\dfrac{\log\log n}{n}} + (r_n+r_n')\sqrt{\dfrac{\log\log n}{n}} + \lambda_n +\lambda_n'\right).$$
\item[(b)] For any $l\in \R^d$ vector, a $(1-\alpha)$-Asymp CS for the scalar parameter $l^T\theta_0$ is given by $$l^T\hat\theta_{\dml_j} \pm \sqrt{l^T\hat\sigma^{-2}_{\dml_j}l}\sqrt{\dfrac{2n\rho^2+1}{n^2\rho^2}\log\dfrac{(n\rho^2+1)}{\alpha}}$$ for $j=1,2$, with the same approximation rate as in (a).
\end{itemize}
Also, replacing $\rho$ by $\rho_m: = \rho(\hat\sigma^2m \log (m\vee e)) = \sqrt{\dfrac{-2\log\alpha + \log(-2\log\alpha) + 1}{\hat\sigma_m^2 m\log(m\vee e)}}$, where $m$ denotes the first `peeking' time, provides asymptotic Type-I error coverage as in \eqref{type1} at level exactly equal to $\alpha$. 
\end{restatable}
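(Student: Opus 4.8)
The plan is to read Theorem~\ref{csdml} as the DML instantiation of the generic confidence-sequence recipe of \cite{waudby2021time}: Corollary~\ref{seqdml-all} already supplies an \emph{almost surely} valid asymptotically linear representation with i.i.d.\ influence functions, so the work is to check the hypotheses of that recipe and to track the residual errors finely enough that they are of strictly smaller order than the Gaussian-mixture boundary. First I would standardize: put $\xi_i := \sigma^{-1}J_0^{-1}\psi(W_i;\theta_0,\eta_0)$, which by \eqref{moment} are i.i.d.\ mean-zero, by the definition $\sigma^2 = J_0^{-1}\P[\psi\psi^\top](J_0^{-1})^\top$ have covariance $I_d$, and --- using boundedness of $J_0^{-1}$ and $\sigma^{-1}$ (Assumptions~\ref{neyman-orthogonal}(e) and~\ref{regularity}(d)) together with Assumption~\ref{regularity}(b) evaluated at $\eta_0\in\T_n$ --- have a finite $q$-th moment for some $q>2$. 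Corollary~\ref{seqdml-all} then reads $\hat\sigma^{-1}_{\dml_j}(\hat\theta_{\dml_j}-\theta_0) = -\frac1n\sum_{i=1}^n\xi_i + O(a_n)$ a.s., and I would observe that the ``DML-only'' part of $a_n$, namely $(r_n+r_n')\sqrt{\log\log n/n}+\lambda_n+\lambda_n'$, is $O(\delta_n\sqrt{\log\log n/n})$ by Assumptions~\ref{neyman-orthogonal}(d) and~\ref{regularity}(c), whereas the $n^{-(1-2/q)}\vee\sqrt{\log\log n/n}$ part arises only from swapping $\sigma^{-1}$ for $\hat\sigma^{-1}_{\dml_j}$ and, tracked carefully, enters as $(\hat\sigma^{-1}_{\dml_j}-\sigma^{-1})(\hat\theta_{\dml_j}-\theta_0) = O\!\big((n^{-(1-2/q)}\vee\sqrt{\log\log n/n})\,\sqrt{\log\log n/n}\big)$ via Theorem~\ref{var-est} and the law of the iterated logarithm --- the cruder $n^{-(1-2/q)}$ quoted in the \emph{reported} approximation rate being just a conservative bound on the absolute endpoint gap. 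With $\hat\sigma_{\dml_j}\overset{a.s.}{\to}\sigma$ (Theorem~\ref{var-est}) this is precisely the input required by the recipe of \cite{waudby2021time}.

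Next I would assemble the non-asymptotic backbone and transfer it through a strong coupling. For a standard $d$-dimensional Brownian motion $B$, Robbins' method of mixtures (\cite{robbins1970statistical}) gives, for each $\rho>0$, $\P\big(\exists t\ge 0:\ \|B(t)\|^2 \ge \tfrac{2(t\rho^2+1)}{\rho^2}\log\tfrac{(t\rho^2+1)^{d/2}}{\alpha}\big)=\alpha$, so at integer times $\{\,\|B(n)/n\|^2 < \tfrac{2(n\rho^2+1)}{n^2\rho^2}\log\tfrac{(n\rho^2+1)^{d/2}}{\alpha}\,\}$ is an exact $(1-\alpha)$-confidence sequence for the (zero) mean of i.i.d.\ standard Gaussian vectors. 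I would couple $S_n:=\sum_{i\le n}\xi_i$ to such a $B$ by a Strassen/KMT-type strong invariance principle --- available with $\|S_n-B(n)\| = O(\sqrt{n\log\log n})$ a.s.\ already under finite variance (and $o(n^{1/q})$ under the $q$-th moment) --- then rescale by $\sigma$ and recenter at $\theta_0$ to obtain a genuine but unobservable non-asymptotic $(1-\alpha)$-CS $C_n^\star$ of the stated shape. The target region $\tilde C_n$ differs from $C_n^\star$ only through the a.s.\ remainder $O(a_n)$, the coupling error $\|\tfrac1n S_n - B(n)/n\| = O(\sqrt{\log\log n/n})$, and the centre/radius distortion between $\hat\sigma^{-1}_{\dml_j}$ and $\sigma$, which by the previous paragraph is $O\!\big((n^{-(1-2/q)}\vee\sqrt{\log\log n/n})\,\sqrt{\log\log n/n}\big)$; since the boundary half-width is of order $\sqrt{\log n/n}$, which is of strictly larger order than every one of these, we get $\Lambda(\tilde C_n\,\Delta\,C_n^\star)/\Lambda(C_n^\star)\to 0$ a.s. This establishes (a), with the reported approximation rate equal to the sum of the (upper bounds on the) three contributing orders, i.e.\ the displayed expression.

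Part (b) is the scalar case, applied to the real i.i.d.\ sequence $\langle l,\xi_i\rangle$ (equivalently, projecting the region of (a) onto the direction $l$): the argument of the previous paragraph reproduces verbatim an AsympCS of the form $l^\top\hat\theta_{\dml_j}$ plus-or-minus the scalar Robbins half-width rescaled by the estimated standard deviation of $l^\top\hat\theta_{\dml_j}$, with the same approximation rate. For the closing statement, $\rho_m$ is, to leading order, the mixture parameter that minimises the Robbins boundary at sample size $m$; with this tuning the underlying Gaussian-random-walk result sharpens from ``miscoverage $\le\alpha$ for all $n\ge1$'' to ``miscoverage over $\{n\ge m\}$ tends to $\alpha$ as $m\to\infty$'' --- the coverage budget spent on $[1,m)$ being released in the limit --- which is exactly \eqref{type1} holding with equality at level $\alpha$ (the Gaussian computation underlying \cite{waudby2021time,howard2021time}). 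Since $\hat\sigma^2_m\overset{a.s.}{\to}\sigma^2$, $\rho_m$ is asymptotically the correct tuning, and the error terms controlled above remain negligible relative to this $m$-anchored boundary, so the conclusion passes to $\{\tilde C_n(m)\}_{n\ge m}$.

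The crux is the error bookkeeping around the strong coupling rather than the invocation of the recipe: one must confirm that \emph{every} discrepancy --- the structured DML plug-in remainder of Corollary~\ref{seqdml-all}, the variance-estimation error of Theorem~\ref{var-est}, and the partial-sum-to-Brownian-motion coupling error --- is of strictly smaller order than the mixture half-width, so that $C_n^\star$ and $\tilde C_n$ agree in the sense of Definition~\ref{asymp-cs} and not merely to leading order. The coupling step is where the finite $q$-th moment of $\psi(W;\theta_0,\eta_0)$ is indispensable and where one must cite a strong-approximation theorem with an adequate almost-sure rate; the $\sqrt{\log\log n}$ factors pervading the remainders are the law-of-iterated-logarithm artefact flagged in the paper's remark and are, crucially, $o(\sqrt{\log n})$, hence harmless against the $\sqrt{\log n/n}$ boundary. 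Beyond that, everything reduces to the already-established Corollary~\ref{seqdml-all} and Theorem~\ref{var-est}.
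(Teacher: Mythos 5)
Your proposal follows essentially the same route as the paper's proof: standardize to the i.i.d. influence functions $\sigma^{-1}J_0^{-1}\psi(W_i;\theta_0,\eta_0)$, invoke a Strassen-type strong invariance principle to couple the partial sums to a $d$-dimensional Brownian motion, apply Robbins' Gaussian-mixture boundary (the paper's Lemma~\ref{csgaussian}), and verify that the DML remainder, variance-estimation error, and coupling error are all $o(\sqrt{\log n/n})$ against the boundary's $O(\sqrt{\log n/n})$ half-width. Your finer bookkeeping of where the $n^{-(1-2/q)}$ term enters and the sketch of the $\rho_m$ tuning are elaborations of the same argument rather than a different approach; the only small slip is the suggestion that the finite $q$-th moment is "indispensable" in the coupling step — Strassen's invariance principle already gives the needed $o(\sqrt{\log\log n/n})$ rate under finite variance, and the $q>2$ moment is really exploited in Theorem~\ref{var-est}.
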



Before illustrating the confidence sequences in Theorem~\ref{csdml}, it is important to note how easily the leap from regular confidence intervals to anytime-valid confidence sequences is achieved. With only mildly stronger conditions than those in \cite{chernozhukov2018double}, as outlined in Assumptions~\ref{neyman-orthogonal} and~\ref{regularity}, one can secure asymptotic anytime-valid inference and its associated benefits. Therefore, if one satisfies the conditions of \cite{chernozhukov2018double}, the additional effort for the anytime-valid guarantee is minimal and yields substantial advantages such as continuous monitoring and data-driven stopping.

\section{Inference in Partially Identified Observational Studies} \label{partial}

Consider the situation in Example~\ref{compulsory}. Under the standard assumption of unconfoundedness and other causal assumptions, one can identify and estimate the average treatment effect. More formally, consider a set of potential outcomes $Y(0), Y(1)$ and a treatment $A$, and the observed realization to be $Y = AY(1) + (1-A)Y(0)$, and a set of potential covariates (confounders) $X\in\mathcal{X}$ such that $0<\varepsilon<\P(A=1|X)<1-\varepsilon$. The standard no unmeasured confounding (NUC) assumption is given by $(Y(0),Y(1))\indep A|X$.
The augmented inverse probability weighting (AIPW) estimator is a common choice in estimating the ATE under NUC, which is known to have the Neyman-orthogonality property as in Definition~\ref{orthogonal-pure}, leading to double robust estimation. \cite{waudby2021time} themselves considers the anytime-valid inference using the AIPW estimator, and delineates conditions for the strong guarantee. However, their conditions are specifically catered to the AIPW estimator, and it is unclear how to extend the assumptions on more general estimators to obtain an asymptotically valid representation that holds almost surely. To illustrate this, consider the unconfoundedness assumption to be violated, which is often a potential concern in observational studies (\cite{rosenbaum2010design}). One often imagines an unmeasured confounder $U\in \mathcal{U}$ such that NUC may be generalized as:
\begin{assumption}[Unobserved-confounding]
    $(Y(0), Y(1))\indep A|X,U$. \label{uc}
\end{assumption}
However, as $U$ is unmeasured, as is often the case in most observational studies, one cannot point-identify the ATE without additional assumptions. The NUC assumption in an observational study, while acting as a currency for buying information (\cite{coombs1964theory}), may not necessarily be credible, and do not manifest testability from the observed data. In fact, \cite{bekerman2024planning} illustrate how an outcome may fail to remain significant even when accounting for a small deviation from NUC, and hence robustness of the ATE estimate for mild deviations from NUC is indispensable in observational studies for trustworthy inference. In such situations, one may be more comfortable providing bounds on the ATE in a milder assumption framework, which allows for a more credible inference. One such model is based on the ideas of sensitivity analysis of Rosenbaum (\cite{rosenbaum2002observational, rosenbaum2010design}) as in Assumption~\ref{rosenmodel}.

\begin{assumption}[$\Gamma$-selection bias model] \label{rosenmodel}
    With $1\le \Gamma<\infty$ \begin{align*}\dfrac{1}{\Gamma}\le \dfrac{\P(A=1|X,U=u)/\P(A=0|X,U=u')}{\P(A=0|X,U=u)/\P(A=1|X,U=u')}\le \Gamma\ \ \forall u, u'\in\mathcal{U} \text{ almost surely}. 
    \end{align*}
\end{assumption}

Assumption~\ref{rosenmodel} limits how much the likelihood of receiving treatment can differ among similar units, based on their measured characteristics, using the odds ratio and a sensitivity parameter $\Gamma$. Note that $\Gamma =1$ brings one back to NUC, while $\Gamma =\infty$ leaves the influence of $U$ unrestricted. Realistically, one can hope for obtaining meaningful inference for some fixed $\Gamma>1$ based on bounds that the model~\ref{rosenmodel} implies on the ATE. 

\cite{yadlowsky2018bounds} showed that under Assumption~\ref{uc} and~\ref{rosenmodel}, that $\E[Y(1)]$ can be bounded below sharply by $\mu_1^- = \E[AY+(1-A)g_1(X)]$, where $g_1(X)$ solves the optimization problem 
\begin{equation}
    \min_{g(\cdot)} \E[(Y-g(X))^2_+ + \Gamma(Y- g(X))^2_-|Z=1], \label{regbound}
\end{equation}
where $a_+ = \max\{a,0\}$ and $a_- = -\min\{a,0\}$. Analogous results hold for the upper bound on $\E[Y(0)]$ by $\mu_0+$ replacing $\Gamma$ by $\Gamma^{-1}$ and the conditioning event by $Z=0$ in \eqref{regbound}. $\mu^-_1 - \mu_0^+$ thus creates a lower bound on the ATE $\E[Y(1) - Y(0)]$. The upper bound on the ATE via $\mu_1^+$ and $\mu_0^-$ is similar. 

Before moving on to construction of AsympCS for the ATE, it must be emphasized how the utility of anytime-valid inference is all the more pronounced for any partial identification problem. To take this example, unless $\Gamma=1$ (which is situation of NUC and has been captured by the AIPW estimator), $\mu_1^- -\mu_0^+ < \mu_1^+ - \mu_0^-$, and the inequality being strict implies that the ATE can never be point identified, even with infinite data $(Y_i,A_i,X_i)_{i\ge 1}$. Thus the benefits of collecting more data hits a ceiling in such situations, contrary to traditional identifiable estimands. Now the gap between the upper and lower bound, although estimable, is not known prior to looking at the data, and hence fixing a sample size apriori for meaningful downstream inference is impractical without peeking at the data. The asymptotic anytime-valid framework thus allows one to stop expensive data collection as soon as one obtains ``good enough" inference (eg: bounds are tight enough, bounds seem to have converged, confidence sequence does not contain 0, etc.) and is hence especially catered in efficient handling of such situations, even in observational settings. 

For the rest of this section we focus on estimating $\mu_1^-$, as the remaining parameters are estimated similarly. Let $W = (Y,A,X)$, and consider the functional 
\begin{equation}
\psi(W; \mu_1^-, \eta) = AY + (1-A)g_1(X) + A\dfrac{(Y-g_1(X))_+ -\Gamma(Y-g_1(X))_-}{\nu(X)}\dfrac{1-e(X)}{e(X)} - \mu^1_- \label{patefunctional}
\end{equation}
as demonsrated by \cite{yadlowsky2018bounds}, where $\eta = (g_1,e,\nu)$ is the set of nuisance parameters, where $g_1$ is as described in \eqref{regbound}, $e(X) = \P(A=1|X)$ is the propensity score, and $\nu$ is an additional nuisance parameter given by
\begin{align*}
    \nu(X) = \P(Y\ge g_1(X)|A=1,X) + \Gamma\P(Y<g_1(X)|A=1,X).
\end{align*}
We consider a split-sampling framework similar to what we discussed in Section~\ref{main}, where we estimate the nuisance parameters from one part of the sample, and equate the empirical expectation of $\psi$ from \eqref{patefunctional} in the remaining sample with plugin estimates for $\eta$ to $0$, to obtain $\hat\mu_1^-$. We assume the following strong consistency conditions on $\hat\mu^1_-$ and the underlying law for $W$:

\begin{assumption}
    \label{partial-ass}
    \begin{itemize}
        \item[(a)] (Regularity Condition) $\E[Y(1)^q] \le C_q$ for some $q>2$, and $Y(1)$ admits a uniformly bounded conditional density (given $X=x$ and $Z=1$) with respect to the Lebesgue measure. 
        \item[(b)] (Boundedness Condition) $\exists \varepsilon>0$ such that $e_1(X)\in [\varepsilon, 1-\varepsilon]$ a.s., $\hat e_1(X)\in [\varepsilon, 1-\varepsilon]$ for all but finitely many $n$. 
        \item[(c)] (Strong Consistency Condition) $\|\hat g_1(\cdot) - g_1(\cdot)\|_{1} \overset{a.s.}{\to} 0$, $\hat\eta_1\to \eta_1$ almost surely and $\|\hat\eta_1(\cdot) - \eta_1(\cdot)\|_{\infty} = O(1)$.
        \item[(d)] (Rate Condition) $\|\hat\eta_1 - \eta_1\|_2 = o\left(\left[\dfrac{\log\log n}{n}\right]^{\frac 14}\right)$.
    \end{itemize}
\end{assumption}


Assumption~\ref{partial-ass}(a) is a regularity condition on the underlying distribution of the potential outcomes, while (b) is a strengthening of the positivity condition to ensure the propensity scores are mildly well-behaved. (c) and (d) on the other hand impose conditions on the estimator being used for the nuisance parameters. One can view the estimation of the parameter $g$ as an M-estimation problem based on \eqref{regbound}, strong consistency properties of which has been considered in \cite{schreuder2020nonasymptotic} in a low-dimensional setting. In a high-dimensional setting, common estimation choices are sieve-based regression (\cite{geman1982nonparametric}) or kernel-based methods (\cite{devroye1981almost}), the almost sure properties of which have been well studied in the special case of $\Gamma = 1$ (\cite{lugosi1995nonparametric, gordon1984almost}).  Strong consistency rates for such nonparametric regressions have also been considered in (\cite{chen1990extension}), which prove a LIL rate for nonparametric regressions. On the other hand, $\hat e$ is essentially a propensity score model, and can be estimated at a LIL rate based on \cite{yang2019law} under a correctly specified GLM model. Finally, estimation for $\nu$ can also be treated as an M-estimation problem, or can be considered as problems pertaining to functionals of the conditional distribution of $Y|X,Z=1$, and hence allow the use of smoothed Nadaraya-Watson estimators or rank nearest neighbors, the LIL properties of which have been established in \cite{mehra2005pointwise}.

\begin{restatable}{theorem}{partialthm}
    \label{partial-theorem}
    Suppose the conditions of Assumption~\ref{partial-ass} hold, and $\hat\sigma^2_{1-} = \frac 1n\sum_{k=1}^K\sum_{i\in I_k} \psi(W;\hat\mu_1^-,\hat\eta)^2$. Then for any $\rho>0$,
    $$\hat\mu_1^- \pm \hat\sigma_{1-}\sqrt{\dfrac{2n\rho^2+1}{n^2\rho^2}\log\dfrac{(n\rho^2+1)}{\alpha}}$$ constitutes a $(1-\alpha)$ AsympCS for $\mu_1^-$.
\end{restatable}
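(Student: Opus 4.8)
The plan is to deduce Theorem~\ref{partial-theorem} from the general apparatus of Section~\ref{main}: I will check that the functional $\psi$ in \eqref{patefunctional}, together with an appropriately chosen nuisance realization set $\mathcal{T}_n$ shrinking to $\eta_0 = (g_1,e,\nu)$, satisfies Assumptions~\ref{neyman-orthogonal} and~\ref{regularity}, and then invoke Theorem~\ref{csdml} (via Corollary~\ref{seqdml-all}) in the scalar case $d=1$. Several pieces are essentially free. The score $\psi$ is affine in $\mu_1^-$ with $\psi^a(W;\eta) \equiv -1$ and $\psi^b$ the remaining terms, so \eqref{linear_split} holds, $J_0 = \E[\psi^a(W;\eta_0)] = -1$ has its unique singular value equal to $1$, giving Assumption~\ref{neyman-orthogonal}(b) and~(e); moreover the cross-fit variance estimator $\hat\sigma^2_{\dml_2}$ of Definition~\ref{cross-fit} collapses (up to the asymptotically negligible fold-size imbalance) to $\hat\sigma^2_{1-} = \frac1n\sum_k\sum_{i\in I_k}\psi(W_i;\hat\mu_1^-,\hat\eta_k)^2$ precisely because every $\hat J_{0k} \equiv -1$, which is why the advertised interval is literally the $j=2$ instance of Theorem~\ref{csdml}(b). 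The moment identity \eqref{moment}, $\E[\psi(W;\mu_1^-,\eta_0)]=0$ (Assumption~\ref{neyman-orthogonal}(a)), and \emph{exact} Neyman orthogonality \eqref{orthogonal-pure}, hence $\lambda_n = 0 \le \delta_n\sqrt{\log\log n/n}$ (Assumption~\ref{neyman-orthogonal}(d) with \eqref{orthogonal}), are inherited from \cite{yadlowsky2018bounds}, who constructed $\psi$ as the orthogonalized score for the sharp lower bound $\mu_1^-$.

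Next I would fix the realization set. Take $\mathcal{T}_n = \{\eta = (g,e',\nu'): e'(X)\in[\varepsilon,1-\varepsilon],\ \|\eta-\eta_0\|_\infty \le C,\ \|g-g_1\|_1 \le \epsilon_n,\ \|\eta-\eta_0\|_2 \le \delta_n\}$, where $\epsilon_n\downarrow 0$ and $\delta_n\downarrow 0$ are chosen slowly enough that the almost sure rates of Assumption~\ref{partial-ass}(b)--(d) force $\hat\eta_k \in \mathcal{T}_n$ for all but finitely many $n$; this is Assumption~\ref{regularity}(a). Assumption~\ref{regularity}(b) holds for some $q>2$ since $\psi^a\equiv -1$, $\nu \ge 1$ so $1/\nu$ is bounded, $e$ is bounded away from $0$ and $1$, $\|\eta-\eta_0\|_\infty = O(1)$, and $\E[Y(1)^q]\le C_q$ with $g_1$ (a conditional weighted quantile of $Y|X,Z=1$) inheriting a finite $q$-th moment; a truncated-Minkowski bound then gives $m_n \le c_1$ uniformly. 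For Assumption~\ref{regularity}(c), $r_n = 0$ trivially because $\psi^a$ does not depend on $\eta$, and $r_n' \lesssim \|\eta-\eta_0\|_2 \le \delta_n$ because on the bounded set $\mathcal{T}_n$ the maps $t\mapsto(t)_\pm$ are $1$-Lipschitz and $e\mapsto(1-e)/e$, $\nu\mapsto 1/\nu$ are Lipschitz away from the boundary, so $\eta\mapsto\psi(W;\mu_1^-,\eta)$ is $L^2(P)$-Lipschitz in $\eta$. Assumption~\ref{regularity}(d), $\E[\psi(W;\mu_1^-,\eta_0)^2] \ge c_0$, holds unless the scalar score is a.s.\ constant, which it is not; and the twice continuous Gateaux differentiability in Assumption~\ref{neyman-orthogonal}(c) holds because the kinks $(\cdot)_\pm$ are smoothed out once integrated against the uniformly bounded conditional density of $Y(1)$ provided by Assumption~\ref{partial-ass}(a).

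The main obstacle is the second-order remainder condition \eqref{lambdan'}, namely $\lambda_n' = \sup_{r,\eta}\|\frac{d^2}{dr^2}\E[\psi(W;\mu_1^-;\eta_0+r(\eta-\eta_0))]\| \le \delta_n\sqrt{\log\log n/n}$. Here I would differentiate $\E[\psi]$ twice along the path $\eta_0 + r(\eta-\eta_0)$ and track every surviving term. Besides smooth cross terms in the errors of $e$ and $\nu$, the second derivative of $A(Y-g(X))_\pm$ in the direction $g-g_1$ produces a term concentrated on $\{Y = g(X)\}$; integrating it against the uniformly bounded conditional density of $Y(1)$ converts it into a bounded multiple of $(g(X)-g_1(X))^2$. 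Because $\psi$ is built from an influence-function-type correction, the first-order terms cancel (this is the orthogonality already used) and every remaining term is genuinely quadratic in the nuisance errors -- the mixed-bias structure of \cite{yadlowsky2018bounds} -- so $\lambda_n' = O(\|\eta-\eta_0\|_2^2) = O(\delta_n^2)$ uniformly over $\mathcal{T}_n$. Since Assumption~\ref{partial-ass}(d) gives $\|\hat\eta_k-\eta_0\|_2 = o((\log\log n/n)^{1/4})$, one may choose $\delta_n$ so that $\delta_n^2 = o(\sqrt{\log\log n/n})$, and \eqref{lambdan'} follows. With Assumptions~\ref{neyman-orthogonal} and~\ref{regularity} verified, Corollary~\ref{seqdml-all} supplies the almost sure linear representation for the $K$-fold cross-fit estimator $\hat\mu_1^-$ (which is the bridge from the single-split Theorems~\ref{dml-linear}--\ref{var-est} to the cross-fit version used in the statement), and Theorem~\ref{csdml}(b) with $l=1$, $d=1$, $j=2$ then yields exactly the stated $(1-\alpha)$ AsympCS $\hat\mu_1^- \pm \hat\sigma_{1-}\sqrt{(2n\rho^2+1)/(n^2\rho^2)\,\log((n\rho^2+1)/\alpha)}$, and, after the $\rho_m$ reparametrization, the time-uniform coverage~\eqref{type1} as well.
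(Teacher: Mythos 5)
Your proposal follows essentially the same route as the paper's proof: you verify Assumptions~\ref{neyman-orthogonal} and~\ref{regularity} for the \cite{yadlowsky2018bounds} orthogonalized score, observing $\psi^a\equiv -1$ and $J_0=-1$, constructing a shrinking $\mathcal{T}_n$ with $L_2$-radius governed by Assumption~\ref{partial-ass}(d), bounding $\lambda_n'$ via the quadratic/mixed-bias structure, and then invoking Theorem~\ref{csdml}(b) with $d=1$. The individual items (bounded conditional density of $Y(1)$ smoothing the $(\cdot)_\pm$ kinks for twice Gateaux differentiability and the second-order bound; $L^2$-Lipschitzness giving $r_n'$; non-degeneracy from the conditional variance of $Y(1)$) match the paper's verifications, which simply defer the detailed displays to Section~E.2 of \cite{yadlowsky2018bounds}.
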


Analogous results also hold for the remaining parameters $\mu_0^+$, $\mu_1^+$ and $\mu_0^-$. Theorem~\ref{partial-theorem} thus allows us to construct confidence sequences for the partially identified ATE, and the proof of the theorem, as can be seen from Appendix~\ref{partialproof}, follows just by verifying the conditions of Assumptions~\ref{neyman-orthogonal} and~\ref{regularity}.

\section{Inference in an Instrumental Variable based Online Setting} \label{iv-late}

We turn our attention to situations in Example~\ref{coupon} or~\ref{drug}. In both these situations, while experimental allocation of treatments has been randomized, uptake of treatments is a personal choice for the units, and hence they may not comply with the treatment assigned to them. Such uptake of treatment may be associated with the outcome of interest (someone may use a coupon only if they consider it worth to themselves, or only health-seeking people adhere to the uptake of the new drug), and hence does not allow identification of causal estimands without further assumption. Under such situations, an important technique for estimating a causal effect is the use of instrumental variables, as alluded to in Section~\ref{introduction}. Formally, let $Z\in\mathcal{Z}$ be a variable (here treatment status assigned by the experimenter), $A$ the true treatment status, and let $Y$ denote the realized outcome. Let $U$ be a potential unmeasured confounder that may affect both $Y$ and $A$, and $X$ as before denote measured covariates. 
\begin{assumption}[IV-assumptions] \label{stdivass}
For $Z$ to be a valid IV, one must have 
(i) Exclusion restriction: $Z\indep Y|(A,U,X)$. (ii) Independence: $Z\indep U|X$. (iii) $Z\not\indep A|X$.
\end{assumption}

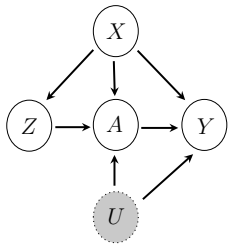
\begin{figure}[!ht]
\centering
\tikzset{every picture/.style={line width=0.75pt}} 

\scalebox{0.5}{
\begin{tikzpicture}[x=0.5pt,y=0.5pt,yscale=-1,xscale=1]

\draw   (17,206.77) .. controls (17,185.44) and (32.22,168.15) .. (51,168.15) .. controls (69.78,168.15) and (85,185.44) .. (85,206.77) .. controls (85,228.11) and (69.78,245.4) .. (51,245.4) .. controls (32.22,245.4) and (17,228.11) .. (17,206.77) -- cycle ;
\draw   (148,205.64) .. controls (148,184.3) and (163.22,167.01) .. (182,167.01) .. controls (200.78,167.01) and (216,184.3) .. (216,205.64) .. controls (216,226.97) and (200.78,244.26) .. (182,244.26) .. controls (163.22,244.26) and (148,226.97) .. (148,205.64) -- cycle ;
\draw  [fill={rgb, 255:red, 155; green, 155; blue, 155 }  ,fill opacity=0.54 ][dash pattern={on 0.84pt off 2.51pt}] (148,345.37) .. controls (148,324.04) and (163.22,306.75) .. (182,306.75) .. controls (200.78,306.75) and (216,324.04) .. (216,345.37) .. controls (216,366.71) and (200.78,384) .. (182,384) .. controls (163.22,384) and (148,366.71) .. (148,345.37) -- cycle ;
\draw   (282,205.64) .. controls (282,184.3) and (297.22,167.01) .. (316,167.01) .. controls (334.78,167.01) and (350,184.3) .. (350,205.64) .. controls (350,226.97) and (334.78,244.26) .. (316,244.26) .. controls (297.22,244.26) and (282,226.97) .. (282,205.64) -- cycle ;
\draw   (148,63.63) .. controls (148,42.29) and (163.22,25) .. (182,25) .. controls (200.78,25) and (216,42.29) .. (216,63.63) .. controls (216,84.96) and (200.78,102.25) .. (182,102.25) .. controls (163.22,102.25) and (148,84.96) .. (148,63.63) -- cycle ;
\draw [line width=1.5]    (90.5,209.04) -- (140,209.04) ;
\draw [shift={(144,209.04)}, rotate = 180] [fill={rgb, 255:red, 0; green, 0; blue, 0 }  ][line width=0.08]  [draw opacity=0] (13.4,-6.43) -- (0,0) -- (13.4,6.44) -- (8.9,0) -- cycle    ;
\draw [line width=1.5]    (220.5,210.18) -- (272.5,210.18) ;
\draw [shift={(276.5,210.18)}, rotate = 180] [fill={rgb, 255:red, 0; green, 0; blue, 0 }  ][line width=0.08]  [draw opacity=0] (13.4,-6.43) -- (0,0) -- (13.4,6.44) -- (8.9,0) -- cycle    ;
\draw [line width=1.5]    (180,297.66) -- (180,255.08) ;
\draw [shift={(180,251.08)}, rotate = 90] [fill={rgb, 255:red, 0; green, 0; blue, 0 }  ][line width=0.08]  [draw opacity=0] (13.4,-6.43) -- (0,0) -- (13.4,6.44) -- (8.9,0) -- cycle    ;
\draw [line width=1.5]    (179,109.07) -- (180.39,158) ;
\draw [shift={(180.5,162)}, rotate = 268.38] [fill={rgb, 255:red, 0; green, 0; blue, 0 }  ][line width=0.08]  [draw opacity=0] (13.4,-6.43) -- (0,0) -- (13.4,6.44) -- (8.9,0) -- cycle    ;
\draw [line width=1.5]    (223,318.11) -- (296.02,252.61) ;
\draw [shift={(299,249.94)}, rotate = 138.11] [fill={rgb, 255:red, 0; green, 0; blue, 0 }  ][line width=0.08]  [draw opacity=0] (13.4,-6.43) -- (0,0) -- (13.4,6.44) -- (8.9,0) -- cycle    ;
\draw [line width=1.5]    (215,93.16) -- (284.23,165.26) ;
\draw [shift={(287,168.15)}, rotate = 226.16] [fill={rgb, 255:red, 0; green, 0; blue, 0 }  ][line width=0.08]  [draw opacity=0] (13.4,-6.43) -- (0,0) -- (13.4,6.44) -- (8.9,0) -- cycle    ;
\draw [line width=1.5]    (148,92.03) -- (79.73,165.22) ;
\draw [shift={(77,168.15)}, rotate = 313.01] [fill={rgb, 255:red, 0; green, 0; blue, 0 }  ][line width=0.08]  [draw opacity=0] (13.4,-6.43) -- (0,0) -- (13.4,6.44) -- (8.9,0) -- cycle    ;

\draw (165,49) node [anchor=north west][inner sep=0.75pt]  [font=\LARGE] [align=left] {$\displaystyle X$};
\draw (165,190) node [anchor=north west][inner sep=0.75pt]  [font=\LARGE] [align=left] {$\displaystyle A$};
\draw (36,193) node [anchor=north west][inner sep=0.75pt]  [font=\LARGE] [align=left] {$\displaystyle Z$};
\draw (303,193) node [anchor=north west][inner sep=0.75pt]  [font=\LARGE] [align=left] {$\displaystyle Y$};
\draw (167,330) node [anchor=north west][inner sep=0.75pt]  [font=\LARGE] [align=left] {$\displaystyle U$};

\end{tikzpicture}}
\label{iv-dag}
\caption{DAG to represent IV Assumption~\ref{stdivass}}
\end{figure}

Assumption~\ref{stdivass}, as encoded in Figure~\ref{iv-dag}, allows us to define the potential outcomes $Y(a)$ based on the level of $A=a$, and $Y = Y(A)$ by the exclusion restriction (\cite{wang2018bounded}). In the context of Example~\ref{coupon} and~\ref{drug}, for instance, the group assignment the units were assigned to (ie coupon vs no coupon, treatment vs placebo), may be assumed to be a valid instrument satisfying Assumption~\ref{stdivass}. However, Assumption~\ref{stdivass} is not adequate to nonparametrically identify a causal estimand. The classic way to deal with it is the monotonicity assumption (\cite{angrist1996identification}), which in the case of binary treatment and binary instrument can be encoded by
\begin{assumption}[IV-monotonicity] \label{monoton}
    $A(Z=1)\ge A(Z=0)$.
\end{assumption}
Assumption~\ref{monoton} essentially means that there are no defiers, i.e., for no unit is it the case that $A = 1$ when $Z= 0$. In Example~\ref{coupon} or~\ref{drug}, Assumption~\ref{monoton} holds by design, as no unit can enroll itself into treatment if not assigned treatment. Thus, since non-compliance can occur only in a particular way, one can identify a meaningful causal estimand.

Under Assumption~\ref{uc},~\ref{stdivass} and~\ref{monoton}, one can identify the local average treatment effect (LATE) given by $\theta_0 = \E[Y(1) - Y(0)| A(1)> A(0)]$, that is, the average treatment effect on the units that comply with the treatment (ie, those who take treatment if assigned so, and not if not), via the Wald estimand $$\dfrac{\E[Y|Z=1] - \E[Y|Z=0]}{\E[A|Z=1] - \E[A|Z=0]} = \dfrac{\E[\E[Y|Z=1,X] - \E[Y|Z=0,X]]}{\E[\E[A|Z=1,X] - \E[A|Z=0,X]]}.$$ 

A business that is concerned about the expenses of running an experiment, or a drug trial concerned about medical ethics, may be interested to evaluate the effect of treatment on the outcome of interest, and hence would want to stop the experiment as soon as possible: to deploy the treatment if it's beneficial conclusively, and to stop if not. In such settings, one may be interested in deploying anytime-valid inference to obtain confidence sequences for the LATE.

Consider the data $\{W_i\}_{i\ge 1}$ iid, with $W_i = (Y_i,A_i,Z_i,X_i)$, nuisance parameters $\eta = (g_t,g_c,m_t,m_c, e)$ and the functional based on \cite{tan2006regression}:
\begin{equation}
    \begin{aligned}
   \psi(W; \theta,\eta) &= g_t(X) - g_c(X) + \dfrac{Z(Y-g_t(X))}{e(X)} - \dfrac{(1-Z)(Y-g_c(X))}{1-e(X)} \\
   &- \theta \left[ m_t(X) - m_c(X) + \dfrac{Z(A-m_t(X))}{e(X)} - \dfrac{(1-Z)(A-m_c(X))}{1-e(X)}\right]. 
   \end{aligned}
   \label{iv-functional}
\end{equation}

Here $g_t(X) = \E[Y|Z=1,X]$, $g_c(X) = \E[Y|Z=0,X]$, $m_t(X) = \E[A|Z=1,X]$, $m_c(X) = \E[A|Z=0,X]$, $e(X) = \E[Z|X]$ are the relevant nuisance parameters, which is estimated using ML methods from part of the sample as discussed in Section~\ref{main}, and $\theta$, the parameter of interest that identifies the LATE, is obtained by equating the empirical expectation of $\psi$ from \eqref{iv-functional} on the remaining sample with plugin estimates for $\eta$, to zero. We assume the following conditions on the estimator for nuisance parameters $\hat\eta$, and the underlying law governing $W$.

\begin{assumption}
    \label{iv-ass}
    \begin{itemize}
        \item[(a)] (Regularity Condition) For $a=0,1$, $\E[Y(a)^q]\le C_q$ for some $q\ge 4$.
        \item[(b)] (Boundedness Condition) $\exists \varepsilon>0$ such that $e_0(X)\in [\varepsilon, 1-\varepsilon]$ a.s., $\|g_{a,0}\|_q = O(1)$, $a=0,1$.
        \item[(c)] (Identification Condition) $\E[m_{t,0}(X) - m_{c,0}(X)]\ge c$.
        \item[(d)] (Strong-consistency Condition) $\|\hat\eta(\cdot) - \eta_0(\cdot)\|_2\overset{a.s.}{\to} 0$, $\hat\eta\to\eta_0$ almost surely and $\|\hat\eta(\cdot) - \eta_0(\cdot)\|_\infty = O(1)$.
        \item [(e)] (Rate Condition) $$\|\hat e- e_0\|_2\left(\sum_{a\in \{t,c\}} \|\hat g_a(\cdot) - g_{a,0}(\cdot)\|_2 + \sum_{a\in \{t,c\}} \|\hat m_a(\cdot) - m_{a,0}(\cdot)\|_2\right) = o\left(\left[\dfrac{\log\log n}{n}\right]^\frac12\right).$$
    \end{itemize}
\end{assumption}

Assumption~\ref{iv-ass} (a) is a regularity condition ensuring that the counterfactual laws are well behaved, while (b) and (c) solidifies conditions on positivity and identifiability based on the propensity score and the IV-strength respectively. On the other hand, (d) and (e) imposes restrictions on the estimators being used. However, $m_t,m_c, g_t$ and $g_c$ are all real valued regression estimators, and as discussed in the discourse ensuing Assumption~\ref{partial-ass}, such parameters can be strongly estimated at LIL rates using kernel based methods or partitioning techniques. One may also use regression models for estimating the propensity scores $\hat e$, or use a GLM specification as has been considered in \cite{yang2019law}. Note that, in contrast with Assumption~\ref{partial-ass}, where the $L_2$ error in each of the parameters needed to be reasonably small (although not at the parametric rate), here the error condition is of the form of a mixed bias, and thus the error in estimation in the outcome regression functions can be compensated by that in the propensity score. 

\begin{restatable}{theorem}{iv}
    \label{iv-theorem}
    Suppose the conditions of Assumptions ~\ref{iv-ass} are met. Then, with $\hat\sigma^2$ as in Theorem~\ref{partial-theorem}, and for any $\rho>0$, $$\hat\theta\pm \hat\sigma\sqrt{\dfrac{2n\rho^2+1}{n^2\rho^2}\log\dfrac{n\rho^2+1}{\alpha}}$$ constitutes an $(1-\alpha)$AsympCS for $\theta_0$.
\end{restatable}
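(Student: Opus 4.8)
The plan is to prove Theorem~\ref{iv-theorem} by the same route used for Theorem~\ref{partial-theorem}: show that the \cite{tan2006regression}-style functional $\psi$ of~\eqref{iv-functional}, together with a law satisfying Assumption~\ref{iv-ass}, instantiates the abstract template of Assumptions~\ref{neyman-orthogonal} and~\ref{regularity} with $d=1$, and then invoke the scalar case of Theorem~\ref{csdml}(b) with $l=1$. The first move is to fix the nuisance realization set: let $\T_n$ be an $L_2$-ball around $\eta_0 = (g_{t,0}, g_{c,0}, m_{t,0}, m_{c,0}, e_0)$ whose $e$-coordinate is additionally constrained to $[\varepsilon, 1-\varepsilon]$ and whose radius is a deterministic envelope for the almost sure rates supplied by Assumption~\ref{iv-ass}(d) and~(e). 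By those two conditions, $\hat\eta$ lies in $\T_n$ for all but finitely many $n$ almost surely, which is Assumption~\ref{regularity}(a), and $\T_n$ shrinks around $\eta_0$ fast enough to control the rate quantities below.

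Verifying Assumption~\ref{neyman-orthogonal} is mostly bookkeeping. Linearity~\eqref{linear_split} is read off directly, with $\psi^a(W;\eta) = -[m_t(X) - m_c(X) + Z(A-m_t(X))/e(X) - (1-Z)(A-m_c(X))/(1-e(X))]$ and $\psi^b$ the analogous expression in $g_t, g_c, Y$; neither depends on $\theta$. The map $\eta\mapsto\E[\psi(W;\theta_0,\eta)]$ is twice continuously Gateaux differentiable because the nuisances enter polynomially except through $e\mapsto 1/e$ and $e\mapsto 1/(1-e)$, which are smooth on $[\varepsilon,1-\varepsilon]$. Identification~(a) follows from the tower rule: conditioning on $X$ kills each of the four inverse-probability correction terms (e.g.\ $\E[Z(Y-g_{t,0}(X))/e_0(X)|X] = 0$), so $\E[\psi(W;\theta_0,\eta_0)] = \E[g_{t,0}(X) - g_{c,0}(X)] - \theta_0\,\E[m_{t,0}(X) - m_{c,0}(X)]$, which is the numerator minus $\theta_0$ times the denominator of the Wald estimand and hence $0$ by the definition of the LATE; the same computation gives $J_0 = \E[\psi^a(W;\eta_0)] = -\E[m_{t,0}(X) - m_{c,0}(X)]$, bounded away from $0$ by Assumption~\ref{iv-ass}(c) and away from $\infty$ because $A$ is binary, so~(e) holds. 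For Neyman orthogonality~(d) I differentiate $\E[\psi(W;\theta_0,\eta_0 + r(\eta-\eta_0))]$ at $r=0$ in each nuisance direction separately: a perturbation $\Delta$ of $g_t$ contributes $\E[\Delta(X)(1 - \E[Z|X]/e_0(X))] = 0$, a perturbation of $e$ contributes a term that vanishes because $\E[Z(Y-g_{t,0}(X))|X] = 0$ (and three symmetric terms), so $\lambda_n = 0$ and~\eqref{orthogonal} holds trivially.

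Verifying Assumption~\ref{regularity} carries the real content. The $q$-th moment bounds~(b) follow from H\"older applied to $\psi$ and $\psi^a$, using $\E[Y(a)^q]\le C_q$ with $q\ge 4$ from Assumption~\ref{iv-ass}(a), the boundedness of $A$ and $Z$, the uniform control of $g_a, m_a$ over $\T_n$ coming from $\|g_{a,0}\|_q = O(1)$ together with $\|\hat\eta - \eta_0\|_\infty = O(1)$, and $e\in[\varepsilon,1-\varepsilon]$; here $q\ge 4$ also forces the $n^{-(1-2/q)}$ piece in the rate of Theorem~\ref{var-est} to be at most $\sqrt{\log\log n/n}$. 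The first-order rate $r_n'$ is linear in the nuisance deviations, so $r_n' \lesssim \sum_a\|g_a - g_{a,0}\|_2 + \sum_a\|m_a - m_{a,0}\|_2 + \|e - e_0\|_2 \to 0$ over $\T_n$ by Assumption~\ref{iv-ass}(d). The crucial step is $r_n$ and $\lambda_n'$: taking conditional expectations given $X$ and substituting $\E[Z|X] = e_0$, $\E[Y|Z=1,X] = g_{t,0}$, $\E[A|Z=1,X] = m_{t,0}$ (and the $Z=0$ analogues, which are legitimate precisely because of the exclusion/independence/monotonicity structure of Assumptions~\ref{stdivass}--\ref{monoton}) shows that every surviving term in $\E[\psi^a(W;\eta)] - \E[\psi^a(W;\eta_0)]$ and in the second Gateaux derivative $\frac{d^2}{dr^2}\E[\psi(W;\theta_0,\eta_0 + r(\eta-\eta_0))]$ is a product of one deviation of $e$ against one deviation of a $g$ or an $m$; hence both $r_n$ and $\lambda_n'/\sqrt{\log\log n/n}$ are $\lesssim \|e - e_0\|_2(\sum_a\|g_a - g_{a,0}\|_2 + \sum_a\|m_a - m_{a,0}\|_2)$, which over $\T_n$ is $o(\sqrt{\log\log n/n})$ by the mixed-bias rate condition Assumption~\ref{iv-ass}(e). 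Choosing $\delta_n$ to be any sequence tending to $0$ that dominates $r_n$, $r_n'$ and $\lambda_n'/\sqrt{\log\log n/n}$ makes all of~\eqref{orthogonal},~\eqref{rn},~\eqref{rn'},~\eqref{lambdan'} hold, and non-degeneracy~(d), $\E[\psi(W;\theta_0,\eta_0)^2]\ge c_0$, holds under the outcome-variation and positivity in Assumption~\ref{iv-ass}.

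With Assumptions~\ref{neyman-orthogonal} and~\ref{regularity} established, Corollary~\ref{seqdml-all} supplies the almost sure asymptotically linear expansion of $\hat\sigma^{-1}(\hat\theta - \theta_0)$ (for the sample-splitting estimator of Section~\ref{main} or, equally, the cross-fit $\dml_1$ and $\dml_2$ estimators), and Theorem~\ref{csdml}(b) applied with $d=1$ and $l=1$ yields the stated $(1-\alpha)$-AsympCS $\hat\theta\pm\hat\sigma\sqrt{(2n\rho^2+1)/(n^2\rho^2)\,\log((n\rho^2+1)/\alpha)}$ for the LATE. The main obstacle is the mixed-bias accounting of the previous paragraph: one must check that the second-order remainder $\lambda_n'$, and likewise $r_n$, genuinely factorizes with a $\|e - e_0\|_2$ in every term, since that is exactly what lets the product-form condition Assumption~\ref{iv-ass}(e) suffice in place of demanding each nuisance be estimated at the $(\log\log n/n)^{1/4}$ rate, and the factorization rests on all three IV identification conditions to justify the conditional-mean substitutions.
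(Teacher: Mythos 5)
Your proposal follows the same route as the paper's proof: show that the Tan-style functional together with Assumption~\ref{iv-ass} instantiates the abstract template of Assumptions~\ref{neyman-orthogonal} and~\ref{regularity}, construct the shrinking nuisance realization set $\T_n$ from the almost-sure rate envelopes, and then invoke Corollary~\ref{seqdml-all} and Theorem~\ref{csdml}. The paper delegates most of the individual verifications (Neyman orthogonality, moment bounds, non-degeneracy, and the $L_2$ bound for $r_n'$) to Steps 1--4 of Appendix~B.2 of \cite{ma2023identification}, whereas you derive them directly from the tower rule, H\"older, and the smoothness of $e\mapsto 1/e$ on $[\varepsilon,1-\varepsilon]$, which makes the argument more self-contained. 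One small divergence worth noting: for $r_n$ the paper is content with the loose bound $r_n\lesssim a_n$ coming from the $L_2$ rate of $\hat\eta$, which suffices because the template only needs $r_n\le\delta_n$ with $\delta_n\to 0$; you observe the sharper fact that $\E[\psi^a(W;\eta)]-\E[\psi^a(W;\eta_0)]$ itself factorizes into $e$-deviations times $m$-deviations (note it involves only $m_t,m_c,e$ and not the $g$'s, so your ``$g$ or $m$'' phrasing there is a slight overstatement), so $r_n$ inherits the mixed-bias rate as well --- correct, just more than is required. Where the product structure is genuinely load-bearing is $\lambda_n'$, and your conditional-mean computation there agrees with the paper's explicit second-Gateaux-derivative bound.
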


Again as before, Theorem~\ref{iv-theorem} is a simple consequence that may be derived easily by verification of the conditions of Assumptions~\ref{neyman-orthogonal} and~\ref{regularity}, as is evident from the proof in Appendix~\ref{lateproof}.


\section{Experiments} \label{empirical}
In the following, we evaluate our approach on synthetic data, and illustrate its utility in real-world applications.

\subsection{Synthetic Data Experiments} \label{sims}

\subsubsection{Partial Identification with Unmeasured Confounding}

In this subsection, we demonstrate our findings applying to the formulations in Section \ref{partial}. We follow a data generation setup similar to that in \cite{yadlowsky2018bounds}, where we generate 
$X\sim \text{Uniform}[0,1]^d$, $U|X\sim \mathcal{N}\left(0, (1+\frac 12\sin (2.5 X_1))^2\right)$ and $Y(0) = \beta^TX + 5U, \ \ Y(1) = Y(0) +\tau$, where $d$ is the number of dimensions and $\tau$ the treatment effect. The treatment assignment $A$ is drawn i.i.d. Bernoulli with $$\P(A = 1|U,X) = \dfrac{\exp(\alpha_0 + X^T\mu + \log(\Gamma_\text{data})\mathbbm{1}(U>0))}{1 + \exp(\alpha_0 + X^T\mu + \log(\Gamma_\text{data})\mathbbm{1}(U>0))},$$ such that $$\dfrac{\P(A=1|X,U=u)/\P(A=0|X,U=u')}{\P(A=0|X,U=u)/\P(A=1|X,U=u')} = \Gamma_\text{data}^{\mathbbm{1}(u>0) - \mathbbm{1}(u'>0)}\in \left[\frac 1{\Gamma_\text{data}}, \Gamma_\text{data}\right]$$ satisfying Assumption \ref{rosenmodel} with parameter $\Gamma_\text{data}$. The $\alpha_0$ controls the treatment assignment ratio. The confidence sequences produced via applying Theorem \ref{partial-theorem} hold true as long as $\Gamma\ge \Gamma_\text{data}$, and, in the subsequent analysis, we set them both equal to $\exp(0.6)$. $\tau$ is set to be $-0.5$, and the hyperparameters are set as $\alpha_0 = 0$, $\mu\sim \mathcal{N}(0,I_d)$, $\beta\sim\mathcal{N}(0,I_d)$. We use a 4-fold cross-validation to generate the results, with an \texttt{xgboost} implementation to learn $\hat e, \hat g_1$ and $\hat\nu$, where $\hat g$ uses the special loss function $\eqref{regbound}$. $d$ is set to 4, and the experiments are run with sample sizes from 1000 to 50000 with an increment of 1000. The $\rho$-parameter in Theorem \ref{partial-theorem} is chosen to optimize for Robbins' normal mixture at the maximum allowed sample size, as delineated in \cite{waudby2021time}. 

\begin{figure}[!ht]
    \centering
    \includegraphics[width = \textwidth]{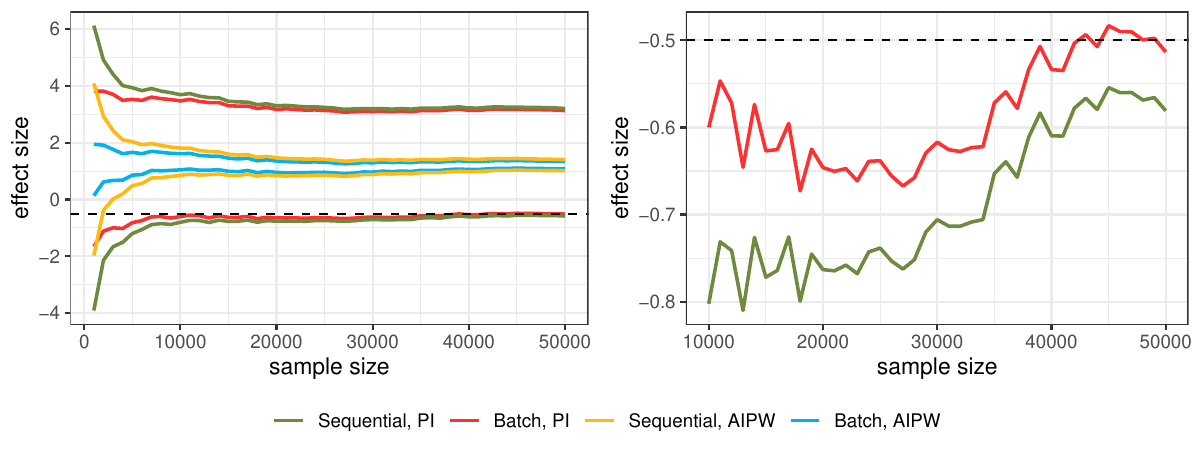}
    \caption{(a) The sequential and batch intervals on the partial identification (PI) AsympCS and CI, and their counterpart AIPW estimator that does not account for unmeasured confounding. (b) A zoomed in view of the sequential and batch lower confidence lines for the partial identification setup. The dashed line indicates the true treatment effect of -0.5.}
    \label{piplot}
\end{figure}

Figure \ref{piplot}(a) demonstrates the importance of PI-confidence regions to account for unmeasured confounding, since without it, even the AIPW estimator would infer an incorrect sign for the ATE. For an anytime-valid guarantee, however, just considering partial identification is not enough, as one also needs to account for protection against continuous monitoring. Figure \ref{piplot}(b) highlights the issue, zooming in on the lower ends of the AsympCS and the usual CI. Clearly, the true effect is not contained by the usual CIs uniformly over all sample sizes, while AsympCS holds up to its theoretical guarantees by always containing the true ATE.

\subsubsection{LATE estimation via Instrumental Variables}
Next, we illustrate the LATE estimation in a sequential setup. We generate $X\sim\mathcal{N}(0,I_d)$, and $U|X\sim \mathcal{N}\left(0, (0.5 + \sin X_1)^2\right)$. We generate a randomized experiment setup akin to \cite{farbmacher2022instrument}, where the instrument $Z\sim \text{Bern}(p)$, with $A = \mathbbm{1}(\alpha_z Z +U>0)$ guaranteeing the monotonicity condition as well as the IV-relevance condition on $A$ which is controlled via $\alpha_Z>0$. The outcome is generated as $Y = \theta A + \cos(U)(\beta^TX + U)$. We use $p = 0.4$ and $d=2$, and use an $\texttt{xgboost}$ implementation to learn the relevance condition. The hyperparameters are generated as $\beta\sim\mathcal{N}(0, 0.5 I_d)$, $\alpha_z = 2$, $p=0.4$, and the effect size $\theta = 3$, with a 4-fold cross-fitting to be used for estimation.

\begin{figure}[!ht]
    \centering
    \includegraphics[width = \textwidth]{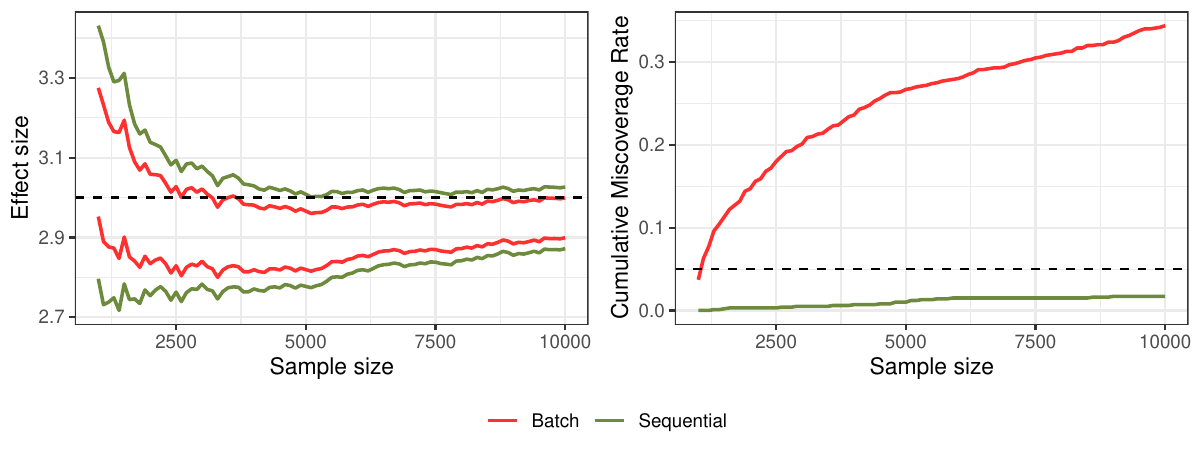}
    \caption{(a) The sequential and batch intervals on the estimation for LATE for one run of the experiment, with dashed line indicating a true LATE value of 3.0. (b) The cumulative miscoverage rate for 1000 repetitions of the experiment, with the dashed line showing the coverage level of 0.05.}
    \label{lateplot}
\end{figure}

Figure \ref{lateplot}(a) demonstrates a typical miscoverage by the batch approach that can occur for the LATE when not controlled for continuous monitoring, where we see that the upper bound is below the effect, while the AsympCS correctly accounts for it. Figure \ref{lateplot}(b) shows how the cumulative miscoverage rate grows rapidly for batch confidence intervals as the sample size increases, while the AsympCS controls the rate at the nominal level.

\subsection{Real-world Applications}

Empowered by the advantages showcased in ensuring the right confidence sequence to estimate the treatment effect, we deploy the methods in real-data instances.

\subsubsection{Impact of Academic Support Services on College Achievement}\label{star}

In our first application, we consider the impact that academic support services provide on college achievement. We use the data from \cite{angriststar} that evaluates the impact of the STAR program (Student Achievement and Retention Project), which had been designed to improve academic performance among college freshmen. Incoming first-year undergraduates were randomly assigned to one of three treatment arms: SSP (Student Support Program that offered peer-advising and supplemental instruction), SFP (Student Fellowship Program being awarded the opportunity to gain merit-based fellowships), and SFSP (who were given both SFP and SSP opportunities). Not everyone assigned treatment ended up accepting it, and hence intention-to-treat $Z$ does not necessarily match treatment status $A$. The monotonicity Assumption \ref{monoton} is automatically satisfied, and randomization of the treatment status makes it plausible that Assumption \ref{stdivass} is satisfied. The group assigned treatment (one of three treatment arms) consisted of 299 individuals out of 1255 in the entire dataset, of which 256 ended up taking treatment. We consider the sequential experiment in the order the individuals appear in the dataset, and also consider a sub-group analysis based on the sex of the individuals. We consider two outcomes: fall grades and GPA after the first year, in line with the original paper. The results are plotted in Figure \ref{star_plot}. The analysis, in line with the original work, fails to find a significant result (ie, the confidence sequence contains zero throughout) in almost all categories, except for Fall Grades for females. Using a sequential approach, one could have thus terminated the experiment earlier (since the confidence sequences are relatively stable), or even could have recruited more female participants in the study to gain greater confidence about the impact of the program on their fall grades. However, this application also highlights that sequential inference comes at a cost, especially for moderate sample sizes as in this dataset. The sequential method fails to obtain significance even for female Fall Grades, which highlights the trade-off between continuous peeking into the data and getting significance at the maximum sample size.

\begin{figure}[!ht]
    \centering
    \includegraphics[width=\linewidth]{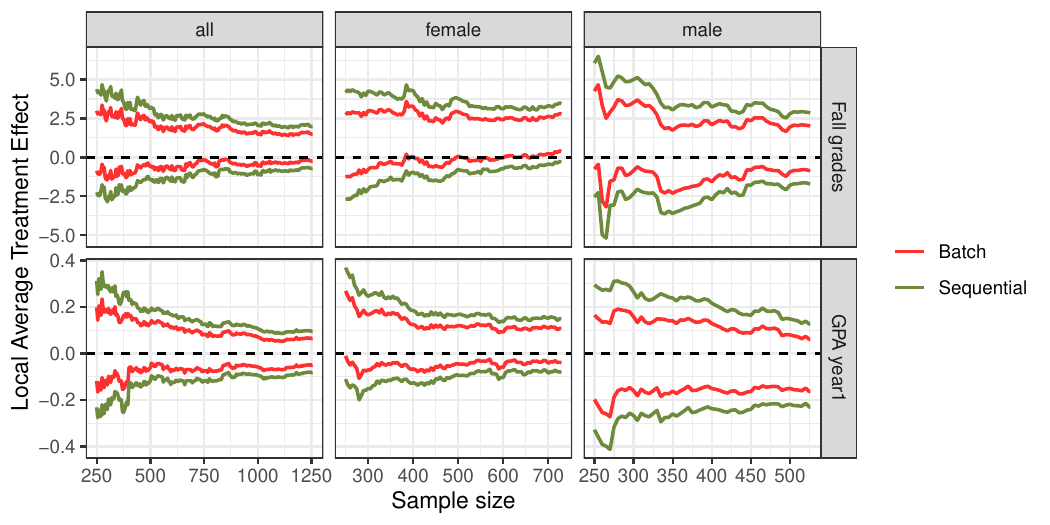}
    \caption{Local Average Treatment Effect for the STAR program for all students, and a gender-wise sub-group analysis.}
    \label{star_plot}
\end{figure}

\subsubsection{Differentially Expressed Genes in Leukemia Cells from Microarray Analysis in Observational Study} \label{gene}

In our second application, we apply anytime-valid inference to obtain differentially expressed genes in leukemia cells from microarray analysis. The causal inference approach of obtaining differentially expressed genes after accounting for potential confounders of age, sex, etc. has been expressed in the work of \cite{hellergene}, but they did not account for potential unmeasured confounding. Sequential inference is particularly useful here, as gene-profiling is expensive, and hence it is important to reach inference as soon as possible. We use the dataset from \cite{chen2022single}, which consists of gene profiling of 18 infants suffering from B-cell acute lymphoblastic Leukemia (B-ALL). We consider the B-cells as treatment and T-cells as control, and aim to obtain the difference in expressions for different genes across B and T cells. Since disease cells cannot be randomized, it is important to account for potential unmeasured confounding - a question that has been left open by \cite{hellergene}. The dataset consists of 20,043 genes, out of which 150 are reported to be significant by \cite{chen2022single}. We report confidence intervals of the gene-expression difference of these 150 genes after accounting for unmeasured confounding under Assumption \ref{rosenmodel} at $\Gamma = 1.5$, to account for small to moderate levels of potential unmeasured confounding. 76 genes turn out to remain significant at least once after using a burn-in of 5 donors or 33,554 cells, thus illustrating that adding a potential for unmeasured confounding reduces the number of significant genes by half. The sequential approach also illustrates how the experiment could have been stopped earlier --- setting a stopping time as the first time a gene is found to be significant, we obtain that the stopping time for different significant genes varies between 6 to 15 donors with a median of 6 donors, and thus collecting data for as many as 18 patients was relatively unnecessary. Figure \ref{diff-genes} illustrates two genes: one which remains significant at $\Gamma = 1.5$, and one that fails to be resilient to the same.

\begin{figure}[!ht]
    \centering
    \includegraphics[width=\linewidth]{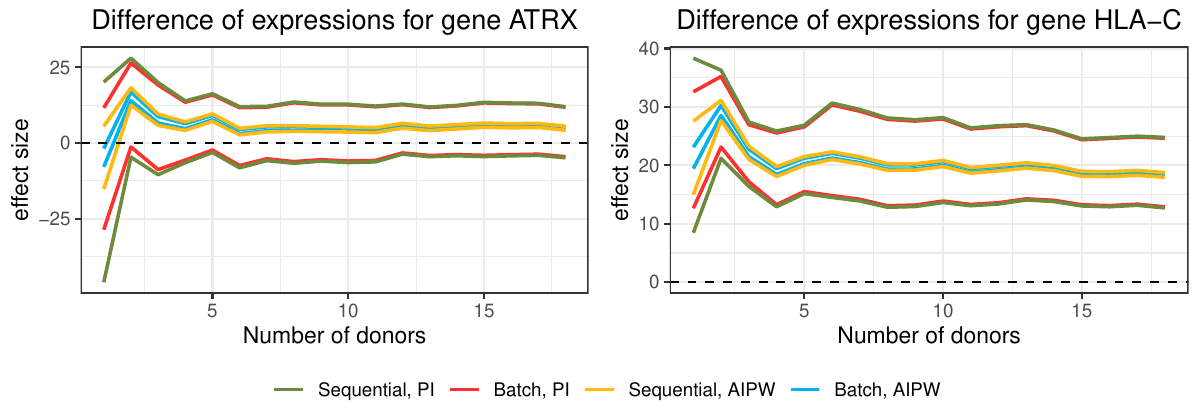}
    \caption{Average treatment effect of gene expression between T and B cells in B-ALL leukemia infant patients.}
    \label{diff-genes}
\end{figure}


\section{Discussion} \label{discuss}

In this article, we present a method to implement the double machine learning (DML) framework for estimating structural parameters in a sequential setting. Borrowing the concept of an asymptotic confidence sequence from \cite{waudby2021time}, we derive the precise conditions sufficient to ensure the sequential validity of DML-based intervals, and hence allowing to stop collecting data when inference is `good-enough'. We also provide a method for producing Neyman-orthogonal functions from an identifying equation. Our conditions for valid asymptotic confidence sequences in DML estimators are natural extensions of those in \cite{chernozhukov2018double}. 

Constructing DML estimators can be challenging, but our method enhances DML by slightly modifying the confidence intervals for stronger validity and flexibility, with minimal additional assumptions. Beyond point identification, we particularly emphasize its utility in partial identification, where more data would not narrow down confidence intervals to zero. Our approach allows researchers to stop data collection when results are `good enough' while maintaining coverage guarantees, making causal inference faster and more cost-effective across various scientific disciplines.

\subsection*{Acknowledgements}
Part of this work was developed during A. Dalal's Summer 2023 internship at Amazon Web Services. The authors thank William Bekerman and Anirban Chatterjee for helpful discussions.

\bibliography{ref.bib}

\newpage
\appendix
\section{Proof of Theorem~\ref{pseudooutcome}} \label{pseudoproof}

\first*
\begin{proof}
For simplicity, we shall use the notation  $\partial r f := \partial f/\partial r|_{r=0}$. From the identification condition, we know that for 
$$\E[\phi(W;\theta_0,\beta)|X] = 0.$$ Taking a regular paramteric sub-model parametrized by the index $r\in [0,1)$ and denoting $\beta_r = \beta_0 + r(\beta-\beta_0)$, we have, 
$$\E_r[\phi(W;\theta_0,\beta_r)|X] = 0.$$
Taking the derivative with respect to $r$ at $r=0$, we obtain, 
\begin{align*}
    0 &= \partial_r\E_r[\phi(W;\theta_0,\beta_r)|X] = \E_0[\phi(W;\theta_0,\beta_0)S(W|X)|X] + \partial_r\E[\phi(W;\theta_0, \beta_0 + r(\beta - \beta_0)|X] \\
    &= \E_0[\phi(W;\theta_0,\beta_0)S(W|X)|X] + \E_0[R(W;\theta_0,\eta_0)S(W|X)|X] \\
    &= \E_0[\psi(W;\theta_0,\eta_0)S(W|X)|X],
\end{align*}
where the third equality follows from \eqref{pseudocorrection}. However, as $R(W;\theta_0,\eta)$ has a mean of 0, we also have that 
for any regular parametric submodel, $$\E_r[\psi(W;\theta_0,\eta_0 + r(\eta - \eta_0)] = 0.$$
Denoting $\eta_r$ analogously to $\beta_r$ as above, and taking pathwise derivative, wrt to $r$ at $r=0$, we obtain, 
\begin{align*}
    0&= \partial_r\E_r[\psi(W;\theta_0,\eta_r)|X] = \E[\psi(W;\theta_0,\eta_0)S(W|X)|X] + \E[\partial_r \psi(W;\theta_0,\eta_r)|X].
\end{align*}
Since the first term in the above equation is $0$ as established, we have, $$\dfrac{\partial}{\partial r}\E[\psi(W;\theta_0,\eta + r(\eta - \eta_0)|X] = 0.$$ as desired. 
\end{proof}

\section{Proof of Results in Section~\ref{main}}
\label{mainproofs}

\subsection{Proof of Theorem~\ref{dml-linear}}
\textit{Notation}: We shall sometimes use $\hat\psi$ to denote $\psi(\cdot; \theta_0,\hat\eta)$ and $\psi$ for $\psi(\cdot;\theta_0,\eta_0)$ when clear from context.

\begin{lemma}[Lemma A.4 of \cite{waudby2021time}]
\label{decomposition}  (\textit{Decomposition of $\psi$}) 
$$\P_T\hat\psi - \P\psi = \Gamma_n^{SA} + \Gamma_n^{EP} + \Gamma_n^B$$
where 
\begin{align*}
    \Gamma_n^{SA} &:= (\P_T-\P)\psi & \text{is the sample average term, } \\
    \Gamma_n^{EP} &:= (\P_T-\P)(\hat\psi - \psi) & \text{is the empirical process term, and } \\
    \Gamma_n^B &:= \P(\hat\psi -\psi) & \text{is the bias term.}
\end{align*}
\end{lemma}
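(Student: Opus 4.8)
The statement is a purely algebraic add-and-subtract identity, so the proof is short; the only thing to keep track of is that both $\P_T$ and $\P$ act linearly on the integrand. Recall that $\P_T g = \tfrac1T\sum_{i\in I}g(W_i)$ is the empirical average over the evaluation fold $I=I_1$, while $\P\hat g := \E[\hat g(W)\mid\{W_i\}_{i\in I_1^c}]$ is the population expectation with the estimated nuisance $\hat\eta$ frozen (it is deterministic conditionally on $I_1^c$); in particular $\P\hat\psi$ is well defined, and $\P_T$, $\P$ both respect finite linear combinations of their arguments.

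Given this, I would first split off the bias term by writing $\P_T\hat\psi-\P\psi=(\P_T\hat\psi-\P\hat\psi)+(\P\hat\psi-\P\psi)$, where the second bracket equals $\P(\hat\psi-\psi)=\Gamma_n^B$ by linearity of $\P$. Then I would rewrite the first bracket as $(\P_T-\P)\hat\psi$ and insert $\pm\psi$ inside the argument: $(\P_T-\P)\hat\psi=(\P_T-\P)\big((\hat\psi-\psi)+\psi\big)=(\P_T-\P)(\hat\psi-\psi)+(\P_T-\P)\psi=\Gamma_n^{EP}+\Gamma_n^{SA}$. Adding the two pieces yields $\P_T\hat\psi-\P\psi=\Gamma_n^{SA}+\Gamma_n^{EP}+\Gamma_n^B$, as claimed.

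There is no real obstacle in the identity itself; the only subtlety worth flagging is the interpretation of $\P$ as a conditional expectation with $\hat\eta$ held fixed, which is what makes $\Gamma_n^B$ a genuine (conditionally deterministic) bias and lets one separate it cleanly from the stochastic terms. The substantive work — and the reason the decomposition is introduced — comes afterwards in the proof of Theorem~\ref{dml-linear}: bounding $\Gamma_n^{SA}$ by a law-of-iterated-logarithm estimate, $\Gamma_n^{EP}$ via conditional second-moment bounds that exploit the independence of $I$ from $I^c$ together with the rate $r_n'$ of Assumption~\ref{regularity}(c), and $\Gamma_n^B$ via a second-order Taylor expansion controlled by the near-orthogonality bound $\lambda_n$ and the second-derivative bound $\lambda_n'$ of Assumptions~\ref{neyman-orthogonal}(d) and~\ref{regularity}(c).
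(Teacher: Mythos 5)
Your proposal is correct, and the add-and-subtract algebra you give is the only natural argument for this identity; the paper itself does not reprove the lemma but simply cites it from \cite{waudby2021time}, so your short derivation (together with the helpful remark that $\P$ is conditional on $I^c$ with $\hat\eta$ frozen) matches what one would expect. Nothing further is needed.
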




\begin{lemma}
\label{emp-process} (\textit{Almost sure controlling empirical process by difference in learnt and true functions}) $$\Gamma_n^{EP} = O\left(\|\hat \psi_n -\psi\|_{L_2(\P)}\sqrt{\dfrac{\log\log n}{n}}\right).$$
\end{lemma}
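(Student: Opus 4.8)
The plan is to condition on the auxiliary fold $I^c$ used to fit $\hat\eta$, so that $\hat\psi - \psi$ becomes a fixed (measurable) function $f_n := \hat\psi_n - \psi$, and then recognize $\Gamma_n^{EP} = (\P_T - \P) f_n$ as a mean-zero empirical-process average of the i.i.d. summands $f_n(W_i)$, $i \in I$. The key structural fact is that, conditional on $I^c$, the partial sums $\sum_{i \le t}\bigl(f_n(W_i) - \P f_n\bigr)$ form a martingale (indexed by the running sample size along the evaluation fold $I$), with increments bounded in $L_2(\P)$ norm by $\|f_n\|_{L_2(\P)} = \|\hat\psi_n - \psi\|_{L_2(\P)}$. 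First I would invoke a law-of-iterated-logarithm-type bound for such martingale averages — this is where the $\sqrt{\log\log n / n}$ comes from, exactly as flagged in the Remark after Assumption~\ref{regularity} — to get that $|(\P_T - \P)f_n| = O\bigl(\|f_n\|_{L_2(\P)}\sqrt{\log\log n / n}\bigr)$ almost surely, where the $O(\cdot)$ is in the a.s. sense as stipulated globally in the paper.

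The steps, in order: (i) condition on $\mathcal{G}_n := \sigma(\{W_i\}_{i \in I^c})$, under which $f_n = \hat\psi_n - \psi$ is deterministic and the evaluation points $\{W_i\}_{i\in I}$ remain i.i.d. from $\P$; (ii) center, writing $\Gamma_n^{EP} = \frac{1}{T}\sum_{i \in I}\bigl(f_n(W_i) - \E[f_n(W)\mid \mathcal{G}_n]\bigr)$, noting $\E[f_n(W)\mid\mathcal{G}_n] = \P\hat\psi_n - \P\psi$ is precisely what is subtracted; (iii) control the conditional second moment: $\E\bigl[(f_n(W) - \P f_n)^2 \mid \mathcal{G}_n\bigr] \le \E[f_n(W)^2\mid\mathcal{G}_n] = \|\hat\psi_n - \psi\|_{L_2(\P)}^2$; (iv) apply a uniform (time-uniform) empirical Bernstein / LIL bound — e.g. the curved-boundary bounds of \cite{howard2021time} or the Robbins-type normal-mixture LIL — which under the $q>2$ moment conditions of Assumption~\ref{regularity}(b) (giving a Lindeberg-type control on the tails so the sub-Gaussian/sub-exponential LIL regime applies) yields $\sup_{t}\bigl|\frac{1}{t}\sum_{i\le t}(f_n(W_i) - \P f_n)\bigr| = O\bigl(\|f_n\|_{L_2(\P)}\sqrt{\log\log t / t}\bigr)$ a.s.; (v) substitute $t = T$ and use $T/n \overset{a.s.}{\to} 1/K$ to replace $\log\log T/T$ by $\log\log n/n$ up to constants, absorbing the constant $K$ into the $O(\cdot)$.

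The main obstacle is step (iv): making the $\sqrt{\log\log n}$ rate genuinely uniform over the (random, $n$-dependent) function $f_n$ while only paying a single $\log\log n$ factor, rather than a union-bound penalty that grows with the complexity of the class in which $\hat\eta$ lives. This is exactly why sample splitting is essential here — conditioning on $I^c$ collapses the empirical-process supremum over a function class into a \emph{single} centered average, so no entropy/Donsker condition is needed and a plain scalar LIL bound (applied conditionally, then integrated over $\mathcal{G}_n$) suffices. The remaining delicacy is purely bookkeeping: ensuring the a.s. bound holds simultaneously for all $n$ (handled by the time-uniform nature of the \cite{howard2021time}-style boundary) and that the $q>2$ moment bound from Assumption~\ref{regularity}(b), together with $\hat\eta \in \mathcal{T}_n$ eventually (Assumption~\ref{regularity}(a)), legitimately controls the tail of $f_n(W)$ so that the LIL constant does not blow up with $n$.
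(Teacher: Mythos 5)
Your plan matches what the paper points at --- its own proof is a one-line deferral to Step 1 of Lemma A.4 of \cite{waudby2021time}, which is also an LIL argument applied after conditioning on the training fold --- but step (iv) as written has a genuine gap, which you flag as ``the main obstacle'' and then do not actually close. Conditionally on the training fold, the partial sums $S_t^{(n)}:=\sum_{i\le t,\,i\in I}(f_n(W_i)-\P f_n)$ form, for each \emph{fixed} $n$, a martingale in the running evaluation-fold index $t$, and the Hartman--Wintner LIL gives $\limsup_t |S_t^{(n)}|/\sqrt{2t\log\log t}=\|f_n\|_{L_2}$ a.s.\ for that fixed $f_n$. But this asymptotic statement carries an implicit almost-sure constant, and a threshold $t$ past which the bound bites, that depend on the fixed function $f_n$ and hence on $n$. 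Evaluating on the diagonal $t=T_n$, with $f_n$ also changing, does not produce a single random constant $M$ with $|\Gamma_n^{EP}|\le M\,\|\hat\psi_n-\psi\|_{L_2}\sqrt{\log\log n/n}$ for all $n$, which is what the paper's a.s.\ $O(\cdot)$ convention demands.

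Your attribution of the resolution to sample splitting is a category error. Sample splitting removes the need for entropy/Donsker control at a \emph{single} $n$ --- conditioning collapses $\hat\psi_n - \psi$ to one function so there is no supremum over a class --- but it does nothing to merge a family of $n$-dependent LIL constants into one. To see the danger, set $Z_n := T_n^{-1/2}\sum_{i\le n,\,i\in I}\bigl(g_n(W_i)-\P g_n\bigr)$ with $g_n := f_n/\|f_n\|_{L_2}$; conditionally on the training fold, $\P(|Z_n|>M\sqrt{\log\log T_n})$ decays roughly like $(\log n)^{-M^2/2}$, a divergent series for every $M$, so if the directions $g_n$ were permitted to decorrelate across dyadic scales, a second-Borel--Cantelli argument would give $\limsup_n|Z_n|/\sqrt{\log\log T_n}=\infty$. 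What is actually needed --- and is the content of the cited Step 1 --- is either a genuine forward-in-time martingale structure (the nuisance used on $W_i$ is fit only from data prior to $i$, so that $\psi(W_i;\theta_0,\hat\eta_{(i)})-\E[\psi(W_i;\theta_0,\hat\eta_{(i)})\mid\mathcal{F}_{i-1}]$ is a martingale-difference array admitting Stout's LIL with a single predictable quadratic variation), or a dyadic-blocking argument exploiting how slowly $\hat\psi_n$ moves in $n$. Your proposal contains neither ingredient, and so the uniform-in-$n$ bound does not follow from the steps you have written.
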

\begin{proof}
Using law of iterated logarithm, following Step 1 of Lemma A.4 of \cite{waudby2021time}. 
\end{proof}

\textit{Remark: If $\hat\eta\in \T_n$, then $\|\hat\psi_n - \psi\|\le r_n'$ as in Assumption~\ref{regularity}, which based on Lemma \ref{emp-process} thus controls $\Gamma_n^{EP}$ at a pre-specified rate.}

\begin{lemma}
\label{bias-control}
(\textit{Controlling the bias term}) Under Assumption~\ref{dml-linear} and~\ref{regularity}, $$\Gamma^B_n = O(\lambda_n +\lambda_n')$$
\end{lemma}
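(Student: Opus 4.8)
The plan is to bound $\Gamma_n^B = \P(\hat\psi - \psi) = \E[\psi(W;\theta_0,\hat\eta)\mid I^c] - \E[\psi(W;\theta_0,\eta_0)]$ by a second-order Taylor expansion in the nuisance direction, exactly as in the standard DML argument of \cite{chernozhukov2018double}, but tracking the explicit rate factors rather than $o_P(1)$ statements. First I would fix the realization of $\hat\eta$ (legitimate since $\hat\eta$ is estimated from $I^c$ and the conditional expectation treats it as frozen), and introduce the scalar function $r \mapsto f(r) := \E[\psi(W;\theta_0,\eta_0 + r(\hat\eta-\eta_0))]$ for $r\in[0,1]$, which is well-defined and twice continuously differentiable by Assumption~\ref{neyman-orthogonal}(c). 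Then $\Gamma_n^B = f(1) - f(0)$, and a second-order Taylor expansion with integral (or Lagrange) remainder gives
\begin{equation*}
\Gamma_n^B = f'(0) + \tfrac12 f''(\bar r)
\end{equation*}
for some $\bar r \in (0,1)$ (or, written with the integral form, $f'(0) + \int_0^1 (1-r) f''(r)\,dr$).

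Next I would bound the two pieces using the assumptions and the event on which $\hat\eta \in \T_n$. The first-order term $f'(0) = \frac{d}{dr}\E[\psi(W;\theta_0,\eta_0 + r(\hat\eta-\eta_0))]\big|_{r=0}$ is controlled by the approximate-Neyman-orthogonality Assumption~\ref{neyman-orthogonal}(d): since $\hat\eta\in\T_n$ eventually (Assumption~\ref{regularity}(a)), we have $\|f'(0)\| \le \lambda_n \le \delta_n\sqrt{\log\log n / n}$, hence $f'(0) = O(\lambda_n)$ almost surely. The second-order remainder $\tfrac12\|f''(\bar r)\|$ is bounded by $\tfrac12 \sup_{r\in(0,1),\,\eta\in\T_n}\big\|\frac{d^2}{dr^2}\E[\psi(W;\theta_0;\eta_0 + r(\eta-\eta_0))]\big\| = \tfrac12\lambda_n' = O(\lambda_n')$, again invoking Assumption~\ref{regularity}(a) so that the fixed realization $\hat\eta$ lies in $\T_n$ for all but finitely many $n$, together with the rate bound~\eqref{lambdan'}. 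Combining, $\Gamma_n^B = O(\lambda_n + \lambda_n')$ almost surely, as claimed.

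The main technical point to handle carefully — and the only real obstacle — is the interplay between the randomness of $\hat\eta$ and the deterministic supremum over $\T_n$: one must argue that because $\hat\eta$ is computed from the independent fold $I^c$, the conditional bias $\P(\hat\psi-\psi)$ equals $f(1)-f(0)$ with $\hat\eta$ treated as a frozen element, and then that membership $\hat\eta\in\T_n$ (which holds for all but finitely many $n$ by Assumption~\ref{regularity}(a)) lets us replace the $\hat\eta$-dependent quantities by the worst-case-over-$\T_n$ quantities $\lambda_n$ and $\lambda_n'$ without losing the almost-sure nature of the bound; the finitely-many exceptional $n$ are absorbed into the $O(\cdot)$ constant. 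Everything else is a routine one-dimensional Taylor expansion; no delicate estimates beyond what the assumptions already grant are needed.
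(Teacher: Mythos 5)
Your proof is correct and follows essentially the same route as the paper's: define the scalar map $r\mapsto \E[\psi(W;\theta_0,\eta_0 + r(\hat\eta-\eta_0))]$ (the paper calls it $g$, after subtracting the constant $\P\psi = 0$, so it is literally your $f$), apply a second-order Taylor expansion with Lagrange remainder, kill the constant term by the moment condition, bound $g'(0)$ by $\lambda_n$ via near-orthogonality, bound $g''(\tilde r)$ by $\lambda_n'$ via the second-derivative rate condition, and invoke Assumption~\ref{regularity}(a) to place $\hat\eta$ in $\T_n$ for all but finitely many $n$. Your closing remark about the conditioning-on-$I^c$ / frozen-$\hat\eta$ bookkeeping and absorbing the finitely many exceptional $n$ is the right thing to note, and it is exactly how the paper's argument works, even if left more implicit there.
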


\begin{proof} We proceed as in \cite{chernozhukov2018double}, using Taylor expansion and the Neyman orthogonality property of $\psi$.  Define $$g(r) = \P[\psi(W;\theta_0, \eta_0 + r(\hat\eta - \eta_0)] - \P\psi.$$ Then, it is easy to see that $\Gamma_n^B = g(1)$, and by Taylor expansion we obtain, 
$$g(1) = g(0) + g'(0) + g''(\tilde r)/2,\ \ \ \text{for some $\tilde r\in (0,1).$}$$
Clearly $g(0) = \P[\psi(W;\theta_0,\eta_0)] = 0$ and, on the event that $\hat\eta\in \T_n$ (which can not happen only finitely many times), 
$$\|g'(0)\| = \left\|\left.\dfrac{d}{dr}\P[\psi(W;\theta_0,\eta_0+r(\hat\eta-\eta_0))]\right|_{r=0}\right\|\le \sup_{\eta\in\T_n}\left\|\left.\dfrac{d}{dr}\P[\psi(W;\theta_0,\eta_0+r(\eta-\eta_0))]\right|_{r=0}\right\|\le \lambda_n.$$ 

Also, when $\hat\eta\in \T_n$, $$\|g''(\tilde r)\|\le \sup_{r\in (0,1)} \|g''(r)\|\le \lambda_n'.$$
Since $\hat\eta\in \T_n$ all but finitely many times, we have, $$\Gamma_n^B = g(1) = O(\lambda_n +\lambda_n').$$ 
\end{proof}

\samplesplitting*
\begin{proof}
By definition, we know that 
$$\hat\theta = -[\P_T \hat \psi_a]^{-1}\P_T \hat\psi_b,$$ and thus, 
\begin{align*}
    \hat\theta_0 - \theta_0 &= -[\P_T \hat \psi_a]^{-1}(\P_T\hat\psi_a \theta_0 + \P_T\hat\psi_b) 
    = -[J_0 + R_{n,1}]^{-1}\P_T\psi(W;\theta_0,\hat\eta),
\end{align*}
using the linearity of $\psi$ function in $\theta$, and letting $J_0 = \P\psi_a$ and $R_{n,1} = \P_T\hat \psi^a - \P\psi^a$. Now, 
$$(J_0+R_{n,1})^{-1} - J^{-1} = (J_0+R_{n,1})^{-1}(J_0 - (J_0 + R_{n,1}))J_0^{-1} = -(J_0+R_{n,1})^{-1}R_{n,1}J_0^{-1}$$
Next, using the same decomposition as Lemma~\ref{decomposition} on $\P_T\hat\psi^a - \P\psi^a$, and noting $(\P_T-\P)\psi = O\left(\sqrt{\dfrac{\log\log n}{n}}\right)$, we have, 
\begin{align*} R_{n,1} &= O\left(\sqrt{\dfrac{\log\log n}{n}}\right) + O\left(\|\hat \psi^a-\psi^a\|\sqrt{\dfrac{\log\log n}{n}}\right) + O(r_n) \\
&= O\left((1+\|\hat \psi^a-\psi^a\|)\sqrt{\dfrac{\log\log n}{n}}+r_n\right)\\
&= O\left(\sqrt{\dfrac{\log\log n}{n}}+r_n\right),
\end{align*}
noting that $\|\hat\psi^a-\psi^a\||\theta_0|\le \|\hat\psi-\psi\|=O(1)$.
Also, $J_0$ has all eigen-values bounded below by $c_0$, thus making $\|J_0^{-1}\|=O(1)$. Hence, 
$$\|(J_0+R_{n,1})^{-1} - J_0^{-1}\| \le \|(J_0+R_{n,1})^{-1}\|\|R_{n,1}\|\|J_0^{-1}\| = O\left(\sqrt{\dfrac{\log\log n}{n}}+r_n\right).$$
Next, by Lemma~\ref{decomposition}, Lemma~\ref{emp-process} and Lemma~\ref{bias-control}, 
$$\P_T\hat\psi = (\P_T-\P)\psi + O\left(r_n'\sqrt{\dfrac{\log\log n}{n}}\right) + O(\lambda_n + \lambda_n'),$$
and is thus $o(1)$ by assumption. Finally, 
$$\sigma^2 = J_0^{-1}\P[\psi\psi'](J_0^{-1})^T$$ has all eigen values bounded from below by $c_0/c_1^2$, making $\|\sigma^{-1}\|\le c_1/\sqrt{c_0} = O(1)$.
Combining everything, and noting that $\P\psi = 0$, we have, 
\begin{align*}
    \hat\theta_0-\theta_0 &= -[J_0+R_{n,1}]^{-1}\P_T\hat\psi \\
    &= -J_0^{-1}\P_T\hat\psi - [(J_0+R_{n,1})^{-1}-J_0^{-1}]\P_T\hat\psi \\
    &= -J_0^{-1}\P_T\psi -\left\{\Gamma_n^{EP} +\Gamma_n^B + [(J_0+R_{n,1})^{-1}-J_0^{-1}]\P_T\hat\psi\right\}\\
    &= -\dfrac 1T\sum_{i=1}^T J_0^{-1}\psi(W_i;\theta_0,\eta_0) + O\left((r_n+r_n')\sqrt{\dfrac{\log\log n}{n}} + \lambda_n +\lambda_n'\right) 
\end{align*} 
and left multiplying by $\sigma^{-1}$ completes the proof.
\end{proof}

\subsection{Proof of Theorem~\ref{var-est}}
\variance*
\begin{proof}
We proceed by decomposing $\P_T\hat\psi\psi^T$ into three component terms in a similar but not the exact same flavor as in Lemma \ref{decomposition}: 
\begin{align*}
    \P_T\hat\psi\hat\psi^T &= \P_T(\hat\psi - \psi + \psi)(\hat\psi - \psi + \psi)^T\nonumber \\ &= \P_T(\hat\psi-\psi)(\hat\psi-\psi)^T + \P_T\psi\psi^T + \P_T(\hat\psi-\psi)\psi^T + \P_T\psi(\hat\psi-\psi)^T. 
\end{align*}
The second term is expected to have a non-zero finite limit $\P\psi\psi^T$, while the other terms should be small. Subtracting of the potential limit $\P \psi\psi^T$ from the second term and applying triangle inequality, we obtain
\begin{align}
    \|\P_T\hat\psi\hat\psi^T - \P\psi\psi^T\|&\le \|\P_T(\hat\psi-\psi)(\hat\psi-\psi)^T\| + \|(\P_T - \P)\psi\psi^T\| + \|\P_T(\hat\psi-\psi)\psi^T\| + \|\P_T\psi(\hat\psi-\psi)^T\|\nonumber \\ 
    & \le \|\P_T(\hat\psi-\psi)(\hat\psi-\psi)^T\| + \|(\P_T - \P)\psi\psi^T\| + 2\|\P_T(\hat\psi-\psi)(\hat\psi-\psi)^T\|^\frac12\|\P_T\psi\psi^T\|^\frac12 ,\label{vardiff}
\end{align}
because $\|\P_T[uv^T]\|\le \|\P_T[uu^T]\|^\frac12\|\P_T[vv^T]\|^\frac12$. Now, one can bound the first term in \eqref{vardiff} as $\|\P_T(\hat\psi-\psi)(\hat\psi-\psi)^T\|\le \P_T(\|\hat\psi-\psi\|^2_2)$. Also, conditional on the training set, $(\hat\psi-\psi)$ are iid, and hence $\P_T\|\hat\psi-\psi\|_2^2$ is a conditional reverse submartingale. By \cite{manole2023martingale} Theorem 2, we have, 
$$\lim_{M\to\infty} \P\left(\exists\ T\ge 1: \P_T\|\hat\psi-\psi\|_2^2\ge M\P\|\hat\psi-\psi\|^2_2\right)\le \lim_{M\to\infty}\dfrac 1M = 0,
$$
leading to $\P_T\|\hat\psi-\psi\|_2^2 = O(\P\|\hat\psi-\psi\|^2_2) = O(r_n'^2)$. Next, the second term in Equation \eqref{vardiff} $(\P_T-\P)\psi\psi^T = O\left(\sqrt{\dfrac{\log\log n}{n}}\right)$ if $q\ge 4$ by Law of iterated logarithm, and $o(n^{-(1-\frac 2q)})$ if $2\le q<4$ by  Marcinkiewicz-Zygmund strong law of large numbers (\cite{Marcinkiewicz1937}). Finally, noting that the third term in Equation \eqref{vardiff} is a product of $O(r_n')$ and $O(1)$ terms, we can thus obtain,
\begin{equation} \P_T\hat\psi\hat\psi^T = \P\psi\psi^T + O(r_n'^2) + O\left(n^{-(1-\frac 2q)}\bigvee \sqrt{\dfrac{\log\log n}{n}}\right) + O(r_n')O(1) = O\left(n^{-(1-\frac 2q)}\bigvee \sqrt{\dfrac{\log\log n}{n}}+r_n'\right). \label{varorder}
\end{equation}
Finally, we have already shown in the proof of theorem~\ref{dml-linear}, that with $\hat J_0 = J_0 + R_{n,1}$, we have, 
\begin{equation}
\|\hat J_0^{-1} - J_0^{-1}\| = O\left(\sqrt{\dfrac{\log\log n}{n}} + r_n\right). \label{jhat}
\end{equation}
Thus, using \eqref{jhat} and \eqref{varorder},
\begin{align*} \hat\sigma^2 &= \hat J_0^{-1}\P_T\hat\psi\hat\psi^T\hat (\hat J_0^{-1})' = J_0^{-1}\P_T\hat\psi\hat\psi^T\hat ( J_0^{-1})' + O\left(\sqrt{\dfrac{\log\log n}{n}} + r_n\right) \\
&= J_0^{-1}\P\psi\psi^T ( J_0^{-1})' + J_0^{-1}(\P_T\hat\psi\hat\psi^T-\P\psi\psi^T)\hat ( J_0^{-1})'+ O\left(\sqrt{\dfrac{\log\log n}{n}} + r_n\right)\\
&= \sigma^2 + O\left(n^{-(1-\frac 2q)}\bigvee\sqrt{\dfrac{\log\log n}{n}} + r_n + r_n'\right).
\end{align*} 
\end{proof}

\subsection{Proof of Theorem~\ref{csdml}}

\begin{lemma} 
    \label{csgaussian}
    (Confidence set for multivariate Brownian motion) If $\{B_t\}_{t\ge 1}$ is a $p$-dimensional Brownian motion, then with probability at least $1-\alpha$, 
    $$\forall t\ge 1,\ \dfrac{\|B_t\|^2}{t^2}<\dfrac{2(t\rho^2+1)}{t^2\rho^2}\log\dfrac{(t\rho^2+1)^{\frac p2}}{\alpha}.$$
\end{lemma}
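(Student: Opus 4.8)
The plan is to exhibit a nonnegative martingale built from $B$ and then apply Ville's inequality, following the classical method of mixtures. First I would recall that for a one-dimensional standard Brownian motion $W$ and a fixed $\lambda\in\R$, the process $t\mapsto\exp(\lambda W_t-\lambda^2 t/2)$ is a martingale started at $1$. I would then mix $\lambda$ against a centered Gaussian density of variance $\rho^2$: since the integrand is nonnegative, Tonelli justifies interchanging the expectation with the $\lambda$-integral, so
$$M_t^{(1)} := \int_{\R}\exp\!\Big(\lambda W_t-\tfrac{\lambda^2 t}{2}\Big)\frac{e^{-\lambda^2/(2\rho^2)}}{\sqrt{2\pi}\,\rho}\,d\lambda = \frac{1}{\sqrt{1+\rho^2 t}}\exp\!\Big(\frac{\rho^2 W_t^2}{2(1+\rho^2 t)}\Big)$$
is a nonnegative martingale started at $1$; the closed form is the usual Gaussian integral, and is precisely Robbins' normal mixture boundary.

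Next I would handle the $p$-dimensional case by a product construction. Writing $B_t=(B_t^{(1)},\dots,B_t^{(p)})$ with the coordinates independent standard Brownian motions on the common filtration, let $M_t^{(i)}$ be the mixture martingale above built from $B^{(i)}$, and set $M_t:=\prod_{i=1}^p M_t^{(i)}$. By independence of the coordinates, $M$ is again a nonnegative martingale with $M_0=1$, and explicitly
$$M_t = \frac{1}{(1+\rho^2 t)^{p/2}}\exp\!\Big(\frac{\rho^2\|B_t\|^2}{2(1+\rho^2 t)}\Big).$$
Finally I would invoke Ville's maximal inequality to obtain $\P(\exists\, t\ge 0:\ M_t\ge 1/\alpha)\le\alpha$. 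On the complementary event one has $M_t<1/\alpha$ for all $t$, i.e. $\tfrac{\rho^2\|B_t\|^2}{2(1+\rho^2 t)}<\log\tfrac{(1+\rho^2 t)^{p/2}}{\alpha}$, which rearranges to $\|B_t\|^2<\tfrac{2(1+\rho^2 t)}{\rho^2}\log\tfrac{(1+\rho^2 t)^{p/2}}{\alpha}$; dividing through by $t^2$ gives exactly the stated bound, uniformly over $t\ge 1$.

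The steps requiring a little care are the Tonelli/Fubini argument certifying that the mixture is a genuine martingale rather than merely a supermartingale, the fact that a finite product of independent nonnegative martingales on a common filtration is itself a martingale, and the use of the continuous-path version of Ville's inequality; none of these is a real obstacle, and essentially all the work is the Gaussian integral. Alternatively, the univariate boundary can be quoted directly from the Robbins mixture literature, leaving only the product construction to be written out.
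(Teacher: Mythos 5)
Your proof is correct and follows essentially the same mixture-martingale plus Ville's inequality route as the paper. The only cosmetic difference is that the paper mixes the $p$-dimensional exponential martingale $\exp(\lambda^T B_t-\|\lambda\|^2 t/2)$ directly against $\mathcal N(0,\rho^2 I_p)$, whereas you obtain the identical closed form by producting $p$ independent univariate Robbins mixtures; since the multivariate Gaussian integral factorizes across coordinates, these are the same computation, and the resulting martingale and the rearrangement of the event $\{M_t<1/\alpha\}$ coincide exactly with the paper's argument.
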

\begin{proof}
 Let $\{B_t\}_{t\ge 0}$ be a $p$-dimensional Brownian motion, and consider 
$$M_t(\lambda) = \exp\left(\lambda^TB_t - \dfrac{\|\lambda\|^2 t}{2}\right).$$ 
It is easy to see that it is a non-negative martingale with respect to the natural filtration with $M_0(\lambda) = 1$, and thus by Fubini's theorem, mixing $M_t(\lambda)$ with any distribution of $\lambda$ on $\R^p$, would also result in a non-negative martingale with respect to the same filtration and initial value. In particular, we choose $f(\lambda;0,\rho^2I_p)$ to be the centered Gaussian distribution with variance $\rho^2I_p$, resulting in 
\begin{align*}
    M_t &= \int_{\R^p} M_t(\lambda)f(\lambda;0,\rho^2I_p)\,d\lambda \\
    &= \dfrac{1}{(2\pi\rho^2)^{\frac p2}}\int \exp\left(\lambda^T B_t -\dfrac{\|\lambda\|^2 t}{2}\right) \exp\left(-\dfrac{\|\lambda\|^2}{2\rho^2}\right) \,d\lambda= \dfrac{1}{(t\rho^2+1)^{\frac p2}}\exp\left(\dfrac{\rho^2\|B_t\|^2}{2(t\rho^2+1)}\right).
\end{align*}
Now by Ville's inequality and expanding out the expression for $M_t$,
\begin{align*}
    1-\alpha&\le \P(\forall t\ge 1, M_t<1/\alpha) = \P\left(\forall t\ge 0, \dfrac{\|B_t\|^2}{t^2}< \dfrac{2(t\rho^2+1)}{t^2\rho^2}\log\dfrac{(t\rho^2+1)^{\frac p2}}{\alpha}\right).
\end{align*} \end{proof}

\confidence*
\begin{proof}
(b) follows trivially from Corollary~\ref{seqdml-all}, and \cite{waudby2021time} Theorem 2.2.

To prove (a), define $Y_i = \sigma^{-1}J_0^{-1}\psi(W_i;\theta_0,\eta_0)$, then by Assumption~\ref{neyman-orthogonal}(a), has expectation 0, and variance $I_d$ by definition of $\sigma^2$. Thus, by strong approximation theorem in $\R^d$ (\cite{strassen1964invariance, einmahl1989extensions, strassen1967almost}), there exist iid Gaussian random variables $\{G_i\}_{i=1}^n$ such that
$$\dfrac 1n\sum_{i=1}^n Y_i = \dfrac{1}{n}\sum_{i=1}^n G_i + \varepsilon_n,$$ where $\varepsilon_n = o(\sqrt{\log\log n/n})$. The right hand side can also be replaced by a Weiner process. 
Thus, by Lemma~\ref{csgaussian} 
$$\forall n\ge 1,\ \left\|\dfrac 1n \sum_{i=1}^n Y_i\right\|\le \underbrace{\sqrt{\dfrac{2(n\rho^2+1)}{n^2\rho^2}\log\dfrac{(n\rho^2+1)^\frac d2}{\alpha}}}_{\bar{\mathcal{B}}_n} + \|\varepsilon_n\| \ \ \text{ with probability }\ge 1-\alpha. $$
The proof is finally completed noting that $\bar{\mathcal{B}}_n$ is $O\left(\sqrt{\dfrac{\log n}n}\right)$ and the error in approximation terms in Corollary~\ref{seqdml-all} as well as $\|\varepsilon_n\|$ are all $o\left(\sqrt{\dfrac{\log n}n}\right)$, suppressing the dependence on $d$ for the approximation errors. \end{proof}

\section{Proof of Theorem~\ref{partial-theorem}}
\label{partialproof}

\partialthm*
\begin{proof}
    
Let $\mathcal{T} := \{\eta =(g_1,e,\nu): \eta\text{ is measurable and }1\le \nu(x)\le \Gamma, \varepsilon\le e(x)\le 1-\varepsilon\}$, such that $\eta_1\in\mathcal{T}$. Following \cite{yadlowsky2018bounds}, the proof continues in two parts, via the verification of Assumptions~\ref{neyman-orthogonal} and~\ref{regularity}. We closely follow the proof of \cite{yadlowsky2018bounds} throughout.

\textit{Verification of Assumption~\ref{neyman-orthogonal}}
(a) $\E[\psi(W; \mu_1^-,\eta_1) = 0$ holds by construction and the identification condition. (b) is evident, with $\psi^a \equiv -1$. (c) \cite{yadlowsky2018bounds} proves the twice Gauteax differentiability on $\mathcal{T}$ in Section E.2.1. (d) Orthogonality follows from Section E.2.1 and Lemma 3 of \cite{yadlowsky2018bounds}. (e) The identification condition holds as $J_0 = -1$.

\textit{Verification of Assumption~\ref{regularity}}
(a) By assumption~\ref{partial-ass}(d) we can obtain a sequence $a_n\to 0$ such that for each $n$, 
$$\|\hat g(\cdot) - g_1(\cdot)\|_2 \le a_n\left(\dfrac{\log\log n}{n}\right)^\frac 14, \|\hat \hat e(\cdot) - e_1(\cdot)\|_2 \le a_n\left(\dfrac{\log\log n}{n}\right)^\frac 14, \|\hat \nu(\cdot) - \nu_1(\cdot)\|_2 \le a_n\left(\dfrac{\log\log n}{n}\right)^\frac 14$$ for all but finitely many $n$. In fact, $a_n$'s can be chosen so that the above holds when estimated from a subset of $n-T$ samples (instead of from $n$ samples) where $T/n\to \frac 1K$ a.s. Also, assumption~\ref{partial-ass}(c) implies $\exists C_1$ such that $\|\hat\eta_1-\eta_1\|_{\infty}\le C_1$ for all but finitely many $n$. 

Let $$\mathcal{T}_n := \left\{\eta = (g_1,e,\nu): \|\eta -\eta_1\|_2\le a_n\left(\dfrac{\log\log n}{n}\right)^\frac 14, \|\eta -\eta_1\|_\infty \le C_1, \varepsilon\le e\le 1-\varepsilon, 1\le \nu\le \Gamma\right\}.$$
By the strong consistency of $\hat\eta_1$ in Assumption~\ref{partial-ass}(c), we thus obtain $\hat\eta\in \mathcal{T}_n$ for all but finitely many $n$, thus verifying Assumption~\ref{regularity}(a). 

For (b), $m_n' = 1$ trivially, and the bounds on $m_n$ have been proved in \cite{yadlowsky2018bounds} Section E.2.2.

To verify Assumption~\ref{regularity}(c), first note that $\psi^a \equiv -1\implies r_n = 0$. Also, \cite{yadlowsky2018bounds} Section E.2.2 shows that 
\begin{align*}
    \sup_{\eta\in\mathcal{T}_n}\E[(\psi(W;\mu_1^-, \eta_1) - \psi(W;\mu^-, \eta))^2]^\frac 12 &\le \sup_{\eta\in \mathcal{T}_n} \|g(\cdot) - g_1(\cdot)|_2 + \dfrac{\Gamma(1-\varepsilon)}{\varepsilon}\sup_{\eta\in \mathcal{T}_n}\|\nu - \nu_1\|_2 
     + \dfrac{2\Gamma^2(1-\varepsilon)}{\varepsilon}\times\\ &\hspace{2em}(\|g(\cdot)\|_2 + \E[Y(1)^2])\left(\|e(\cdot) - e_1(\cdot)\|_2 + \|\nu(\cdot) - \nu_1(\cdot)\|_2\right),
\end{align*}
and thus $r_n'\le \tilde C_2 a_n$ for large enough $\tilde C_2$. Finally, to prove the double derivative constraint, following the proof of Lemma E.1 in \cite{yadlowsky2018bounds}, Section E.2.3, for $r\in (0,1)$ one can obtain,
$$\left|\dfrac{d}{dr^2}\E(W;\mu_1^-, \eta_1 + r(\eta-\eta_1))\right|\le C(\|e(\cdot) - e_1(\cdot)\|^2_2 + \|g(\cdot) - g_1(\cdot)|^2_2 + \|\nu(\cdot) - \nu_1(\cdot)\|^2_2),$$ and thus by construction of $\mathcal{T}_n$, $\|\eta-\eta_1\|\le a_n\left(\dfrac{\log\log n}{n}\right)^\frac 14$ we obtain $\lambda_n'\le a_n\sqrt{\dfrac{\log\log n}{n}}$. 
Finally for Assumption~\ref{regularity}(d), it can be seen easily that 
$\E[\psi(W;\mu_1^-, \eta_1)^2|A=1,X=x] \ge \V(Y(1)|A=1, X=x)$ and taking expectations over $X$ gives us the desired condition.

Thus by Theorem~\ref{csdml}, we have our required result.
\end{proof}

\section{Proof of Theorem~\ref{iv-theorem}}
\label{lateproof}

\iv*
\begin{proof}
    
Let $\mathcal{T}:= \{\eta= (g_t,g_c,m_t,m_c,e): \eta \text{ is measurable and }\varepsilon \le e(X), m_a(X)\le 1-\varepsilon, \|g_a\|_q = O(1)\ \forall a=0,1\}$ such that $\eta_0\in\mathcal{T}$. We closely follow the steps in \cite{chernozhukov2018double} and \cite{ma2023identification}. The proof follows by verification of Assumption~\ref{neyman-orthogonal} and~\ref{regularity}, and thus applying Theorem~\ref{csdml}.

\textit{Verification of Assumption~\ref{neyman-orthogonal}}: (a) $\E[\psi(W; \theta,\eta)] = 0$ follows by construction and the identification condition. (b) is evident from the form in \eqref{iv-functional}. (c) The regularity and boundedness conditions in Assumption~\ref{iv-ass} ensure twice Gateuax differentiability. (d) Neyman-orthogonality has been shown in \cite{ma2023identification} Appendix B.2 Step 1. (e) The identification criteria follows by the identification conditions in Assumption~\ref{iv-ass}.

\textit{Verification of Assumption~\ref{regularity}}:
By Assumption~\ref{iv-ass}(d), we can find a sequence $a_n$ such that $\|\hat\eta - \eta_0\|_2\le a_n$ and $\|\hat e- e\|_2\left(\sum_{a\in \{t,c\}} \|\hat g_a(\cdot) - g_{a,0}(\cdot)\|_2 + \sum_{a\in \{t,c\}} \|\hat m_a(\cdot) - m_{a,0}(\cdot)\|_2\right) \le a_n\sqrt{\dfrac{\log\log n}{n}} $ for all but finitely many $n$. In fact, $a_n$'s can be chosen so that the above holds when estimated from a subset of $n-T$ samples (as opposed to $n$) where $T/n\to 1/K$ a.s. Also, by the boundedness conditions on Theorem~\ref{iv-ass}, $\|\hat\eta_1-\eta\|_\infty\le C$ for some $C<\infty$.
Let 
$$\begin{aligned} \T_n&:= \left\{\vphantom{\sum_{a=t,c}}\eta = (g_t,g_c,m_t,m_c, e): \|\eta-\eta_0\|_2\le a_n, \|\eta -\eta_0\|_\infty \le C_1, \varepsilon\le e\le 1-\varepsilon, \|\eta-\eta_{0}\|_q\le C, \right.\\ &\left.\hspace{3em}\|e-e_0\|_2\left(\sum_{a=t,c} \|g_a-g_{a,0}\|+\|m_a-m_{a,0}\|\right)\le a_n\sqrt{\dfrac{\log\log n}{n}}\right\}. 
\end{aligned}$$ Then $\eta_0\in\mathcal{T}_n$ and $\hat\eta \in\mathcal{T}_n$ for all but finitely many $n$. This verifies Assumption~\ref{regularity}(a).

Assumption~\ref{regularity}(b) has been verified in Step 4 of Appendix B.2 of \cite{ma2023identification}. 
To verify Assumption~\ref{regularity}(c), $r_n\le a_n$ follows by noting that $\sup_{\eta\in\mathcal{T}_n} \|m_t-m_{t,0}\|_1, \sup_{\eta\in\mathcal{T}_n} \|m_c-m_{c,0}\|_1$ and $\sup_{\eta\in\mathcal{T}_n} \|e-e_0\|_1$ are all $O(a_n)$. Also, \cite{ma2023identification} in Step 4 of Appendix B.2 show that 
\begin{align*}
    \sup_{\eta\in\mathcal{T}_n}(\E[\|\psi(W;\theta_0,\eta) - \psi(W;\theta_0,\eta_0)\|^2])^\frac 12 &\le \sum_{a\in \{t,c\}} \|g_a(X)-g_{a,0}(X)\|_2 + 
     |\theta_0|\sum_{a\in \{t,c\}} \|m_a(X) - m_{a,0}(X)\|_2 \\
     &+ \dfrac 1{\varepsilon^2}\left(\sum_{a\in {t,c}} \|g_a(X)-g_{a,0}(X)\|_2 + C_1\|e(X) - e_0(X)\|_2\right) \\
     &+ \dfrac {|\theta_0|}{\varepsilon^2}\left(\sum_{a\in {t,c}} \|m_a(X)-m_{a,0}(X)\|_2 + C_2\|e(X) - e_0(X)\|_2\right),
\end{align*}
for some constants $C_1$ and $C_2$. Thus by the strong consistency assumptions, $r_n\le a_n$. For the final step of verification of Assumption~\ref{regularity}(c), it can be seen by twice Gateuax differentiation of the functional and taking the respective expectation that 
\begin{align*}
    \sup_{r\in (0,1);\eta\in\mathcal{T}_n}\left\|\dfrac{d}{dr^2}\E(W;\theta_0, \eta_0 + r(\eta-\eta_0))\right\|&\le C'\left(\dfrac{1}{1-\varepsilon}(\|g_t - g_{t,0}\|_2+|\theta_0|\|m_t-m_{t,0}\|_2)\|e-e_0\|_2\right. \\ &\hspace{3em}  + \left.\dfrac{1}{\varepsilon}(\|g_c-g_{c,0}\|_2+|\theta_0|\|m_c-m_{c,0}\|_2)\|e-e_0\|_2\right), 
\end{align*} and hence $\lambda_n' \le a_n\sqrt{\dfrac{\log\log n}{n}}$.
Finally Assumption~\ref{regularity}(d) follows from Step 2 in Appendix B.2 of \cite{ma2023identification}. 
\end{proof}

\end{document}